\documentclass[entropy,article,accept,moreauthors,dvipdfm,12pt,a4paper]{mdpi}
\usepackage{makeidx}
\usepackage{graphicx}
\usepackage{amssymb,amsmath,amsthm}
\usepackage{hyphenat}
\hyphenpenalty=10000 \exhyphenpenalty=10000 \sloppy
\usepackage[sort&compress]{natbib}
\usepackage{hypernat}

\input epsf
\newcommand{\D}{{\mathrm{d}}}
\newtheorem{theorem}{Theorem}
\newtheorem{lemma}{Lemma}
\newtheorem{proposition}{Proposition}
\newtheorem{definition}{Definition}
\newtheorem{corollary}{Corollary}

\setcounter{page}{1145} \lastpage{1193} \doinum{10.3390/e12051145}
\pubvolume{12} \pubyear{2010} \history{Received: 1 March 2010; in revised form: 30 April 2010 / Accepted: 4 May 2010 /
\linebreak Published: 7 May 2010 / Corrected Postprint 9 November 2013}



\Title{Entropy: The Markov Ordering Approach}

\Author{Alexander N. Gorban $^{1,\star}$, Pavel A. Gorban $^{2}$ and
George Judge $^{3}$}

\address{$^{1}$Department of Mathematics, University of Leicester,
Leicester, UK\\ $^{2}$Institute of Space and Information
Technologies, Siberian Federal University, Krasnoyarsk, Russia\\
$^{3}$Department of Resource Economics, University of California,
Berkeley, CA, USA}

\corres{E-mail: ag153@le.ac.uk.}

\abstract{The focus of this article is on entropy and Markov
processes. We study the properties of functionals which are
invariant with respect to monotonic transformations and analyze two
invariant ``additivity'' properties: (i) existence of a monotonic
transformation which makes the functional additive with respect to
the joining of independent systems and (ii) existence of a monotonic
transformation which makes the functional additive with respect to
the partitioning of the space of states. All Lyapunov functionals
for Markov chains which have properties (i) and (ii) are derived. We
describe the most general ordering of the distribution space, with
respect to which all continuous-time Markov processes are monotonic
(the {\em Markov order}). The solution differs significantly from
the ordering given by the inequality of entropy growth. For
inference, this approach results in a convex compact set of
conditionally ``most random'' distributions.}

\keyword{Markov process; Lyapunov function; entropy functionals;
attainable region; MaxEnt; inference}



\begin{document}

\section{Introduction}
\subsection{A Bit of History: Classical Entropy}

Two functions, energy and entropy, rule the Universe.

In 1865 R. Clausius formulated two main laws \cite{Clausius1865}:
\begin{enumerate}
\item{The energy of the Universe is constant.}
\item{The entropy of the Universe tends to a maximum.}
\end{enumerate}

The universe is isolated. For non-isolated systems energy and
entropy can enter and leave, the change in energy is equal to its
income minus its outcome, and the change in entropy is equal to
entropy production inside the system plus its income minus outcome.
The entropy production is always positive.

Entropy was born as a daughter of energy. If a body gets heat $
\Delta Q $ at the temperature $T$ then for this body $\D S= \Delta
Q /T$. The total entropy is the sum of entropies of all bodies.
Heat goes from hot to cold bodies, and the total change of entropy
is always positive.

Ten years later J.W. Gibbs \cite{Gibbs1875} developed a general
theory of equilibrium of complex media based on the entropy maximum:
the equilibrium is the point of the conditional entropy maximum
under given values of conserved quantities. The entropy maximum
principle was applied to many physical and chemical problems. At the
same time J.W. Gibbs mentioned that entropy maximizers under a given
energy are energy minimizers under a given entropy.

The classical expression $\int p\ln p$ became famous in 1872 when
L. Boltzmann proved his \linebreak $H$-theorem \cite{Boltzmann1872}: the
function $$H=\int f(x,v) \ln f(x,v) \D x \D v$$ decreases in time
for isolated gas which satisfies the Boltzmann equation (here
$f(x,v)$ is the distribution density of particles in phase space,
$x$ is the position of a particle, $v$ is velocity). The
statistical entropy was born: $S=-kH$. This was the one-particle
entropy of a many-particle system (gas).

In 1902, J.W. Gibbs  published a book ``Elementary principles in
statistical dynamics'' \cite{Gibbs1902}. He considered ensembles in
the many-particle phase space with probability density
$\rho(p_1,q_1,\ldots p_n,q_n)$, where $p_i,q_i$ are the momentum and
coordinate of the $i$th particle. For this distribution,
\begin{equation}\label{GibbsEnt}
S=-k \int \rho(p_1,q_1,\ldots p_n,q_n) \ln
(\rho(p_1,q_1,\ldots p_n,q_n))
 \D q_1\ldots \D q_n \D p_1\ldots \D p_n
\end{equation}
Gibbs introduced the canonical distribution that provides the
entropy maximum for a given expectation of energy and gave rise to
the entropy maximum principle (MaxEnt).

The Boltzmann period of history was carefully studied
\cite{Villani}. The difference between the Boltzmann entropy which
is defined for coarse-grained distribution and increases in time
due to gas dynamics, and the Gibbs entropy, which is constant due
to dynamics, was analyzed by many authors
\cite{Jaynes1965,GoldsteinLebov2004}. Recently, the idea of two
functions, energy and entropy which rule the Universe was
implemented as a basis of two-generator formalism of
nonequilibrium thermodynamics \cite{Grmela1997,Ottinger2005}.

In information theory, R.V.L. Hartley (1928) \cite{Hartley1928}
introduced a logarithmic measure of information in electronic
communication in order ``to eliminate the psychological factors
involved and to establish a measure of information in terms of
purely physical quantities''. He defined information in a text of
length $n$ in alphabet of s symbols as $H=n \log s$.

In 1948, C.E. Shannon \cite{Shannon1948} generalized the Hartley
approach and developed ``a mathematical theory of communication'',
that is information theory. He measured information, choice and
uncertainty by \linebreak the entropy:
\begin{equation}\label{Shannon}
S=-\sum_{i=1}^n p_i \log p_i
\end{equation}

Here, $p_i$ are the probabilities of a full set of $n$ events
($\sum_{i=1}^n p_i=1$). The quantity $S$ is used to measure of how
much ``choice'' is involved in the selection of the event or of
how uncertain we are of the outcome. Shannon mentioned that this
quantity form will be recognized as that of entropy, as defined in
certain formulations of statistical mechanics. The classical
entropy (\ref{GibbsEnt}), (\ref{Shannon}) was called the
Boltzmann--Gibbs--Shannon entropy (BGS entropy). (In 1948, Shannon
used the {\it concave} function (\ref{Shannon}), but under the
same notation $H$ as for the Boltzmann {\it convex} function. Here
we use $H$ for the convex $H$-function, and $S$ for the concave
entropy.)

In 1951, S. Kullback and R.A. Leibler \cite{KullLei1951}
supplemented the BGS entropy by the relative BGS entropy, or the
Kullback--Leibler divergence between the current distribution $P$
and some ``base'' (or ``reference'') distribution $Q$:

\begin{equation}D_{\mathrm{KL}}(P\|Q) = \sum_i p_i \log \frac{p_i}{q_i}
\end{equation}

The Kullback--Leibler divergence is always non-negative
$D_{\mathrm{KL}}(P\|Q) \geq 0$ (the Gibbs inequality). It is not
widely known that this ``distance'' has a very clear physical
interpretation. This function has been well known in physical
thermodynamics since 19th century under different name. If $Q$ is an
equilibrium distribution at the same temperature as $P$ has, then
\begin{equation}\label{KLF}
D_{\mathrm{KL}}(P\|Q)=\frac{F(P)-F(Q)}{kT}
\end{equation}
where $F$ is free energy and $T$ is thermodynamic temperature. In
physics, $F=U-TS$, where physical entropy $S$ includes an
additional multiplier $k$, the Boltzmann constant. The
thermodynamic potential $-F/T$ has its own name, Massieu function.
Let us demonstrate this interpretation of the Kullback--Leibler
divergence. The equilibrium distribution $Q$ provides the
conditional entropy (\ref{Shannon}) maximum under a given
expectation of energy $\sum_i u_i q_i =U$ and the normalization
condition $\sum_i q_i =1$. With the Lagrange multipliers $\mu_U$
and $\mu_0$ we get the equilibrium Boltzmann distribution:

\begin{equation}\label{boltzmannDistribution}
q_i=\exp (-\mu_0 - \mu_U u_i)=\frac{\exp(-\mu_U u_i)}{\sum_i
\exp(-\mu_U u_i)}
\end{equation}
 The Lagrange multiplier $\mu_U$ is in physics (by definition) $1/kT$, so
$S(Q) = \mu_0+\frac{U}{kT}$, hence, $\mu_0=-\frac{F(Q)}{kT}$. For
the Kullback--Leibler divergence this formula gives  (\ref{KLF}).

After the classical work of Zeldovich (1938, reprinted in 1996
\cite{Zeld}), the expression for free energy in the
``Kullback--Leibler form" $$F=kT\sum_i
c_i\left(\ln\left(\frac{c_i}{c_i^*(T)}\right) -1\right)$$ where
$c_i$ is concentration and $c^*_i(T)$ is the equilibrium
concentration of the $i$th component, is recognized as a useful
instrument for the analysis of kinetic equations (especially in
chemical kinetics \cite{YBGE,Hangos2009}).

Each given positive distribution $Q$ could be represented as an
equilibrium Boltzmann distribution for given $T>0$ if we take
$u_i=-kT\log q_i + u_0$ for an arbitrary constant level $u_0$. If
we change the order of arguments in the Kullback--Leibler
divergence then we get the relative Burg entropy
\cite{Burg1967,Burg1972}. It has a much more exotic physical
interpretation: for a current distribution $P$ we can define the
``auxiliary energy" functional $U_P$ for which $P$ is the
equilibrium distribution under a given temperature $T$. We can
calculate the auxiliary free energy of any distribution $Q$ and
this auxiliary energy functional: $F_P(Q)$. (Up to an additive
constant, for $P=P^*$ this $F_P(Q)$ turns into the classical free
energy, $F_P^*(Q)=F(Q)$.) In particular, we can calculate the
auxiliary free energy of the physical equilibrium, $F_P(P^*)$. The
relative Burg entropy is
$$D_{\mathrm{KL}}(P^*\|P)=\frac{F_P(P^*)-F_P(P)}{kT}$$ This
functional should also decrease in any Markov process with given
equilibrium $P^*$.

Information theory developed by Shannon and his successors focused
on entropy as a measure of uncertainty of subjective choice.  This
understanding of entropy was returned from information theory to
statistical mechanics by E.T. Jaynes as a basis of ``subjective"
statistical mechanics \cite{Jaynes1957a,Jaynes1957b}. He followed
Wigner's idea ``entropy is an antropocentric concept". The entropy
maximum approach was declared as a minimization of the subjective
uncertainty. This approach gave rise to a MaxEnt ``anarchism". It
is based on a methodological hypothesis that everything unknown
could be estimated by the principle of the entropy maximum under
the condition of fixed known quantities. At this point the
classicism in entropy development changed to a sort of scientific
modernism. The art of model fitting based on entropy maximization
was developed \cite{Harre2001}. The principle of the entropy
maximum was applied to plenty of problems: from many physical
problems \cite{Beck2009}, chemical kinetics and process
engineering \cite{Hangos2009} to econometrics
\cite{MittJudge2000,Judge2002} and psychology \cite{Myers1992}.
Many new entropies were invented and now one has rich choice of
entropies for fitting needs \cite{EstMor1995}. The most celebrated
of them are the R\'enyi entropy \cite{Renyi1961}, the Burg entropy
\cite{Burg1967,Burg1972}, the Tsallis entropy \cite{Tsa1988, Abe}
and the Cressie--Read family \cite{CR1984,ReadCreass1988}. The
nonlinear generalized averaging operations and generalized entropy
maximization procedures were suggested \cite{Bagci2009}.

Following this impressive stream of works we understand the MaxEnt
approach as conditional maximization of entropy for the evaluation
of the probability distribution when our information is partial
and incomplete. The entropy function may be the classical BGS
entropy or any function from the rich family of non-classical
entropies. This rich choice causes a new problem: which entropy is
better for a given class of applications?

The MaxEnt ``anarchism" was criticized many times as a ``senseless
fitting". Arguments pro and contra the MaxEnt approach with
non-classical entropies (mostly the Tsallis entropy
\cite{Tsa1988}) were collected by Cho \cite{Cho2002}. This
sometimes ``messy and confusing situation regarding
entropy-related studies has provided opportunities for us: clearly
there are still many very interesting studies to pursue"
\cite{Lin1999}.

\subsection{Key Points}

In this paper we do not pretend to invent new entropies. (There
appear new functions as limiting cases of the known entropy
families, but this is not our main goal). Entropy is understood in
this paper as a measure of uncertainty which increases in Markov
processes. In our paper we consider a Markov process as a
semigroup on the space of positive probability distributions. The
state space is finite. Generalizations to compact state spaces are
simple. We analyze existent relative entropies (divergences) using
several simple ideas:
\begin{enumerate}
\item{In Markov processes
probability distributions $P(t)$ monotonically approach
equilibrium $P^*$: divergence $D(P(t)\|P^*)$ monotonically
decrease in time. }
\item{In most applications, conditional minimizers and maximizers
of entropies and divergences are used, but the values are not.
This means that the system of level sets is more important than
the functions' values. Hence, most of the important properties are
invariant with respect to monotonic transformations of entropy
scale.}
\item{The system of level sets should be the same as for additive
functions: after some rescaling the divergences of interest should
be additive with respect to the joining of statistically independent
systems. }
\item{The system of level sets should after
some rescaling the divergences of interest should have the form of
a sum (or integral) over states  $\sum_i f(p_i,p_i^*)$, where the
function $f$ is the same for all states. In information theory,
divergences of such form are called {\it separable}, in physics
the term {\it trace--form functions} is used}
\end{enumerate}

The first requirement means that if a distribution becomes  more
random then it becomes closer to equilibrium (Markov process
decreases the information excess over equilibrium). For example,
classical entropy increases in all Markov processes with uniform
equilibrium distributions. This is why we may say that the
distribution with higher entropy is more random, and why we use
entropy conditional maximization for the evaluation of the
probability distribution when our information is partial \linebreak
and incomplete.

It is worth to mention that some of the popular Bregman divergences,
for example, the squared Euclidean distance or the Itakura--Saito
divergence, do not satisfy the first requirement (see
Section~\ref{NoMore}).

The second idea is just a very general methodological thesis: to evaluate an instrument
(a divergence) we have to look how it works (produces conditional minimizers and
maximizers). The properties of the instrument which are not related to its work are not
important. The number three allows to separate variables if the system consists of
independent subsystems, the number four relates to separation of variables for partitions
of the space of probability distributions.

Amongst a rich world of relative entropies and divergences, only
two families meet these requirements. Both were proposed in 1984.
The Cressie--Read (CR) family \cite{CR1984,ReadCreass1988}:
 $$H_{{\rm CR}\ \lambda}(P \| P^*)=\frac{1}{\lambda (\lambda+1)}
 \sum_i p_i\left[ \left(\frac{p_i}{p_i^*} \right)^{\lambda} -1
 \right]\ , \ \ \lambda\in ]-\infty,\infty[$$
 and the convex combination of the Burg and Shannon relative
 entropies proposed in \cite{G11984} and further analyzed in \cite{ENTR1,ENTR2}:
$$H(P \| P^*)= \sum_i (\beta p_i- (1-\beta)p_i^* )\log \left(
\frac{p_i}{p_i^*}\right)\ , \ \ \beta \in [0,1] $$
 When $\lambda  \to 0$ the CR divergence tends to the KL divergence (the relative
 Shannon entropy) and when $\lambda  \to -1$ it turns into the
 Burg relative entropy. The Tsallis entropy was introduced four
 years later \cite{Tsa1988} and became very popular in
 thermostatistics (there are thousands of works that use or analyze this entropy
 \cite{TsallisBiblio2009}). The Tsallis entropy coincides (up to a
 constant multiplier $\lambda+1$) with the CR entropy for $\lambda
 \in ]-1,\infty[$ and there is no need to study it separately (see
 discussion in Section~\ref{LaevelSets}).

A new problem arose: which entropy is better for a specific
problem? Many authors compare performance of different entropies
and metrics  for various problems (see, for example,
\cite{Cachin1997,Davis2007}). In any case study it may be possible
to choose ``the best" entropy but in general we have no sufficient
reasons for such a choice. We propose a possible way to avoid the
choice of the best entropy.

Let us return to the idea: the distribution $Q$ is more random than
$P$ if there exists a continuous-time Markov process (with given
equilibrium distribution $P^*$) that transforms $P$ into $Q$. We say
in this case that $P$ and $Q$ are connected by the {\em Markov
preorder} with equilibrium $P^*$ and use notation $P \succ^0_{P^*}
Q$. The {\em Markov order} $\succ_{P^*}$ is the transitive closure
of the Markov preorder.

If a priori information gives us a set of possible distributions
$W$ then the conditionally ``maximally random distributions" (the
``distributions without additional information", the ``most
indefinite distributions" in $W$) should be the minimal elements
in $W$ with respect to Markov order. If a Markov process (with
equilibrium $P^*$) starts at such a minimal element $P$ then it
cannot produce any other distribution from $W$ because all
distributions which are more random that $P$ are situated outside
$W$. In this approach, the maximally random distributions under
given a priori information may be not unique. Such distributions
form a set which plays the same role as the standard MaxEnt
distribution. For the moment based a priori information the set
$W$ is an intersection of a linear manifold with the simplex of
probability distributions, the set of minimal elements in $W$ is
also polyhedron and its description is available in explicit form.
In low-dimensional case it is much simpler to construct this
polyhedron than to find the MaxEnt distributions for most of
specific entropies.

\subsection{Structure of the Paper}

The paper is organized as follows. In Section~\ref{NonClaEnt} we
describe the known non-classical divergences (relative entropies)
which are the Lyapunov functions for the Markov processes. We
discuss the general construction and the most popular families of
such functions. We pay special attention to the situations, when
different divergences define the same order on distributions and
provide the same solutions of the MaxEnt problems
(Section~\ref{LaevelSets}). In two short technical Sections
\ref{norma} and \ref{symm} we present normalization and
symmetrization of divergences (similar discussion of these
operations was published very recently \cite{Petz2010}.

The divergence between the current distribution and equilibrium
should decrease due to Markov processes. Moreover, divergence
between any two distributions should also decrease (the generalized
data processing Lemma,  Section~\ref{PRoductContract}).

Definition of entropy by its properties is discussed in
Section~\ref{DefPropert}. Various approaches to this definition
were developed for the BGS entropy by Shannon \cite{Shannon1948},
\cite{Renyi1970} and by other authors for the R\'enyi entropy
\cite{Aczel,AczDar}, the Tsallis entropy \cite{Abik4}, the CR
entropy and the convex combination of the BGS and Burg entropies
\cite{ENTR3}. Csisz\'ar \cite{Csiszar1978} axiomatically
characterized the class of Csisz\'ar--Morimoto divergences
(formula (\ref{Morimoto}) below). Another characterization of this
class was proved in \cite{ENTR3} (see Lemma 1,
Section~\ref{NoMore} below).

From the celebrated properties of entropy \cite{Wehrl} we selected
the following three:
\begin{enumerate}
\item{Entropy should be a Lyapunov function for continuous-time Markov
processes;}
\item{Entropy is additive with respect to the joining of independent
systems;}
\item{Entropy is additive with respect to the partitioning
of the space of states (\emph{i.e.}, has the {\em trace--form}).}
\end{enumerate}

To solve the MaxEnt problem we have to find the maximizers of
entropy (minimizers of the relative entropy) under given conditions.
For this purpose, we have to know the sublevel sets of entropy, but
not its values. We consider entropies with the same system of
sublevel sets as equivalent ones (Section~\ref{LaevelSets}). From
this point of view, all important properties have to be invariant
with respect to monotonic transformations of the entropy scale. Two
last properties from the list have to be substituted by the
following:
\begin{enumerate}
\item[2'.]{There exists a monotonic transformation which makes entropy additive with
respect to the joining of independent systems
(Section~\ref{additivity});}
\item[3'.]{There exists a monotonic transformation which makes entropy additive with
respect to the partitioning of the space of states
(Section~\ref{Partition}).}
\end{enumerate}
Several ``No More Entropies" Theorems are proven in Section~\ref{NoMore}: if an entropy
has properties 1, 2' and 3' then it belongs to one of the following one-parametric
families: to the Cressie--Read family, or to a convex combination of the classical BGS
entropy and the Burg entropy (may be, after a monotonic transformation of scale).

It seems very natural to consider divergences as orders on
distribution spaces (Section~\ref{FunctionOrOrder}), the sublevel
sets are the lower cones of this orders. For several functions,
$H_1(P), \ldots, H_k(P)$ the sets $\{Q\ | \ H_i(P)>H_i(Q) \ {\rm for
\ all } \ i\}$ give the simple generalization of the sublevel sets.
In Section~\ref{MarOd} we discuss the more general orders in which
continuous time Markov processes are monotone, define the Markov
order and fully characterize the local Markov order. The Markov
chains with detailed balance define the Markov order for general
Markov chains (Section~\ref{order}). It is surprising that there is
no necessity to consider other Markov chains for the order
characterization (Section~\ref{order}) because no reversibility is
assumed in this analysis.

In Section~\ref{OrderMaxEnt} we demonstrate how is it possible to
use the Markov order to reduce the uncertainty in the standard
settings when a priori information is given about values of some
moments. Approaches to construction of the most random
distributions are presented in Section~\ref{CondExtrOrder}.

Various approaches for the definition of the reference
distribution (or the generalized canonical distribution) are
compared in Section~\ref{GCD}.

In Conclusion we briefly discuss the main results.

\section{Non-Classical Entropies \label{NonClaEnt}}

\subsection{The Most Popular Divergences}
\subsubsection{Csisz\'ar--Morimoto Functions $H_h$}

During the time of modernism plenty of new entropies were
proposed. Esteban and Morales \cite{EstMor1995} attempted to
systemize many of them in an impressive table. Nevertheless, there
are relatively few entropies in use now. Most of the relative
entropies have the form proposed independently in 1963 by \linebreak I.
Csiszar \cite{Csiszar1963} and T. Morimoto \cite{Morimoto1963}:
\begin{equation}\label{Morimoto}
H_h(p)=H_h(P \| P^*)=\sum_i p^*_i h\left(\frac{p_i}{p_i^*}\right)
\end{equation}
where $h(x)$ is a convex function defined on the open ($x>0$) or
closed $x\geq 0$ semi-axis. We use here notation $H_h(P \| P^*)$
to stress the dependence of $H_h$ both on $p_i$ and $p^*_i$.

These relative entropies are the Lyapunov functions for all Markov
chains with equilibrium \linebreak $P^*=(p^*_i)$. Moreover, they
have the relative entropy contraction property given by the
generalized data processing lemma (Section~\ref{LyapDiver} below).

For $h(x)=x\log x$ this function coincides with the
Kullback--Leibler divergence from the current distribution $p_i$
to the equilibrium $p^*_i$. Some practically important functions
$h$ have singularities at 0. For example, if we take $h(x)=-\log
x$, then the correspondent $H_h$ is the relative Burg entropy
\linebreak $H_h=-\sum_i p^*_i \log (p_i/p_i^*) \to \infty$ for $p_i \to 0$.

\subsubsection{Required Properties of the Function $h(x)$}

Formally, $h(x)$ is an extended real-valued proper convex function
on the closed positive real half-line $[0,\infty[$. An {\em extended
real-valued function} can take real values and infinite values $\pm
\infty$. A {\em proper function} has at least one finite value. An
extended real valued function on a convex set $U$ is called {\em
convex} if \linebreak its {\em epigraph} $${\rm epi} (h) = \{(x,y)\
| \ x>0, y\geq h(x)\}$$ is a convex set \cite{Rockafellar1970}. For
a proper function this definition is equivalent to the {\em Jensen
inequality}

$$h(a x+(1-a)y)\leq ah(x)+(1-a) h(y) \;\; \mbox{for all}\;\;x,y\in
U, \, a\in[0,1]$$

It is assumed that the function $h(x)$ takes finite values on the
open positive real half-line $]0,\infty[$ but  the value at point
$x=0$ may be infinite. For example, functions $h(x)=- \log x$ or
$h(x)=x^{-\alpha}$ ($\alpha>0$) are allowed. A convex function
$h(x)$ with finite values on the open positive real half-line is
continuous on $]0,\infty[$ but may have a discontinuity at $x=0$.
For example, the step function, $h(x)=0$ if $x= 0$ and $h(x)=-1$
if $x> 0$, may be used.

A convex function is differentiable almost everywhere. Derivative
of $h(x)$ is a monotonic function which has left and right limits
at each point $x>0$. An inequality holds:  $h'(x)(y-x)\leq
h(y)-h(x)$ (Jensen's inequality in the differential form). It is
valid also for left and right limits of $h'$ at any point $x >0$.

Not everywhere differentiable functions $h(x)$ are often used, for
example, $h(x)=|x-1|$. Nevertheless, it is convenient to consider
the twice differentiable on $]0,\infty[$ functions $h(x)$ and to
produce a non-smooth $h(x)$ (if necessary) as a limit of smooth
convex functions. We use widely this possibility.

\subsubsection{The Most Popular Divergences $H_h(P \| P^*)$}

\begin{enumerate}
\item{Let $h(x)$ be the step function, $h(x)=0$ if $x=0$ and $h(x)=-1$ if $x>
0$. In this case,
\begin{equation}\label{Hartley}
H_h(P \| P^*)=-\sum_{i, \ p_i>0} 1
\end{equation}
The quantity  $-H_h$ is the number of non-zero probabilities $p_i$
and does not depend on $P^*$. Sometimes it is called the Hartley
entropy.}
\item{$h=|x-1|$, $$H_h(P \| P^*)=\sum_i |p_i - p_i ^*| $$
 this is the $l_1$-distance between $P$ and $P^*$.}
\item{$h=x\ln x$,
\begin{equation}\label{Kullback--Leibler}
H_h(P \| P^*)=\sum_i p_i \ln \left(\frac{p_i}{p_i^*}
\right)=D_{\mathrm{KL}}(P\|P^*)
\end{equation}
 this is the
usual Kullback--Leibler divergence or the relative BGS entropy;}
\item{$h=-\ln x$,
\begin{equation}\label{Burg}
H_h(P \| P^*)=-\sum_i p^*_i \ln \left(\frac{p_i}{p_i^*} \right) =
D_{\mathrm{KL}}(P^*\|P)
\end{equation}
this is the relative Burg entropy. It is obvious that this is again
the Kullback--Leibler divergence, but for another order of
arguments. }
\item{Convex combinations of $h=x\ln x$ and $h=-\ln x$ also produces a remarkable family
of divergences: $h=\beta x\ln x- (1-\beta) \ln x$ ($\beta \in
[0,1]$),
\begin{equation}\label{ConvComb}
H_h(P \| P^*)=\beta D_{\mathrm{KL}}(P\|P^*) +
(1-\beta)D_{\mathrm{KL}}(P^*\|P)
\end{equation}
 The convex combination of divergences becomes a symmetric functional of $(P, P^*)$ for
$\beta=1/2$. There exists a special name for this case,
``Jeffreys' entropy". }
\item{$h=\frac{(x-1)^2}{2}$,
\begin{equation}
H_h(P \| P^*)=\frac{1}{2}\sum_i \frac{(p_i-p_i^*)^2}{p_i^*}
\end{equation}
 This is the quadratic term in the Taylor expansion of
the relative Botzmann--Gibbs-Shannon entropy,
$D_{\mathrm{KL}}(P\|P^*)$, near equilibrium. Sometimes, this
quadratic form is called the Fisher entropy.}
\item{$h=\frac{x(x^{\lambda}-1)}{\lambda (\lambda+1)}$,
\begin{equation}\label{Cressie--Read}
H_h(P \| P^*)=\frac{1}{\lambda (\lambda+1)}
 \sum_i p_i\left[ \left(\frac{p_i}{p_i^*} \right)^{\lambda} -1 \right]
\end{equation}
  This is the  CR family
 of power divergences \cite{CR1984,ReadCreass1988}.
  For this family we use notation $H_{\rm CR \ \lambda}$.
   If $\lambda \to 0$ then $H_{\rm CR \ \lambda} \to  D_{\mathrm{KL}}(P\|P^*)$, this is the classical
 BGS relative entropy; if $\lambda \to -1$ then $H_{\rm CR \ \lambda} \to  D_{\mathrm{KL}}(P^*\|P)$,
 this is the relative Burg entropy. }
\item{For the CR family in the limits $\lambda \to
\pm \infty$ only the maximal terms ``survive". Exactly as we get the
limit $l^{\infty}$ of $l^p$ norms for $p \to \infty$, we can use the
root $({\lambda (\lambda+1)}H_{\rm CR \ \lambda})^{1/|\lambda|}$ for
$\lambda \to \pm \infty$ and write in these limits the divergences:
\begin{equation}\label{CR+infty}
 H_{\rm CR \ \infty}(P \| P^*) =\max_i\left\{\frac{p_i}{p_i^*}\right\}-1
\end{equation}
\begin{equation}\label{CR-infty}
H_{\rm CR \ -\infty} (P \| P^*)=
\max_i\left\{\frac{p_i^*}{p_i}\right\}-1 \end{equation}
 The existence of two limiting divergences $H_{{\rm CR \
\pm\infty}}$ seems very natural: there may be two types of extremely
non-equilibrium states: with a high excess of current probability
$p_i$ above $p_i^*$ and, inversely, with an extremely small current
probability $p_i$ with respect to $p_i^*$. }
\item{The Tsallis relative entropy \cite{Tsa1988} corresponds
to the choice $h=\frac{(x^{\alpha}-x)}{\alpha - 1}$, $\alpha >0$.
\begin{equation}\label{Tsallis}
H_h(P \| P^*)=\frac{1}{\alpha-1}\sum_i p_i\left[
\left(\frac{p_i}{p_i^*} \right)^{\alpha-1} -1 \right]
\end{equation}
For this family we use notation $H_{\rm Ts \
\alpha}$.}
\end{enumerate}

\subsubsection{R\'enyi Entropy}

The R\'enyi entropy of order $\alpha > 0 $ is defined
\cite{Renyi1961} as
\begin{equation}
H_{{\rm R}\ \alpha}(P) = \frac{1}{1-\alpha}\log\left(\sum_{i=1}^n
p_i^\alpha\right)
\end{equation}

It is a concave function, and $$H_{{\rm R}\ \alpha}(P) \to S(P)$$
when $\alpha \to 1$, where $S(P)$ is the Shannon entropy.

When $\alpha \to \infty$, the R\'enyi entropy has a limit
$H_\infty (X) = - \log \max_{i=1,\ldots n} p_i$, which has a
special name ``Min-entropy".

It is easy to get the expression for a relative R\'enyi entropy
$H_{{\rm R}\ \alpha}(P \| P^*)$ from the requirement that it
should be a Lyapunov function for any Markov chain with
equilibrium $P^*$:

$$H_{{\rm R}\ \alpha}(P \|
P^*)=\frac{1}{\alpha-1}\log\left(\sum_{i=1}^n p_i
\left(\frac{p_i}{p_i^*}\right)^{\alpha-1}\right)$$

For the Min-entropy, the correspondent divergence (the relative
Min-entropy) is

$$H_\infty (P \| P^*) = \log \max_{i=1,\ldots n}
\left(\frac{p_i}{p_i^*}\right)$$
 It is obvious from (\ref{MAsterEq1}) below that $\max_{i=1,\ldots n}
({p_i}/{p_i^*})$ is a Lyapunov function for any Markov chain with
equilibrium $P^*$, hence, the relative Min-entropy is also the
Lyapunov functional.

\subsection{Entropy Level Sets \label{LaevelSets}}

A {\em level set} of a real-valued function $f$ is a set of the form : $$\{x \ | \ f(x)=c
\}$$ where $c$ is a constant (the ``level"). It is the set where the function takes on a
given constant value. A {\em sublevel set} of $f$ is a set of the form
$$\{x \ | \ f(x)\leq c \}$$ A {\em superlevel set } of $f$ is
given by the inequality with reverse sign: $$\{x \ | \ f(x)\geq c
\}$$ The intersection of the sublevel and the superlevel sets for
a given value $c$ is the level set for this level.

In many applications, we do not need the entropy values, but rather
the order of these values on the line. For any two distributions
$P,Q$ we have to compare which one is closer to equilibrium $P^*$,
\emph{i.e.}, to answer the question: which of the following
relations is true: $H(P\|P^*)>H(Q\|P^*)$, $H(P\|P^*)=H(Q\|P^*)$ or
$H(P\|P^*)<H(Q\|P^*)$? To solve the MaxEnt problem we have to find
the maximizers of entropy (or, in more general settings, the
minimizers of the relative entropy) under given conditions. For this
purpose, we have to know the sublevel sets, but not the values. All
the MaxEnt approach does not need the values of the entropy but the
sublevel sets are necessary.

Let us consider two functions, $\phi$ and $\psi$ on a set $U$. For
any $V \subset U$ we can study conditional minimization problems
$\phi (x) \to \min$ and $\psi (x) \to \min$, $x \in V$. The sets of
minimizers for these two problems coincide for any $V \subset U$ if
and only if the functions $\phi$ and $\psi$ have the same sets of
sublevel sets. It should be stressed that here just the sets of
sublevel sets have to coincide without any relation to values of
level.

Let us compare the level sets for the R\'enyi, the Cressie-Read
and the Tsallis families of divergences (for $\alpha-1=\lambda$
and for all values of $\alpha$). The values of these functions are
different, but the level sets are the same (outside the Burg
limit, where $\alpha \to 0$): for $\alpha \neq 0, 1$

\begin{equation}
H_{{\rm R}\ \alpha}(P \| P^*)=\frac{1}{\alpha-1}\ln c;  \ \
H_{\rm CR \ \alpha-1}(P \| P^*)= \frac{1}{\alpha(\alpha-1)}(c-1);
\ \
 H_{\rm Ts \ \alpha}(P \| P^*) =\frac{1}{\alpha-1}(c-1)
\end{equation}
 where $c=\sum_i p_i (p_i/p_i^*)^{\alpha-1}$.

Beyond points $\alpha=0,1$
 $$H_{\rm CR \ \alpha-1}(P \| P^*)=\frac{1}{\alpha(\alpha-1)} \exp
 ((\alpha-1)H_{{\rm R}\ \alpha}(P \| P^*))=\frac{1}{\alpha}H_{\rm Ts \ \alpha}(P \|
 P^*)$$

For $\alpha \to 1$ all these divergences turn into the Shannon
relative entropy. Hence, if $\alpha \neq 0$ then for any $P$,
$P^*$, $Q$, $Q^*$ the following equalities A, B, C are equivalent,
A$\Leftrightarrow$B$\Leftrightarrow$C:
\begin{equation}
\begin{split}
&{\rm A}.\;\; H_{{\rm R}\ \alpha}(P \| P^*)=H_{{\rm R}\ \alpha}(Q
\| Q^*) \\
 &{\rm B}.\;\; H_{{\rm  CR}\ \alpha+1}(P \| P^*)=H_{{\rm CR}\
\alpha+1}(Q \| Q^*) \\
 &{\rm C}.\;\;H_{{\rm Ts}\ \alpha}(P \|
P^*)=H_{{\rm Ts}\ \alpha}(Q \| Q^*)
\end{split}
\end{equation}

This equivalence means that we can select any of these three
divergences as a basic function and consider the others as
functions of this basic one.

For any $\alpha \geq 0$ and $\lambda=\alpha+1$ the R\'enyi, the
Cressie--Read and the Tsallis divergences have the same family of
sublevel sets. Hence, they give the same maximizers to the
conditional relative entropy minimization problems and there is no
difference which entropy to use.

The CR family has a more convenient normalization factor
$1/\lambda (\lambda+1)$ which gives a proper convexity for all
powers, both positive and negative, and provides a sensible Burg
limit for $\lambda \to -1$ (in contrary, when  $\alpha \to 0$ both
the R\'enyi and Tsallis entropies tend to 0).

When $\alpha < 0$ then for the Tsallis entropy function
$h=\frac{(x^{\alpha}-x)}{\alpha - 1}$ loses convexity, whereas for
the Cressie-Read family convexity persists for all values of
$\lambda$. The R\'enyi entropy also loses convexity for $\alpha <
0$. Neither the Tsallis, nor the R\'enyi entropy were invented for
use with negative $\alpha$.

There may be a reason: for $\alpha < 0$ the function $x^{\alpha}$ is
defined for $x>0$ only and has a singularity at $x=0$. If we assume
that the divergence should exist for all non-negative distributions,
then the cases $\alpha \leq 0$ should be excluded. Nevertheless, the
Burg entropy which is singular at zeros is practically important and
has various attractive properties. The Jeffreys' entropy (the
symmetrized Kullback--Leibler divergence) is also singular at zero,
but has many important properties. We can conclude at this point
that it is not obvious that we have to exclude any singularity at
zero probability. It may be useful to consider positive
probabilities instead (``nature abhors a vacuum").

Finally, for the MaxEnt approach (conditional minimization of the
relative entropy), the R\'enyi and the Tsallis families of
divergences ($\alpha > 0$) are particular cases of the Cressie--Read
family because they give the same minimizers. For $\alpha \leq 0$
the R\'enyi and the Tsallis relative entropies lose their convexity,
while the Cressie--Read family remains convex for $\lambda \leq -1$
too.

\subsection{Minima and normalization \label{norma}}

For a given $P^*$, the function $H_h$ achieves its minimum on the
hyperplane $\sum_i p_i =\sum_i p_i^*=$const at equilibrium
$p_i^*$, because at this point
 $${\rm grad} H_h =(h'(1),\ldots h'(1))=h'(1){\rm grad}\left(\sum_i p_i\right)$$
The transformation $h(x) \to h(x) + ax +b$ just shifts $H_h$ by
constant value: $H_h \to H_h + a\sum_i p_i + b=H_h+a+b$.
Therefore, we can always assume that the function $h(x)$ achieves
its minimal value at point $x=1$, and this value is zero. For this
purpose, one should just transform $h$:

\begin{equation}\label{normalization}
h(x):=h(x)-h(1)-h'(1)(x-1)
\end{equation}

This normalization transforms $x \ln x$ into $x \ln x - (x-1)$, $-
\ln x$ into $-\ln x + (x-1)$, and $x^{\alpha}$ into $x^{\alpha} -
1 - \alpha (x-1)$. After normalization $H_h(P \| P^*) \geq 0$. If
the normalized $h(x)$ is strictly positive outside point $x=1$
($h(x)> 0$ if $x \neq 1$) then $H_h(P \| P^*) = 0$ if and only if
$P = P^*$ (\emph{i.e.}, in equilibrium).

The normalized version of any divergence $H_h(P \| P^*)$ could be
produced by the normalization transformation
$h(x):=h(x)-h(1)-h'(1)(x-1)$ and does not need separate discussion.

\subsection{Symmetrization \label{symm}}

Another technical issue is symmetry of a divergence. If $h(x)=x\ln
x$ then both $H_h(P\| P^*)$ (the KL divergence) and $H_h(P^* \| P)$
(the relative Burg entropy) are the Lyapunov functions for the
Markov chains, and $H_h(P^* \| P)=H_g(P\| P^*)$ with $g(x)=-\ln x$.
Analogously, for any $h(x)$ we can write $H_h(P^*\| P)=H_g(P\| P^*)$
with
\begin{equation}\label{involution}
g(x)=xh\left(\frac{1}{x}\right)
\end{equation}
If $h(x)$ is convex on $\mathbf{R}_+$ then $g(x)$ is convex on
$\mathbf{R}_+$ too because
$$g''(x)=\frac{1}{x^3}h''\left(\frac{1}{x}\right) $$ The
transformation (\ref{involution}) is an involution:
$$xg\left(\frac{1}{x}\right)=h(x) $$

The fixed points of this involution are such functions  $h(x)$
that $H_h(P\| P^*)$ is symmetric with respect to transpositions of
$P$ and $P^*$. There are many such $h(x)$. An example of symmetric
$H_h(P\| P^*)$ gives the choice $h(x)=-\sqrt{x}$: $$H_h(P\|
P^*)=-\sum_i \sqrt{p_i p^*_i} $$ After normalization, we get
 $$h(x):=\frac{1}{2}(\sqrt{x}- 1)^2 \ ;
\; \; \; H_h(P\| P^*)=\frac{1}{2}\sum_i (\sqrt{p_i}-
\sqrt{p^*_i})^2 $$
 Essentially (up to a constant addition and multiplier) this
function coincides with a member of the CR family, $H_{{\rm CR } \
-\frac{1}{2}}$ (\ref{Cressie--Read}), and with one of the Tsallis
relative entropies  $H_{{\rm Ts } \ \frac{1}{2}}$ (\ref{Tsallis}).
The involution (\ref{involution}) is a linear operator, hence, for
any convex $h(x)$ we can produce its symmetrization:

$$h_{\rm sym}(x)=\frac{1}{2}(h(x)+g(x))=
\frac{1}{2}\left(h(x)+xh\left(\frac{1}{x}\right)\right) $$ For
example, if $h(x)=x \log x$ then $h_{\rm sym}(x)=\frac{1}{2}(x
\log x - \log x)$; if $h(x) = x^{\alpha}$ then $h_{\rm
sym}(x)=\frac{1}{2}(x^{\alpha} + x^{1-\alpha})$.

\section{Entropy Production and Relative Entropy Contraction \label{PRoductContract}}

\subsection{Lyapunov Functionals for Markov Chains}

Let us consider continuous time Markov chains with positive
equilibrium probabilities $p_j^*$. The dynamics of the
probability distribution $p_i$ satisfy the Master equation (the
Kolmogorov equation):
\begin{equation}\label{MAsterEq0}
\frac{\D p_i}{\D t}= \sum_{j, \, j\neq i} (q_{ij}p_j-q_{ji}p_i)
\end{equation}
 where coefficients $q_{ij}$ ($i\neq j$) are non-negative. For
 chains with a positive equilibrium distribution $p_j^*$ another
 equivalent form is convenient:
\begin{equation}\label{MAsterEq1}
\frac{\D p_i}{\D t}= \sum_{j, \, j\neq i}
q_{ij}p^*_j\left(\frac{p_j}{p_j^*}-\frac{p_i}{p_i^*}\right) \
\end{equation}
where $p_i^*$ and $q_{ij}$ are connected by identity
\begin{equation}\label{MasterEquilibrium}
\sum_{j, \, j\neq i} q_{ij}p^*_j = \left(\sum_{j, \, j\neq i}
q_{ji}\right)p^*_i
\end{equation}

The time derivative of the Csisz\'ar--Morimoto function $H_h(p)$
(\ref{Morimoto}) due to the Master equation is
\begin{equation}\label{ENtropyProd}
\frac{\D H_h(P \| P^*)}{\D t}=\sum_{i,j, \, j\neq i}
q_{ij}p^*_j\left[h\left(\frac{p_i}{p_i^*}\right)-h\left(\frac{p_j}{p_j^*}\right)
+ h'\left(\frac{p_i}{p_i^*}\right)\left(\frac{p_j}{p_j^*}-
\frac{p_i}{p_i^*}\right)\right] \leq 0
\end{equation}
To prove this formula, it is worth to mention that for any $n$
numbers $h_i$, $\sum_{i,j, \, j\neq i} q_{ij}p^*_j
(h_i-h_j)=0$. The last inequality holds because of the
convexity of $h(x)$:
 $h'(x)(y-x)\leq h(y)-h(x)$ \linebreak (Jensen's inequality).

Inversely, if
\begin{equation}\label{ENtropyProd2}
h(x)-h(y) + h'(y)(x- y) \leq 0
\end{equation}
for all positive $x,y$ then $h(x)$ is convex on $\mathbf{R}_+$.
Therefore, if for some function $h(x)$  $H_h(p)$ is the Lyapunov
function for all the Markov chains with equilibrium $P^*$ then
$h(x)$ is convex on $\mathbf{R}_+$.

The Lyapunov functionals $H_h$ do not depend on the kinetic
coefficients $q_{ij}$ directly. They depend on the equilibrium
distribution $p^*$ which satisfies the identity
(\ref{MasterEquilibrium}). This independence of the kinetic
coefficients is the {\em universality} property.

\subsection{``Lyapunov Divergences" for Discrete Time Markov
Chains \label{LyapDiver}}

The Csisz\'ar--Morimoto functions (\ref{Morimoto}) are also
Lyapunov functions for discrete time Markov chains. Moreover, they
can serve as a ``Lyapunov distances" \cite{Liese1987} between
distributions which decreases due to time evolution (and not only
the divergence between the current distribution and equilibrium).
In more detail, let $A=(a_{ij})$ be a stochastic matrix in
columns: $$a_{ij} \geq 0, \; \; \sum_i a_{ij}=1 \; {\rm for \;
all} \; j $$
 The {\em ergodicity contraction coefficient} for $A$ is a number $\overline{\alpha}(A)$
 \cite{DobrushinErgCoeff1956,Seneta1981}:
 $$\overline{\alpha}(A)=\frac{1}{2}\max_{i,k} \left\{\sum_j |a_{ij}-a_{kj}| \right\} $$
 $0 \leq \overline{\alpha}(A) \leq 1$.

Let us consider in this subsection the normalized
Csisz\'ar--Morimoto divergences $H_h(P \| Q)$
(\ref{normalization}): $h(1)=0, \, h(x)\geq 0$.

\vspace{2mm}\noindent{\bf Theorem about relative entropy
contraction.} {\it (The generalized data processing Lemma.) For each
two probability positive distributions $P,Q$ the divergence $H_h(P
\| Q)$ decreases under action of stochastic matrix $A$
\cite{Cohen1993,CohenIwasa1993}:
\begin{equation}\label{contractionH}
H_h(AP \| AQ)\leq \overline{\alpha}(A) H_h(P \| Q)
\end{equation}\vspace{2mm} }

The generalizations of this theorem for general Markov kernels
seen as operators on spaces of probability measures was given by
\cite{Ledoux2003}. The shift in time for continuous-time Markov
chain is a column-stochastic matrix, hence, this contraction
theorem is also valid for continuous-time \linebreak Markov chains.

The question about a converse theorem arises immediately. Let the
contraction inequality hold for two pairs of positive
distributions $(P,Q)$ and $(U,V)$ and for all $H_h$:
\begin{equation}\label{contraction22}
H_h(U \| V) \leq  H_h(P \| Q)
\end{equation}
Could we expect that there exists such a stochastic matrix $A$
that $U=AP$ and $V=AP$? The answer \linebreak is positive:

\vspace{2mm}\noindent{\bf The converse generalized data processing
lemma.} {\it Let the contraction inequality (\ref{contraction22})
hold for two pairs of positive distributions $(P,Q)$ and $(U,V)$
and for all normalized $H_h$. Then there exists such a
column-stochastic matrix $A$ that $U=AP$ and $V=AQ$
\cite{Cohen1993}.\vspace{2mm}}

This means that for the system of inequalities
(\ref{contraction22}) (for all normalized convex functions $h$ on
$]0,\infty[$) is necessary and sufficient for existence of a
(discrete time) Markov process which transform the pair of
positive distributions $(P,Q)$ in $(U,V)$. It is easy to show that
for continuous-time Markov chains this theorem is not valid: the
attainable regions for them are strictly smaller than the set
given by \linebreak inequalities (\ref{contraction22}) and could
be even non-convex (see \cite{Gorban1979} and
Section~\ref{SecHistory} below).

\section{Definition of Entropy by its Properties
\label{DefPropert}}

\subsection{Separation of Variables for Partition of the State Space
\label{Partition}}

An important property of separation of variables is valid for all divergences which have
the form of a sum of  convex functions $f(p_i,p_i^*)$. Let the set of states be divided
into two subsets, $I_1$ and $I_2$, and let the functionals $u^1,\ldots u^m$ be linear. We
represent each probability distribution as a direct sum $P=P^1\oplus P^2$, where
$P^{1,2}$ are restrictions of $P$ on $I_{1,2}$.

Let us consider the problem $$H(P\|P^*)\to \min$$ subject to
conditions $u^i(P)=U_i$ for a set of linear functionals $u^i(P)$.

The solution $P^{\min}$ to this problem has a form
$P^{\min}=P_1^{\min}\oplus P_2^{\min}$, where $P^{1,2}$ are
solutions to \linebreak the problems $$H(P^{1,2}\|P^{* \ 1,2})\to \min$$
subject to conditions $u^i(P_{1,2})=U_i^{1,2}$ and $\sum_{i \in
I_{1,2} }p_i^{1,2}=\pi_{1,2}$ for some redistribution of the
linear functionals values,  $U_i=U_i^1+U_i^2$, and of the total
probability, $1=\pi_1+\pi_2$ ($\pi_{1,2} \geq 0$) .

The solution to the divergence minimization problem is composed from solutions of the
partial maximization problems. Let us call this property the {\it separation of variables
for incompatible events} (because $I_1 \cap I_2 =\emptyset$).

This property is trivially valid for the Tsallis family (for
$\alpha>0$, and for $\alpha <0$ with the change of minimization to
maximization) and for the CR family. For the R\'enyi family it
also holds (for $\alpha>0$, and for $\alpha <0$ with the change
from minimization to maximization), because the R\'enyi entropy is
a function of those trace--form  entropies, their level sets
coincide.

A simple check shows that this separation of variables property holds also for the convex
combination of Shannon's and Burg's entropies, $\beta D_{\mathrm{KL}}(P\|P^*) +
(1-\beta)D_{\mathrm{KL}}(P^*\|P)$.

\subsection{Additivity Property \label{additivity}}

The {\em additivity property } with respect to joining of subsystems is crucial both for
the classical thermodynamics and for the information theory.

Let us consider a system which is result of joining of two subsystems. A state of the
system is an ordered pair of the states of the subsystems and the space of states of the
system is the Cartesian product of the subsystems spaces of state. For systems with
finite number of states this means that if the states of subsystems are enumerated by
indexes $j$ and $k$ then the states of the system are enumerated by pairs $jk$. The
probability distribution for the whole system is $p_{jk}$, and for the subsystems the
probability distributions are the marginal distributions $q_j=\sum_k p_{jk}$, $r_k=\sum_j
p_{jk}$.

The {\em additive functions of state} are defined for each  state of the subsystems and
for a state of the whole system they are sums of these subsystem values:
$$u_{jk}=v_j+w_k$$ where $v_j$ and $w_k$ are functions of the subsystems state.

In classical thermodynamics such functions are called the {\em extensive quantities}. For
expected values of additive quantities the similar additivity condition holds:
\begin{equation}\label{AdditivityCond}
\sum_{j,k} u_{jk} p_{jk}=\sum_{j,k} (v_j + w_k) p_{ik}=\sum_j v_j
q_j + \sum_k w_k r_k
\end{equation}
 Let us consider these expected values as functionals of the
 probability distributions: $u(\{p_{jk}\})$, $v(\{q_j\})$
 and $w(\{r_k\})$. Then the additivity property for the expected
 values reads:
\begin{equation}\label{additivefunctionals}
 u(\{p_{jk}\})=v(\{q_j\})+w(\{r_k\})
\end{equation}
where $q_j$ and the $r_k$ are the marginal distributions.

Such a linear additivity property is impossible for non-linear entropy functionals, but
under some independence conditions the entropy can behave as an extensive variable.

Let $P$ be a product of marginal distributions. This means that the subsystems are
statistically independent: $p_{jk}=q_j r_k$. Assume also that the distribution $P^*$ is
also a product of marginal distributions $p^*_{jk}=q^*_j r^*_k$. Then some entropies
reveal the additivity property with respect to joining of \linebreak independent systems.
\begin{enumerate}
\item{The BGS relative entropy $D_{\mathrm{KL}}(P\|P^*)=
    D_{\mathrm{KL}}(Q\|Q^*)+D_{\mathrm{KL}}(R\|R^*)$.}
\item{The Burg entropy $D_{\mathrm{KL}}(P^*\|P)=D_{\mathrm{KL}}(Q^*\|Q)+
    D_{\mathrm{KL}}(R^*\|R)$ . It is obvious that a convex combination of the Shannon
    and Burg entropies has the same additivity property.}
\item{The R\'enyi entropy $H_{{\rm R}\ \alpha}(P\| P^*)=H_{{\rm R}\ \alpha}(Q \|
    Q^*)+H_{{\rm R}\ \alpha}(R \| R^*) $. For $\alpha \to \infty$ the Min-entropy
    also inherits this property.}
\end{enumerate}

This property implies the separation of variables for the entropy maximization problems
if the system consists of independent subsystems, $p_{jk}=q_jr_k$. Let functionals
$u^1(\{p_{jk}\}), \ldots u^m(\{p_{jk}\})$ be additive (\ref{AdditivityCond})
(\ref{additivefunctionals}) and let the relative entropy $H(P\| P^*)$ be additive with
respect to joining of independent systems. Assume that in equilibrium subsystems are also
independent, $p^*_{jk}=q^*_jr^*_k$. Then the solution to the problem
 $$H(P\| P^*) \to \min$$ subject to conditions
\begin{equation}\label{condIndep}
 u^i (P)=U_i \ \ (i=1,\ldots m);\ \ p_{jk}=q_jr_k
 \end{equation}
 is $p_{jk}^{\min}=q_j^{\min}r_k^{\min}$, where $q_j^{\min}$,
 $r_k^{\min}$ are solutions of partial problems:
 $$H(Q\| Q^*) \to \min$$ subject to the conditions
 $$v^i (Q)=V_i \ \ (i=1,\ldots m) $$
 and $$H(R\| R^*) \to \min$$ subject to the conditions
 $$w^i (Q)=W_i \ \ (i=1,\ldots m)$$
 for some redistribution of the additive functionals values
 $U_i=V_i+W_i$.

Let us call this property the {\it separation of variables for independent subsystems}.

Neither the CR, nor the Tsallis divergences families have the additivity property. It is
proven \cite{ENTR3} that a function $H_h$ has the additivity property if and only if it
is a convex combination of the Shannon and Burg entropies. See also
Theorem~\ref{thAddTrace} in Appendix.

Nevertheless, both the CR and the Tsallis families have the property of separation of
variables for independent subsystems because of the coincidence of the level sets with
the additive function, the R\'enyi entropy (for all $\alpha> 0$).

The Tsallis entropy family has absolutely the same property of separation of variables as
the R\'enyi entropy. To extend this property of the R\'enyi Tsallis entropies for
negative $\alpha$, we have to change there min to max.

For the CR family the result sounds even better: because of better normalization, the
separation of variables is valid for $H_{\rm CR \ \lambda} \to \min$ problem for all
values $\lambda\in ]-\infty, \infty[$.

The condition of independence of subsystems $p_{jk}=q_jr_k$ in (\ref{condIndep}) cannot
be relaxed: if we assume  $p^*_{jk}=q^*_jr^*_k$ only then the correlations between
subsystems may emerge in the solution of the minimization problem. For example, without
assumption of independence, for the Burg entropy, the method of Lagrange multipliers
gives ($\phi_i$ and $\psi_i$ are the Lagrange multipliers):
$$\frac{p^*_{jk}}{p_{jk}^{\min}}=\sum_i (\phi_i v_j^i + \psi_i w_k^i)$$
and the subsystems are not independent in this state even if they are independent in
equilibrium and the conditions are additive. These emergent correlations may be
considered as spurious \cite{PresseGhosh2013} or may be interpreted as sensible ones for
some finite systems far from thermodynamic limit for modelling of non-canonic ensembles
\cite{ENTR1}. In any case, the use of entropies which are additive with respect to
joining of independent subsystem does not guarantee independence of subsystems but allows
only to separate variables under condition of independence.

The stronger condition was used by Shore and Johnson \cite{Shore1980} in the axiomatic
derivation of the principle of maximum entropy and the principle of minimum divergence
(or `cross-entropy'). They postulated that the MaxEnt distribution for the whole system
is the product of the distributions of the subsystems if the known information
(conditions) is the information about subsystems (Axiom III). Independence of subsystems
in this axioms is not assumed but should be the consequence of the entropy maximization.
This axiom can be called `separation of variables under independent conditions'. They
supplement this assumption by the separation of variables for partition of the state
space (Axiom IV), by the condition of uniqueness of the MaxEnt distribution (Axiom I),
and by the requirement of the invariance with respect to the coordinate transformations
(Axiom II). All these axioms together give the unique classical BGS entropy. For further
discussion see \cite{PresseGhosh2013}.

Violation of the Shore and Johnson Axiom III leads to correlation between subsystems and
this  is an essential difference of the non-classical MaxEnt ensembles from the classical
canonical ensembles.

We use the weaker assumption of separation of variables for {\em independent subsystems
and additive conditions}. Its violation leads to much more counterintuitive consequences:
Subsystems remain independent (condition) and other conditions are additive
(\ref{condIndep}) but the solution of the MaxEnt problem is the product of distributions
which are not solutions of the partial MaxEnt problems. In other words, the probability
distribution for a subsystem is modified just by existence of another subsystem without
any interactions and correlations.

It seems to be difficult to find a reason for such a behavior and therefore the
assumption of separation of variables for independent subsystems and additive conditions
is a sensible axiom. It is weaker than the Shore and Johnson Axiom III \cite{Shore1980}
and, therefore, leads to a wider family of entropies than just a classical BGS entropy.
This wider family includes the CR family (\ref{Cressie--Read}) and the convex combination
of the Shannon and the Burg entropies (\ref{ConvComb}).

The question arises: is there any new divergence that has the following three properties:
(i) the divergence $H(P\|P^*)$ should decrease in Markov processes with equilibrium
$P^*$, (ii) for minimization problems the separation of variables for independent
subsystems holds and (iii) the separation of variables for incompatible events holds. A
{\it new} divergence means here that it is not a function of a divergence from the CR
family or from the convex combination of the Shannon and the Burg entropies.

The answer is: no, any divergence which has these three properties and is defined and
differentiable for positive distributions is a monotone function of $H_h$ for
$h(x)={\alpha}p^{\alpha}$ (${\alpha}\in ]-\infty,\infty[$, ${\alpha}\neq 0,1 $), that is,
essentially,  the CR family (\ref{Cressie--Read}), or $h(x)=\beta x \ln x - (1-\beta) \ln
x$ ($\beta\in [0,1]$). If we relax the differentiability property, then we have to add to
the CR family the limits for  $\lambda \to \pm \infty$. For  $\lambda \to + \infty$ we
get the CR analogue of min-entropy
$$H_{\rm CR \ \infty} (P\|P^*)=
\max_i\left\{\frac{p_i}{p_i^*}\right\}-1 $$

The limiting case for the CR family for $\lambda \to - \infty$ is less known but is also
a continuous and piecewise differential Lyapunov function for the Master equation:

$$H_{{\rm CR \ -\infty}} (P\|P^*)=
\max_i\left\{\frac{p_i^*}{p_i}\right\}-1 $$

Both properties of separation of variables are based on the specific additivity
properties: additivity with respect to the composition of independent systems and
additivity with respect to the partitioning of the space of states. Separation of
variables can be considered as a weakened form of additivity: not the minimized function
should be additive but there exists such a monotonic transformation of scale after which
the function becomes additive (and different transformations may be needed for different
additivity properties).

\subsection{``No More Entropies" Theorems \label{NoMore}}

The classical Shannon work included the characterization of
entropy by its properties. This meant that the classical notion of
entropy is natural, and {\it no more entropies} are expected. In
the seminal work of R\'enyi, again the characterization of entropy
by its properties was proved, and for this, extended family the
{\it no more entropies} theorem was proved too. In this section,
we prove the next {\it no more entropies} theorem, where two
one-parametric families are selected as sensible: the CR family
and the convex combination of Shannon's and Burg's entropies. They
are two branches of solutions of the correspondent functional
equation and intersect at two points: Shannon's entropy
($\lambda=1$ in the CR family) and Burg's entropy ($\lambda=0$).
We consider entropies as equivalent if their level sets coincide.
In that sense, the R\'enyi entropy and the Tsallis entropy (with
$\alpha>0$) are equivalent to the CR entropy with $\alpha-1 =
\lambda$, $\lambda>-1$.

Following R\'enyi, we consider entropies of {\it incomplete
distributions}: $p_i \geq 0$, $\sum_i p_i \leq 1$. The divergence
$H(P\| P^*)$ is  a $C^1$ smooth function of a pair of positive
generalized probability distributions $P=(p_i)$, $p_i>0$ and
$P^*=(p^*_i)$, $p^*_i>0$, $i=1, \ldots n$.

The following 3 properties are required for characterization of
the ``natural" entropies.

\begin{enumerate}
\item{To provide the separation of variables for incompatible events together with
    the symmetry property we assume that the divergence is separable, possibly, after
    a scaling transformation: there exists such a function of two variables
    $f(p,p^*)$ and a monotonic function of one variable $\phi(x)$ that
    $H(P\|P^*)=\phi(\sum_i f(p_i,p^*_i))$. This formula allows us to define
    $H(P\|P^*)$ for all $n$.}
 \item{$H(P\| P^*)$ is a Lyapunov function for the
     Kolmogorov equation (\ref{MAsterEq1}) for any Markov
     chain with equilibrium $P^*$. (One can call these
     functions the {\it universal} Lyapunov functions
     because they do not depend on the   kinetic
     coefficients directly, but only on the equilibrium
     distribution $P^*$.)}
 \item{To provide separation of variables for independent
subsystems we assume that $H(P\| P^*)$ is additive (possibly after a
scaling transformation): there exists such a function of one
variable $\psi(x)$ that the function $\psi(H(P\|P^*))$ is additive
for the union of independent subsystems: if $P=(p_{ij})$,
$p_{ij}=q_j r_j$, $p^*_{ij}=q^*_j r^*_j$, then
$\psi(H(P\|P^*))=\psi(H(Q\|Q^*))+\psi(H(R\|R^*))$.}
\end{enumerate}

\begin{theorem}\label{Theorem1NoMore} If a $C^1$-smooth divergence $H(P\| P^*)$
satisfies the conditions 1-3 then, up to monotonic transformation,
it is either the CR divergence $H_{\rm CR \ \lambda}$ or a convex
combination of the Botlzmann--Gibbs--Shannon and the Burg
entropies, $H_h(P \| P^*)=\beta D_{\mathrm{KL}}(P\|P^*) +
(1-\beta)D_{\mathrm{KL}}(P^*\|P)$.
\end{theorem}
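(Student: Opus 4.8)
The plan is to translate the three axioms into functional equations for the generating function $f(p,p^*)$ and then solve them. Since the divergence has the separable form $H(P\|P^*)=\phi\left(\sum_i f(p_i,p_i^*)\right)$ with $\phi$ monotonic, and since minimizers and level sets are invariant under the outer monotonic maps $\phi$ and $\psi$, I would first reduce to the bare additive kernel: set $F(P\|P^*)=\sum_i f(p_i,p_i^*)$ and work with $F$, recovering $H$ at the end up to monotonic rescaling. The two substantive constraints are then the Lyapunov (entropy-production) condition and the additivity-under-independence condition, each of which I would convert into a local differential identity.

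First I would exploit condition 2. By the entropy production formula (\ref{ENtropyProd}) and its converse (\ref{ENtropyProd2}), a separable $F=\sum_i f(p_i,p_i^*)$ is a universal Lyapunov function for every Markov chain with equilibrium $P^*$ exactly when the ``potential'' $\partial f/\partial p$, evaluated along the master equation (\ref{MAsterEq1}), produces a nonpositive quadratic-type form for all admissible rate coefficients $q_{ij}$. Because the coefficients $q_{ij}p_j^*$ can be chosen freely subject only to (\ref{MasterEquilibrium}), testing against elementary two-state exchanges forces $f(p,p^*)$ to depend on $p$ and $p^*$ only through the ratio $x=p/p^*$ together with a convexity requirement; concretely I expect to recover $f(p,p^*)=p^*\,h(p/p^*)$ for a convex $h$, i.e. $F$ must be of Csisz\'ar--Morimoto type $H_h$. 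This is the step that pins down the variable $p/p^*$ as the only relevant combination and reduces two unknown functions of two variables to one convex function $h$ of one variable.

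Next I would impose condition 3, additivity under joining of independent systems. Writing $p_{jk}=q_j r_k$, $p^*_{jk}=q^*_j r^*_k$, and $x_j=q_j/q^*_j$, $y_k=r_k/r^*_k$, the product structure gives $p_{jk}/p^*_{jk}=x_j y_k$, so $H_h(P\|P^*)=\sum_{j,k} q_j^* r_k^*\, h(x_j y_k)$. The additivity axiom says $\psi(H_h(P\|P^*))=\psi(H_h(Q\|Q^*))+\psi(H_h(R\|R^*))$ for some monotonic $\psi$. Differentiating this identity in the distribution variables (using Lagrange-multiplier / variational derivatives along the normalization constraints $\sum q_j^*=\sum r_k^*=1$) yields a multiplicative-additive functional equation for $h$ of Pexider/Cauchy type. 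The standard outcome is that $h$ must satisfy a second-order ODE whose solutions split into a power-law branch $h(x)\propto x^\lambda$ (after normalization, the CR kernel $h(x)=\frac{x(x^\lambda-1)}{\lambda(\lambda+1)}$, whose $H_h$ becomes additive after the monotone map $\psi$ provided by the R\'enyi logarithm) and a logarithmic branch giving $h(x)=\beta x\ln x-(1-\beta)\ln x$, the convex combination of the Shannon and Burg entropies, which is genuinely (linearly) additive.

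The main obstacle will be step three: solving the functional equation forced by additivity and showing that these two families exhaust the solutions. The difficulty is that additivity is required only up to the unknown monotonic rescaling $\psi$, so I cannot simply demand linearity of $H_h$ on products; I must first differentiate to eliminate $\psi$ and obtain a $\psi$-free differential identity in $h$, then classify its solutions under the convexity constraint already secured from step two and the $C^1$-smoothness hypothesis. Separating the two branches (power versus logarithmic) corresponds to the two cases in integrating the resulting ODE, and care is needed at the exceptional parameter values ($\lambda=0,1$) where the branches meet at the Shannon and Burg entropies. Once $h$ is identified, reinstating the outer monotonic maps $\phi,\psi$ gives the claimed characterization up to monotonic transformation, completing the proof. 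I would note that the appendix result \ref{thAddTrace} cited earlier already settles the purely additive (rather than additive-up-to-rescaling) sub-case, so the essential new content is handling the monotonic freedom in $\psi$.
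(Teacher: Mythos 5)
Your plan follows essentially the same route as the paper's proof: first the reduction, via universality of the Lyapunov property tested on two-state chains, to the Csisz\'ar--Morimoto form $f(p,p^*)=p^*h(p/p^*)$ with $h$ convex (the paper's Lemma~\ref{MorimotoChar}), and then elimination of the unknown monotone rescaling $\psi$ by repeated differentiation of the additivity identity on product distributions, leading to an Euler-type second-order ODE for an auxiliary function of $h$ whose two solution branches give the power-law (Cressie--Read) family and the logarithmic (Shannon--Burg convex combination) family, with a final back-substitution check. The paper carries out the intermediate functional-equation computations explicitly (equations (\ref{Diffur})--(\ref{TrdFeq})), but your outline identifies the same key steps and the same main difficulty, so it is a correct sketch of the paper's argument.
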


In a paper \cite{ENTR3} this family was identified as the Tsallis
relative entropy with some abuse of language, because in the
Tsallis entropy the case with $\alpha<0$ is usually excluded.

First of all, let us prove that any function which satisfies the
conditions 1 and 2 is a monotone function of a Csisz\'ar--Morimoto
function (\ref{Morimoto}) for some convex smooth function $h(x)$.
This was mentioned in 2003 by P. Gorban \cite{ENTR3}. Recently, a
similar statement was published by S. Amari (Theorem 1 in
\cite{Amari2009}).

\begin{lemma}\label{MorimotoChar}If a Lyapunov function $H(p)$ for the Markov chain is of the
trace--form ($H(p)=\sum_{i}f(p_{i},p_{i}^{*})$) and is universal,
then $f(p,p^{*})=p^{*}h(\frac{p}{p^{*}})+{\rm const}(p^*)$, where
$h(x)$ is a convex function of one variable.
\end{lemma}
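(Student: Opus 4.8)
The plan is to differentiate $H$ along the Master equation (\ref{MAsterEq1}) and exploit the freedom to choose the kinetic coefficients. Writing $f_p=\partial f/\partial p$ and $g_i=f_p(p_i,p_i^*)$, the chain rule gives
$$\frac{\D H}{\D t}=\sum_{i,\,j\neq i} q_{ij}p_j^*\,g_i\left(\frac{p_j}{p_j^*}-\frac{p_i}{p_i^*}\right).$$
Universality means this must be $\leq 0$ for every admissible choice of rates $q_{ij}\geq 0$ satisfying (\ref{MasterEquilibrium}). The idea is to isolate the simplest such chains: fix a pair of states, say $1$ and $2$, and set all rates to zero except $q_{12},q_{21}>0$. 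The constraint (\ref{MasterEquilibrium}) then collapses to the single detailed-balance relation $q_{12}p_2^*=q_{21}p_1^*=:w>0$, only $p_1,p_2$ evolve, and
$$\frac{\D H}{\D t}=w\,(g_1-g_2)\left(\frac{p_2}{p_2^*}-\frac{p_1}{p_1^*}\right).$$
Since $w>0$ is free, the Lyapunov condition is equivalent to the pointwise inequality
$$\left(f_p(p_1,p_1^*)-f_p(p_2,p_2^*)\right)\left(\frac{p_1}{p_1^*}-\frac{p_2}{p_2^*}\right)\geq 0,$$
which must hold for all positive values of the four arguments $p_1,p_2,p_1^*,p_2^*$.

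Next I want to read the full structure of $f$ off this one inequality. Setting $a=p_1^*$, $b=p_2^*$ and writing $\phi_a(x):=f_p(ax,a)$ for the scaled variable $x=p_i/p_i^*$, the inequality becomes $(\phi_a(x)-\phi_b(y))(x-y)\geq 0$ for all $a,b,x,y>0$. Two specializations do all the work. First, taking $a=b$ shows each $\phi_a$ is non-decreasing, \emph{i.e.}, $f(\cdot,p^*)$ is convex in its first argument for every fixed $p^*$. Second --- and this is the heart of the argument --- fix $x>y$; then $\phi_a(x)\geq\phi_b(y)$ for all $a,b$, and letting $x\downarrow y$ and using the continuity of $f_p$ (inherited from the $C^1$ hypothesis) gives $\phi_a(y)\geq\phi_b(y)$ for all $a,b$. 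Swapping $a$ and $b$ yields the reverse inequality, so $\phi_a(y)$ is independent of $a$. Thus $f_p(p,p^*)=\eta(p/p^*)$ for a single non-decreasing function $\eta$ of the ratio alone.

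The conclusion then follows by integrating in $p$ at fixed $p^*$: with the antiderivative $h(x):=\int_1^x\eta(u)\,\D u$ (so $h'=\eta$), the substitution $u=s/p^*$ gives $f(p,p^*)=p^*h(p/p^*)+f(p^*,p^*)$, where the remainder $f(p^*,p^*)$ depends on $p^*$ only and is the asserted ${\rm const}(p^*)$. Convexity of $h$ is immediate from the monotonicity of $\eta=h'$. I expect the main obstacle to be the middle step: justifying the passage to the limit $x\downarrow y$ that forces $f_p$ to depend on $p$ and $p^*$ only through their ratio. This is where the $C^1$ regularity is genuinely used, and it is the one place where the argument is more than a direct computation; everything before it is the reduction to two-state chains and everything after it is routine integration.
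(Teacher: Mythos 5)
Your proposal is correct and follows essentially the same route as the paper: reduction to a two-state (embedded) chain with the detailed-balance constraint, the resulting pointwise inequality $\left(f_p(p_1,p_1^*)-f_p(p_2,p_2^*)\right)\left(\frac{p_1}{p_1^*}-\frac{p_2}{p_2^*}\right)\geq 0$, a continuity argument showing $f_p$ depends only on the ratio $p/p^*$ (your limit $x\downarrow y$ is the same device as the paper's $\Delta y$ perturbation/contradiction), and integration in $p$ at fixed $p^*$. No gaps.
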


\begin{proof}
 Let us consider a Markov chain with two states. For such a chain
\begin{equation}
\frac{\D p_1}{\D t}=q_{12}p_{2}^{*}\left(\frac{p_{2}}{p_{2}^{*}}-
\frac{p_{1}}{p_{1}^{*}}\right)=
-q_{21}p_{1}^{*}\left(\frac{p_{1}}{p_{1}^{*}}-\frac{p_{2}}{p_{2}^{*}}\right)=-\frac{\D
p_{2}}{\D t}
\end{equation}
If $H$ is a Lyapunov function then $\dot{H}\leq 0$ and the following
inequality holds:

$$\left(\frac{\partial f(p_{2},p_{2}^{*})}{\partial p_{2}}-\frac{\partial
f(p_{1},p_{1}^{*})}{\partial
p_{1}}\right)\left(\frac{p_{1}}{p_{1}^{*}}-\frac{p_{2}}{p_{2}^{*}}\right)\leq0
$$

We can consider $p_1,p_2$ as independent variables from an open
triangle $D=\{(p_1,p_2)\ | \ p_{1,2} > 0, \ p_1+p_2 < 1 \}$. For
this purpose, we can include the Markov with two states into a
chain with three states and $q_{3i}=q_{i3}=0$.

If for a continuous function of two variables $\psi (x,y)$ in an
open domain $D \subset \mathbb{R}^2$ an inequality $(\psi
(x_1,y_1) - \psi (x_2,y_2))(y_1-y_2) \leq 0$ holds then this
function does not depend on $x$ in $D$. Indeed, let there exist
such values $x_{1,2}$ and $y$ that $\psi(x_1,y) \neq \psi(x_2,y)$,
$\psi(x_1,y) - \psi(x_2,y)=\varepsilon >0$. We can find such
$\delta > 0$ that $(x_{1},y+\Delta y) \in D$ and
$|\psi(x_{1},y+\Delta y) - \psi(x_{1},y)|<\varepsilon/2$ if
$|\Delta y|< \delta$. Hence, $\psi(x_1,y+\Delta y) - \psi(x_2,y) >
\varepsilon/2>0$ if $|\Delta y|< \delta$. At the same time
$(\psi(x_1,y+\Delta y) - \psi(x_2,y)) \Delta y \leq 0$, hence, for
a positive $0<\Delta y< \delta$ we have a contradiction.
Therefore, the function $\frac{\partial f(p,p^{*})}{\partial p}$
is a monotonic function of $\frac{p}{p^{*}}$, hence,
$f(p,p^{*})=p^{*}h(\frac{p}{p^{*}})+{\rm const}(p^*)$, where $h$
is a convex function of one variable.
\end{proof}

This lemma has important corollaries about many popular
divergences $H(P(t) \| P^*)$ which are not Lyapunov functions
of Markov chains. This means that there exist such
distributions $P_0$ and $P^*$ and a Markov chain with
equilibrium distribution $P^*$ that due to the Kolmogorov
equations $$\left. \frac{\D H(P(t) \| P^*)}{\D t}\right|_{t=0}>
0$$ if $P(0)=P_0$. This Markov process increases divergence
between the distributions $P,P^*$ (in a vicinity of $P_0$)
instead of making them closer. For example,

\begin{corollary}The following Bregman divergences \cite{Bregman1967} are not
universal Lyapunov functions for Markov chains:
\begin{itemize}
\item{Squared Euclidean distance $B(P \| P^*)=\sum_i
(p_i-p^*_i)^2$;}
\item{The Itakura--Saito divergence \cite{Itakura1968} $B(P \| P^*)=\sum_i
\left(\frac{p_i}{p_i^*}-\log \frac{p_i}{p_i^*}-1\right)$. $\ \ \
\square$}
\end{itemize}
\end{corollary}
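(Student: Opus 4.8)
The plan is to derive the corollary directly from Lemma~\ref{MorimotoChar} by contraposition. Both candidate functions are already written in trace--form, $B(P\|P^*)=\sum_i f(p_i,p_i^*)$, so if either were a universal Lyapunov function for all Markov chains with equilibrium $P^*$, the lemma would force its integrand to carry the Csisz\'ar--Morimoto structure $f(p,p^*)=p^* h(p/p^*)+\mathrm{const}(p^*)$ for some convex $h$. The strategy is therefore to show that neither integrand admits such a representation. The cleanest test is the one isolated inside the proof of the lemma: $f$ has this form precisely when $\partial f/\partial p$ is a function of the ratio $p/p^*$ alone. This criterion is insensitive to the additive term $\mathrm{const}(p^*)$, which disappears under $\partial/\partial p$, so I need not worry about that freedom.

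For the squared Euclidean distance, $f(p,p^*)=(p-p^*)^2$ gives $\partial f/\partial p=2(p-p^*)=2p^*(p/p^*-1)$. Along any ray of fixed ratio $p/p^*=x$ this equals $2p^*(x-1)$, which genuinely varies with $p^*$; hence it is not a function of $p/p^*$ alone. For the Itakura--Saito divergence, $f(p,p^*)=p/p^*-\log(p/p^*)-1$ gives $\partial f/\partial p=1/p^*-1/p=(1/p^*)(1-p^*/p)$. Writing $x=p/p^*$, this is $(1/p^*)(1-1/x)$, which again depends on $p^*$ for fixed $x$. In both cases the derivative fails the ratio criterion, so $f$ cannot be brought to Csisz\'ar--Morimoto form, and by the contrapositive of Lemma~\ref{MorimotoChar} neither divergence is a universal Lyapunov function.

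I expect no genuine obstacle here: the result is a routine application of the lemma once the diagnostic is extracted. The only point requiring a little care is the logical structure---one must invoke the lemma in contrapositive form and confirm that the permitted $\mathrm{const}(p^*)$ term cannot rescue the representation, which the $\partial/\partial p$ test settles automatically. It is worth noting that convexity of $h$ plays no role in the negative conclusion: the structural form $p^* h(p/p^*)$ is already violated, so these functions are excluded before convexity is ever tested. If one prefers an explicit witness rather than the abstract criterion, one can instead exhibit a two--state chain and a distribution at which $\dot B>0$, reusing the sign computation from the proof of the lemma; but the derivative test is shorter and makes transparent exactly which feature of these Bregman divergences is responsible for the failure.
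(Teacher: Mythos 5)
Your argument is correct and is exactly the route the paper intends: the corollary is stated as a consequence of Lemma~\ref{MorimotoChar}, whose contrapositive says a trace--form divergence whose integrand cannot be written as $p^*h(p/p^*)+\mathrm{const}(p^*)$ is not a universal Lyapunov function, and your ratio test on $\partial f/\partial p$ (which correctly characterizes that structural form and kills both candidates) simply makes explicit the computation the paper leaves to the reader. No gaps.
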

These divergences violate the requirement: due to the Markov
process distributions always monotonically approach equilibrium.
(Nevertheless, among the Bregman divergences there exists a
universal Lyapunov function for Markov chains, the
Kulback--Leibler divergence.)

We place the proof of Theorem \ref{Theorem1NoMore} in Appendix.

\vspace{2mm}\noindent {\bf Remark}. If we relax the requirement of
smoothness and consider  in conditions of Theorem
\ref{Theorem1NoMore}  just continuous functions, then we have to
add to the answer the limit divergences, $$H_{\rm CR \ \infty}(P
\| P^*) =\max_i\left\{\frac{p_i}{p_i^*}\right\}-1 \ ;$$
 $$H_{\rm CR \ -\infty} (P \| P^*)= \max_i\left\{\frac{p_i^*}{p_i}\right\}-1 $$

\section{Markov Order \label{MarOd}}

\subsection{Entropy: a Function or an Order? \label{FunctionOrOrder}}

Theorem \ref{Theorem1NoMore} gives us all of the divergences for
which (i) the Markov chains monotonically approach their
equilibrium, (ii) the level sets are the same as for a separable
(sum over states) divergence and (iii) the level sets are the same
as for a divergence which is additive with respect to union of
independent subsystems.

We operate with the level sets and their orders, compare where the
divergence is larger (for monotonicity of the Markov chains
evolution), but the values of entropy are not important by
themselves. We are interested in the following order: $P$ precedes
$Q$ with respect to the divergence $H_{\ldots}(P \| P^*)$ if there
exists such a continuous curve $P(t)$ ($t\in [0,1]$) that
$P(0)=P$, $P(1)=Q$ and the function $H(t)=H_{\ldots}(P(t) \| P^*)$
monotonically decreases on the interval $t\in[0,1]$. This property
is invariant with respect to a monotonic (increasing)
transformation of the divergence. Such a transformation does not
change the conditional minimizers or maximizers of the divergence.

There exists one important property that is not invariant with
respect to monotonic transformations. The increasing function
$F(H)$ of a convex function $H(P)$ is not obligatorily a convex
function. Nevertheless, the sublevel sets given by inequalities
$H(P) \leq a$ coincide with the sublevel sets $F(H(P)) \leq F(a)$.
Hence, sublevel sets for $F(H(P))$ remain convex.

The Jensen inequality $$H(\theta P + (1-\theta)Q) \leq \theta H(P) + (1-\theta)H(Q)$$
($\theta \in [0,1]$) is not invariant with respect to monotonic transformations. Instead
of them, there appears the {\em max form analogue of the Jensen inequality}
(quasiconvexity \cite{Sion1958}):

\begin{equation}\label{maxJen}
H(\theta P + (1-\theta)Q) \leq \max\{H(P), H(Q)\} \ , \ \ \theta \in
[0,1]
\end{equation}
This inequality is invariant with respect to monotonically
increasing transformations and it is equivalent to convexity of
sublevel sets.

\begin{proposition}\label{sublevelconvex1}All sublevel sets of a function $H$ on a convex set $V$ are
convex if and only if for any two points $P,Q\in V$ and every
$\theta \in [0,1]$ the inequality (\ref{maxJen}) holds. \;\;\;
\;\;\; $\square$
\end{proposition}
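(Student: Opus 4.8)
The plan is to prove the two implications of this equivalence separately, in each case arguing directly from the definition of a sublevel set $S_a=\{x\in V\mid H(x)\leq a\}$. The whole argument rests on one elementary observation: for two points $P,Q$, the statement that both lie in a common sublevel set $S_a$ is exactly the statement $\max\{H(P),H(Q)\}\leq a$. This turns the $\max$ appearing in (\ref{maxJen}) into a membership condition, and vice versa, which is why the two formulations are interchangeable.

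For the forward direction (all sublevel sets convex $\Rightarrow$ inequality (\ref{maxJen})), I would fix arbitrary $P,Q\in V$ and $\theta\in[0,1]$ and choose the particular level $a=\max\{H(P),H(Q)\}$. By construction $P,Q\in S_a$, and convexity of $S_a$ places the segment point $\theta P+(1-\theta)Q$ in $S_a$ as well. Unfolding the definition of $S_a$ then reads off $H(\theta P+(1-\theta)Q)\leq a=\max\{H(P),H(Q)\}$, which is precisely (\ref{maxJen}).

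For the converse (inequality (\ref{maxJen}) $\Rightarrow$ all sublevel sets convex), I would fix an arbitrary level $a$ and take any $P,Q\in S_a$, so that $H(P)\leq a$ and $H(Q)\leq a$. For every $\theta\in[0,1]$ the hypothesis gives $H(\theta P+(1-\theta)Q)\leq\max\{H(P),H(Q)\}\leq a$, so the segment point again lies in $S_a$; as $a$ was arbitrary, every sublevel set is convex.

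There is no genuine obstacle here: this is the standard characterization of quasiconvexity, and both implications are immediate once the sublevel-set definition is chained with the $\max$ inequality. The only point worth a moment's care is bookkeeping over the quantifier ``all sublevel sets'' — in the forward direction it suffices to use the single level $a=\max\{H(P),H(Q)\}$ determined by the chosen pair, whereas the converse must be verified uniformly for every level $a$. Beyond convexity of $V$, no regularity or continuity hypothesis on $H$ is invoked.
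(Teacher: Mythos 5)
Your proof is correct and is exactly the standard quasiconvexity argument that the paper itself treats as immediate (it states the proposition with only a $\square$ and no written proof). Both directions are handled properly: the forward direction correctly uses the single level $a=\max\{H(P),H(Q)\}$, and the converse correctly quantifies over all levels, with no hidden regularity assumptions needed.
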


It seems very natural to consider divergences as orders on
distribution spaces, and discuss only properties which are
invariant with respect to monotonic transformations. From this
point of view, the CR family appears absolutely naturally from the
additivity (ii) and the ``sum over states" (iii) axioms, as well
as the convex combination $\beta D_{\mathrm{KL}}(P\|P^*) +
(1-\beta)D_{\mathrm{KL}}(P^*\|P)$ ($\alpha \in [0,1]$), and in the
above property context there are no other smooth divergences.

\subsection{Description of Markov Order \label{order}}

The CR family and the convex combinations of Shannon's and Burg
relative entropies are distinguished families of divergences.
Apart from them there are many various ``divergences", and even
the Csisz\'ar--Morimoto functions (\ref{Morimoto}) do not include
all used possibilities. Of course, most users prefer to have an
unambiguous choice of entropy: it would be nice to have ``the best
entropy" for any class of problems. But from some point of view,
ambiguity of the entropy choice is unavoidable. In this section we
will explain why the choice of entropy is necessarily non unique
and demonstrate that for many MaxEnt problems the natural solution
is not a fixed distribution, but a well defined set of
distributions.

The most standard use of divergence in many application is as
follows:
\begin{enumerate}
\item{On a given space of states an ``equilibrium distribution"
$P^*$ is given. If we deal with the probability distribution in real
kinetic processes then it means that without any additional
restriction the current distribution will relax to $P^*$. In that
sense, $P^*$ is the most disordered distribution. On the other hand,
$P^*$ may be considered as the ``most disordered" distribution with
respect to some a priori information.}
 \item{We do not know the current distribution $P$, but
we do know some linear functionals, the moments $u(P)$.}
 \item{We do not want to introduce any subjective arbitrariness in the
estimation of $P$ and define it as the ``most disordered"
distribution for given value $u(P)=U$ and equilibrium $P^*$. That
is, we define $P$ as solution to the problem:
\begin{equation}\label{MaxEntCOnd}
H_{\ldots}(P \| P^*) \to \min \ \ {\rm subject \ to} \ \ u(P)=U
\end{equation}
Without the condition $u(P)=U$ the solution should be simply
$P^*$.}
\end{enumerate}

Now we have too many entropies and do not know what is the
optimal choice of $H_{\ldots}$ and what should be the optimal
estimate of $P$. In this case the proper question may be: {\it
which $P$ could not be such an optimal estimate}? We can answer
the exclusion question. Let for a given $P^0$ the condition
hold, $u(P^0)=U$. If there exists a Markov process with
equilibrium $P^*$ such that at point $P^0$ due to the
Kolmogorov equation (\ref{MAsterEq1}) $$\frac{\D P}{\D t} \neq
0 \ \ {\rm and} \ \ \frac{\D (u(P))}{\D t} = 0 \ $$ then $P^0$
cannot be the optimal estimate of the distribution $P$ under
condition $u(P)=U$.

The motivation of this approach is simple: any Markov process
with equilibrium $P^*$ increases disorder and  brings the
system ``nearer" to the equilibrium $P^*$. If at $P^0$ it is
possible to move along the condition plane towards the more
disordered distribution then $P^0$ cannot be considered as an
extremely disordered distribution on this plane. On the other
hand, we can consider $P^0$ as a possible extremely disordered
distribution on the condition plane, if for any Markov process
with equilibrium $P^*$ the solution of the Kolmogorov equation
(\ref{MAsterEq1}) $P(t)$ with initial condition $P(0)=P^0$ has
no points on the plane $u(P)=U$ for $t>0$.

Markov process here is considered as a ``randomization". Any set
$C$ of distributions can be divided in two parts: the
distributions which retain in $C$ after some non-trivial
randomization and the distributions which leave $C$ after any
non-trivial randomization. The last are the maximally random
elements of $C$: they cannot become more random and retain in $C$.
Conditional minimizers of relative entropies $H_h(P\|P^*)$ in $C$
are maximally random in that sense.

There are too many functions $H_h(P\|P^*)$ for effective
description of all their conditional minimizers. Nevertheless, we
can describe the maximally random distributions directly, by
analysis of Markov processes.

To analyze these properties more precisely, we need some formal
definitions.

\begin{definition}  (Markov preorder). If for distributions $P^0$
and $P^1$ there exists such a Markov process with equilibrium
$P^*$ that for the solution of the Kolmogorov equation with
$P(0)=P^0$ we have $P(1)=P^1$ then we say that $P^0$ and $P^1$
are connected by the Markov preorder with equilibrium $P^*$ and
use notation $P^0 \succ^0_{P^*} P^1$.
\end{definition}

\begin{definition} Markov order is the closed transitive closure of the Markov
preorder. For the Markov order with equilibrium $P^*$ we use
notation $P^0 \succ_{P^*} P^1$.
\end{definition}

For a given $P^*=(p^*_i)$ and a distribution $P=(p_i)$ the set of
all vectors $v$ with coordinates $$v_i= \sum_{j, \, j\neq i}
q_{ij}p^*_j\left(\frac{p_j}{p_j^*}-\frac{p_i}{p_i^*}\right)$$
where $p_i^*$ and $q_{ij} \geq 0$ are connected by identity
(\ref{MasterEquilibrium}) is a closed convex cone. This is a cone
of all possible time derivatives of the probability distribution
at point $P$ for Markov processes with equilibrium $P^*=(p^*_i)$.
For this cone, we use notation $\mathbf{Q}_{(P,P^*)}$

\begin{definition}For each distribution $P$ and a $n$-dimensional vector
$\Delta$ we say that $\Delta <_{(P,P^*)}0$ if $\Delta \in
\mathbf{Q}_{(P,P^*)}$. This is the {\it local Markov order}.
\end{definition}

\begin{proposition}\label{propercone2}$\mathbf{Q}_{(P,P^*)}$ is a proper cone, \emph{i.e.}, it does not
include any straight line.
\end{proposition}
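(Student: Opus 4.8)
The plan is to produce a single linear functional that is \emph{strictly} negative everywhere on $\mathbf{Q}_{(P,P^*)}$ except at the origin; a closed convex cone on which such a functional exists cannot contain a straight line. First I would reduce the statement to a pointedness condition: a closed convex cone contains a line if and only if it contains two opposite nonzero vectors $v$ and $-v$. Indeed, given a line $\{a+tb\}\subset\mathbf{Q}_{(P,P^*)}$ with $b\neq 0$, dividing the point $a+tb$ by $t$ and letting $t\to\pm\infty$, then using that $\mathbf{Q}_{(P,P^*)}$ is a closed cone, yields $b\in\mathbf{Q}_{(P,P^*)}$ and $-b\in\mathbf{Q}_{(P,P^*)}$. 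So it suffices to show that $v\in\mathbf{Q}_{(P,P^*)}$ together with $-v\in\mathbf{Q}_{(P,P^*)}$ forces $v=0$.

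The functional I would use is the gradient of a strictly convex Csisz\'ar--Morimoto divergence. Fix $h(x)=x^2$ and write $x_k=p_k/p_k^*$. Since $H_h(P\|P^*)=\sum_i p_i^* h(p_i/p_i^*)$, we have $\partial H_h/\partial p_i=h'(x_i)$, so for any $v\in\mathbf{Q}_{(P,P^*)}$ realized by admissible coefficients $q_{ij}$ (satisfying the equilibrium identity (\ref{MasterEquilibrium})) the chain rule gives $\nabla H_h(P)\cdot v=\sum_i h'(x_i)v_i=\D H_h/\D t$, which by the entropy production formula (\ref{ENtropyProd}) equals $\sigma_h=\sum_{i,j,\,j\neq i} q_{ij}p_j^*[h(x_i)-h(x_j)+h'(x_i)(x_j-x_i)]$. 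The point I expect to be the crux of the argument is that the number $\nabla H_h(P)\cdot v$ depends on $v$ (and on $P,P^*,h$) alone, not on which coefficients $q_{ij}$ happen to produce $v$; this is exactly what lets me test $v$ and $-v$ — which may be realized by \emph{different} kinetic coefficients — against one and the same functional, despite $\mathbf{Q}_{(P,P^*)}$ being defined as a union over all admissible $q_{ij}$.

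With $h(x)=x^2$ each bracket collapses to the exact square $h(x_i)-h(x_j)+h'(x_i)(x_j-x_i)=-(x_i-x_j)^2$, so that $\nabla H_h(P)\cdot v=\sigma_h=-\sum_{i,j,\,j\neq i} q_{ij}p_j^*(x_i-x_j)^2\le 0$, with equality only if $x_i=x_j$ for every pair with $q_{ij}>0$. In that equality case every coordinate $v_i=\sum_{j,\,j\neq i} q_{ij}p_j^*(x_j-x_i)$ vanishes, i.e.\ $v=0$. Contrapositively, $v\neq 0$ forces $\nabla H_h(P)\cdot v<0$ strictly. Here I am using $p_j^*>0$ (positive equilibrium) and the strict convexity of $x^2$; any other strictly convex $C^1$ function, such as $x\ln x$, would serve equally well.

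To finish, suppose both $v$ and $-v$ lay in $\mathbf{Q}_{(P,P^*)}$ with $v\neq 0$. The previous paragraph, applied to $v$ and then to $-v$ (both nonzero), would give $\nabla H_h(P)\cdot v<0$ and $\nabla H_h(P)\cdot(-v)<0$, hence simultaneously $\nabla H_h(P)\cdot v<0$ and $\nabla H_h(P)\cdot v>0$, a contradiction. Therefore $v=0$, no pair of opposite nonzero vectors lies in $\mathbf{Q}_{(P,P^*)}$, and by the reduction in the first paragraph the cone contains no straight line, i.e.\ it is proper.
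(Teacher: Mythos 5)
Your proof is correct and follows essentially the same route as the paper: both arguments test a putative pair of opposite nonzero vectors $v,-v\in\mathbf{Q}_{(P,P^*)}$ against the entropy production formula (\ref{ENtropyProd}) for a strictly convex $h$ (the paper suggests $x\ln x$ or $(x-1)^2/2$, you use $x^2$, which is the same up to an affine term) and derive a contradiction from the fact that the production is strictly negative whenever the derivative vector is nonzero. You merely make explicit two steps the paper leaves implicit — the reduction of ``contains a line'' to pointedness via closedness of the cone, and the independence of $\nabla H_h(P)\cdot v$ from the particular coefficients $q_{ij}$ realizing $v$ — which is a useful filling-in of detail rather than a different method.
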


\begin{proof} To prove this proposition its is sufficient to analyze the
formula for entropy production (for example, in form
(\ref{ENtropyProd})) and mention that for strictly convex $h$ (for
example, for traditional $x\ln x$ or $(x-1)^2/2$) $\D H_h / \D t
=0$ if and only if $\D P / \D t =0$. If the cone
$\mathbf{Q}_{(P,P^*)}$ includes both vectors $x$ and $-x$ ($x\neq
0$ it means that there exist Markov chains with equilibrium $P^*$
and with opposite time derivatives at point $P$. Due to the
positivity of entropy production (\ref{ENtropyProd}) this is
impossible.
\end{proof}

The connection between the local Markov order and the Markov order
gives the following proposition, which immediately follows from
definitions.

\begin{proposition}\label{smoothcurve3}$P^0 \succ_{P^*} P^1$ if and only if there exists such
a continuous almost everywhere differentiable curve $P(t)$ in the
simplex of probability distribution that $P(0)=P^0$, $P(1)=P^1$
and for all $t\in [0,1]$, where $P(t)$ is differentiable,
\begin{equation}\label{MarkovInclusion}
\frac{\D P(t)}{\D t}  \in \mathbf{Q}_{(P(t),P^*)} \; \; \;
\;\;\;\square
\end{equation}
\end{proposition}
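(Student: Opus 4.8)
The statement links a global relation (the Markov order, defined as a closed transitive closure of the Markov preorder) with a local, pointwise differential inclusion, so the natural strategy is to prove the two implications separately, in each case reducing to a single Markov-preorder step and then propagating through concatenation and topological closure.

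\textbf{From the order to the curve.} First I would treat a single preorder step $P^0 \succ^0_{P^*} P^1$. By definition there is a choice of rate coefficients $q_{ij}\ge 0$ satisfying (\ref{MasterEquilibrium}) whose Kolmogorov solution (\ref{MAsterEq1}) runs from $P^0$ at $t=0$ to $P^1$ at $t=1$; this solution is the desired curve, and at every $t$ its velocity is by construction one of the vectors defining $\mathbf{Q}_{(P(t),P^*)}$, so the inclusion (\ref{MarkovInclusion}) holds everywhere. For the transitive closure I would concatenate finitely many such curves after reparametrizing each onto a subinterval of $[0,1]$; the concatenation is continuous and piecewise differentiable, and the inclusion survives on each piece. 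The remaining step is the topological closure: if $P^0 \succ_{P^*}P^1$ holds only as a limit of transitively reachable points $P^1_n\to P^1$, I would produce curves $P_n(\cdot)$ from $P^0$ to $P^1_n$ and extract a limit. Here I would parametrize the curves by a normalized arclength so that the family is equi-Lipschitz, apply the Arzel\`a--Ascoli theorem to pass to a uniformly convergent subsequence, and check that the limit curve still satisfies (\ref{MarkovInclusion}). This last check uses that $\mathbf{Q}_{(P,P^*)}$ is a closed convex cone depending upper-semicontinuously on $P$, so that a uniform limit of curves tangent to the cone field is again tangent to it.

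\textbf{From the curve to the order.} Given a curve $P(t)$ satisfying (\ref{MarkovInclusion}) almost everywhere, I would first arrange absolute continuity so that $P(1)-P(0)=\int_0^1 \dot P(t)\,\D t$ and a Riemann-sum approximation is legitimate. Partitioning $[0,1]$ into $N$ subintervals, on each subinterval the net displacement is close to $\dot P(t_k)\,\Delta t$, a vector in the cone $\mathbf{Q}_{(P(t_k),P^*)}$; since this cone is exactly the set of instantaneous Kolmogorov velocities at $P(t_k)$ for admissible rate coefficients, running the corresponding Markov chain for a short time realizes a genuine preorder step whose endpoint is within $o(\Delta t)$ of $P(t_{k+1})$. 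Chaining these steps gives $P^0 \succ^0_{P^*}\cdots\succ^0_{P^*}\widetilde P_N$ with $\widetilde P_N\to P^1$ as $N\to\infty$, and taking the closure yields $P^0\succ_{P^*}P^1$.

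\textbf{Main obstacle.} I expect the real work to lie in the two limiting arguments rather than in the single-step identifications, which are immediate from the definition of the cone. In the forward direction the difficulty is controlling the accumulated error of the polygonal realization while keeping each edge an exact preorder step, which is precisely why the definition takes the \emph{closed} transitive closure; in the reverse direction the subtlety is showing that the cone-tangency condition passes to uniform limits of curves, for which the closedness and upper semicontinuity of the map $P\mapsto\mathbf{Q}_{(P,P^*)}$ must be verified. Once these two facts are in place, the proposition does follow almost directly from the definitions, as claimed.
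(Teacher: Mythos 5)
The paper offers no argument at all for this proposition---it is asserted to ``immediately follow from definitions'' and closed with a $\square$---so your two-directional scheme (single preorder step $\Rightarrow$ Kolmogorov trajectory tangent to the cone; concatenation for transitivity; a limit argument for the closure; and a polygonal/Riemann-sum realization in the converse direction) is a genuine attempt to supply what the authors skip, and the single-step identifications are indeed correct and immediate from the definition of $\mathbf{Q}_{(P,P^*)}$. However, the limiting arguments, which you yourself flag as the real content, rest on a claim that is false for this cone field. You assert that $P\mapsto\mathbf{Q}_{(P,P^*)}$ is upper semicontinuous, so that a uniform limit of curves tangent to the cone field is again tangent to it. In fact, by (\ref{LocalOrderCone}) the cone is constant on each open compartment and \emph{strictly smaller} on the partial-equilibrium hyperplanes $p_i/p_i^*=p_j/p_j^*$: there the generator $\gamma^{ij}\,{\rm sign}(p_i/p_i^*-p_j/p_j^*)$ degenerates to zero, while at nearby points of either adjacent compartment the full ray along $+\gamma^{ij}$ (respectively $-\gamma^{ij}$) belongs to the cone, because the intensity $q_{ij}$ in (\ref{MasterEquilibrium}) may be taken arbitrarily large. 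Thus one can have $P_n\to P$, $v_n\in\mathbf{Q}_{(P_n,P^*)}$, $v_n\to\gamma^{ij}\notin\mathbf{Q}_{(P,P^*)}$ (e.g., for three states, $\gamma^{12}\notin{\rm cone}\{\gamma^{13},\gamma^{23}\}$ on the face $p_1/p_1^*=p_2/p_2^*>p_3/p_3^*$), so the closed-graph property you invoke fails exactly on the boundaries between compartments, which is where limit curves are most likely to accumulate.

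Two further steps need repair. First, your Arzel\`a--Ascoli extraction presumes the arclength-reparametrized concatenations are equi-Lipschitz, i.e., that the total arclengths are uniformly bounded; a single Kolmogorov trajectory on $[0,1]$ can already spiral (complex eigenvalues of the generator) and have arbitrarily large length, and the number of concatenated steps is unbounded, so this bound is not free and must either be proved or circumvented by replacing the approximating curves with shorter ones. Second, in the converse direction your chained preorder steps must issue from the perturbed points $\widetilde P_k$ rather than from $P(t_k)$, and because of the same discontinuity of the cone field across compartment boundaries the velocity $\dot P(t_k)\in\mathbf{Q}_{(P(t_k),P^*)}$ need not lie in $\mathbf{Q}_{(\widetilde P_k,P^*)}$; you need an additional argument (for instance, that the curve meets the boundary hyperplanes on a negligible set, or an explicit error-correction using the model field $P^*-P$, which lies in every cone) before the accumulated $o(\Delta t)$ errors can be absorbed by the topological closure in the definition of $\succ_{P^*}$. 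Without these repairs the proposal does not yet constitute a proof.
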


For our purposes, the following estimate of the Markov order
through the local Markov order \linebreak is important.

\begin{proposition}\label{Dominantlocal4}If $P^0 \succ_{P^*} P^1$ then $P^0
>_{(P^0,P^*)} P^1$, \emph{i.e.}, $P^1-P^0 \in \mathbf{Q}_{(P,P^*)}$.
\end{proposition}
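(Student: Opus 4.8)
The plan is to reduce the cone membership $P^1-P^0\in\mathbf{Q}_{(P^0,P^*)}$ to a family of scalar inequalities by polar duality, and then to prove each of them using the monotonicity of a single, carefully chosen Csisz\'ar--Morimoto function. Writing $x_i=p_i/p_i^*$ and evaluating the entropy production formula (\ref{ENtropyProd}) at $P^0$, one sees that for every convex $h$ and every $v\in\mathbf{Q}_{(P^0,P^*)}$ the inequality $\sum_i h'(x_i^0)v_i\le 0$ holds. Since $h'$ ranges over all nondecreasing functions, this identifies the polar of $\mathbf{Q}_{(P^0,P^*)}$ (inside the hyperplane $\sum_i v_i=0$) as the cone of vectors comonotone with $x^0$, whose extreme rays are the indicators of the upper level sets $S_c=\{i\,:\,x_i^0>c\}$. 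Detailed-balance processes on single edges realize the generators $e_j-e_i$ (for $x_i^0>x_j^0$), so these bounds are tight and $\mathbf{Q}_{(P^0,P^*)}=\{v\,:\,\sum_i v_i=0,\ \sum_{i\in S_c}v_i\le 0\ \text{for all }c\}$. By the bipolar theorem it therefore suffices to verify, for every threshold $c$, that $\sum_{i\in S_c}(p_i^1-p_i^0)\le 0$, the identity $\sum_i(p_i^1-p_i^0)=0$ being automatic.

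The central input is that, by Proposition~\ref{smoothcurve3}, $P^0\succ_{P^*}P^1$ provides a curve $P(t)$ with $\D P/\D t\in\mathbf{Q}_{(P(t),P^*)}$, so every $H_h(P(t)\|P^*)$ is nonincreasing by (\ref{ENtropyProd}); integrating gives $\sum_i p_i^* h(x_i^1)\le\sum_i p_i^* h(x_i^0)$ for all convex $h$. Fix $c$, set $m=\sum_{i\in S_c}p_i^*$, and apply this with the convex function $h_c(x)=(x-c)_+$. Because $S_c$ is \emph{exactly} the upper level set of $x^0$, the right-hand side collapses to $\sum_{i\in S_c}p_i^*(x_i^0-c)=\sum_{i\in S_c}p_i^0-cm$. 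On the other side, discarding the positive part and restricting the sum to $S_c$ only decreases the left-hand side, so it is bounded below by $\sum_{i\in S_c}p_i^*(x_i^1-c)=\sum_{i\in S_c}p_i^1-cm$. Chaining the two bounds and cancelling $cm$ yields $\sum_{i\in S_c}p_i^1\le\sum_{i\in S_c}p_i^0$, which is precisely the required inequality.

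The main obstacle is exactly the asymmetry that makes this last step work, and it explains why the naive argument fails. One is tempted to write $P^1-P^0=\int_0^1(\D P/\D t)\,\D t$ and conclude at once, since each $\D P/\D t$ lies in a Markov cone; but the local cones $\mathbf{Q}_{(P(t),P^*)}$ depend on the \emph{ordering} of the current ratios $x_i(t)$, and that ordering can change along the curve, so $P^1-P^0$ is an average of vectors drawn from different cones and need not lie in $\mathbf{Q}_{(P^0,P^*)}$ by any convexity argument. The resolution is to abandon the path and use the path-independent Lyapunov inequalities: the fixed set $S_c$ is the \emph{maximizer} of $\sum_{i\in T}p_i^0$ over sets of $P^*$-mass $m$ (giving equality for $P^0$), whereas for $P^1$ it is merely \emph{feasible} (giving the one-sided bound). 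Orienting this optimal-versus-feasible comparison correctly, and matching $h_c$ to the level set that defines the extreme ray of the polar cone, is the only delicate point; the remainder is the bookkeeping indicated above.
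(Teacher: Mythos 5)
Your overall strategy --- dualize the cone membership and then verify each supporting linear inequality by an endpoint comparison of a single, well-chosen Csisz\'ar--Morimoto function $h_c(x)=(x-c)_+$ --- is genuinely different from the paper's route, which obtains the proposition directly from the generator description of $\mathbf{Q}_{(P^0,P^*)}$ in Theorem \ref{Theorem2detbalsuff}. The second half of your argument (the ``optimal versus feasible'' comparison over upper level sets) is correct and clean. The gap is in the first half: the dual description $\mathbf{Q}_{(P^0,P^*)}=\{v \,:\, \sum_i v_i=0,\ \sum_{i\in S_c}v_i\le 0 \text{ for all } c\}$ is false whenever two ratios coincide, $p_i^0/p_i^{*}=p_j^0/p_j^{*}$ for some $i\neq j$. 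By Theorem \ref{Theorem2detbalsuff} and (\ref{LocalOrderCone}) the cone has \emph{no} generator joining tied states, so it is strictly smaller than the cone cut out by your level-set inequalities. Concretely, for $P^*=(1/3,1/3,1/3)$ and $P^0=(0.4,0.4,0.2)$ one has $\mathbf{Q}_{(P^0,P^*)}=\{v:\ v_1\le 0,\ v_2\le 0,\ v_1+v_2+v_3=0\}$, whereas your description imposes only $v_1+v_2\le 0$ and thus admits $(1,-2,1)$. The source of the error is the polarity step: the gradients $(h'(x_1^0),\dots,h'(x_n^0))$ with $h$ convex are necessarily constant on tied states, so they span only a proper subcone of the true polar $\{u:\ u_j\le u_i \text{ whenever } x_i^0>x_j^0\}$, which also contains indicators of upper sets that \emph{split} a tied level; taking the bipolar of the smaller cone yields a cone strictly larger than $\mathbf{Q}_{(P^0,P^*)}$.

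This is not a cosmetic omission. The missing inequalities, $\sum_{i\in S_c}(p_i^1-p_i^0)+\sum_{i\in A}(p_i^1-p_i^0)\le 0$ for $A$ a proper subset of a tied level, cannot be produced by comparing endpoint values of any $H_h$, since $H_h(P\|P^*)$ sees a state only through the pair $(p_i^*,\,p_i/p_i^*)$ and hence cannot separate tied states. Nor can the tie case be removed by perturbing $P^0$: the cone $\mathbf{Q}_{(P^0,P^*)}$ is discontinuous (it loses generators) exactly on the hyperplanes $p_i/p_i^*=p_j/p_j^*$ bounding the compartments, and these boundary points --- $P^0=P^*$ among them --- are precisely where the proposition has real content. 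As it stands, your argument proves the statement only for $P^0$ in the interior of a full-dimensional compartment; to finish, you need a separate argument for the split-level upper sets, for instance by tracking the sign pattern of the pairwise flows along the curve of Proposition \ref{smoothcurve3} rather than relying solely on endpoint Lyapunov inequalities.
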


This proposition follows from the characterization of the local
order and detailed description of the cone
$\mathbf{Q}_{(P(t),P^*)}$ (Theorem \ref{Theorem2detbalsuff}
below).

Let us recall that a convex pointed  cone is a convex envelope of its extreme rays. A ray
with directing vector $x$ is a set of points $\lambda x$ ($\lambda \geq 0$). We say that
$l$ is an extreme ray of $\mathbf{Q}$ if for any $u \in l$ and any $x,y \in \mathbf{Q}$,
whenever $u = (x + y)/2$, we must have $x,y\in l$. To characterize the extreme rays of
the cones of the local Markov order $\mathbf{Q}_{(P,P^*)}$ we need a graph representation
of the Markov chains. We use the notation $A_i$ for states (vertices), and designate
transition from state $A_i$ to state $A_j$ by an arrow (edge) $A_i \to A_j$. This
transition has its transition intensity $q_{ji}$ (the coefficient in the Kolmogorov
equation (\ref{MAsterEq0})).

\begin{lemma}Any extreme ray of the cone $\mathbf{Q}_{(P,P^*)}$
corresponds to a Markov process which transition graph is a simple
cycle $$A_{i_1} \to A_{i_2} \to \ldots A_{i_k} \to A_{i_1}$$ where
$k \leq n$, all the indices $i_1,\ldots i_k$ are different, and
transition intensities for a directing vector of such an extreme ray
$q_{i_{j+1} \ i_{j} }$ may be selected as $1/p_{i_j}^*$:
\begin{equation}\label{CycleRates}
q_{i_{j+1} \ i_{j} }=\frac{1}{p_{i_j}^*}
\end{equation}
(here we use the standard convention that for a cycle $q_{i_{k+1}
\ i_{k} }=q_{i_{1} \ i_{k} }$).
\end{lemma}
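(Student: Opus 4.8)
The plan is to reparametrize the cone by equilibrium fluxes instead of by the raw intensities $q_{ij}$, which converts the defining constraints into a classical network-flow problem. First I would set $x_i=p_i/p_i^*$ and introduce the flux $J_{ij}=q_{ij}p_j^*\ge 0$ on the edge $A_j\to A_i$; then the coordinates of a vector $v\in\mathbf{Q}_{(P,P^*)}$ read $v_i=\sum_{j\neq i}J_{ij}(x_j-x_i)$, and the equilibrium identity (\ref{MasterEquilibrium}) turns into Kirchhoff's balance law $\sum_j J_{ij}=\sum_j J_{ji}$ at each node $i$. The map $\mathcal L\colon J\mapsto v$ defined this way is \emph{linear}, so $\mathbf{Q}_{(P,P^*)}$ is exactly the linear image $\mathcal L(\mathcal C)$ of the \emph{circulation cone} $\mathcal C=\{J\ |\ J_{ij}\ge 0,\ \sum_j J_{ij}=\sum_j J_{ji}\ \forall i\}$.

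Next I would invoke the classical flow-decomposition theorem: a nonnegative circulation is a nonnegative combination of indicator fluxes $\mathbf 1_C$ of simple directed cycles $C$. The standard argument picks an edge carrying positive flux, follows outgoing edges (conservation guarantees one exists) until a vertex repeats, thereby extracting a simple cycle $C$, and then subtracts the largest admissible multiple of $\mathbf 1_C$, strictly reducing the support. Iterating gives $\mathcal C=\mathrm{cone}\{\mathbf 1_C\}$ over the finitely many simple cycles with $k\le n$ vertices. Applying $\mathcal L$ yields $\mathbf{Q}_{(P,P^*)}=\mathrm{cone}\{\mathcal L(\mathbf 1_C)\}$, a finitely generated cone whose generators are the velocity vectors of single cycles.

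To finish I would use the elementary fact that every extreme ray of a conically generated cone is spanned by one of its generators: writing a direction $u$ of an extreme ray as $u=\sum_C\lambda_C\mathcal L(\mathbf 1_C)$ with $\lambda_C\ge 0$, extremality forces each summand with $\lambda_C>0$ to be a nonnegative multiple of $u$, so $u\parallel\mathcal L(\mathbf 1_C)$ for some cycle. Pointedness of $\mathbf{Q}_{(P,P^*)}$ (Proposition \ref{propercone2}) guarantees these extreme rays are bona fide one-dimensional faces. The rate normalization then drops out: $\mathbf 1_C$ puts flux $1$ on each edge $A_{i_j}\to A_{i_{j+1}}$, i.e. $q_{i_{j+1}\,i_j}p_{i_j}^*=1$, which is precisely (\ref{CycleRates}), and balance is automatic since each vertex of a cycle has exactly one in- and one out-edge. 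The hard part is establishing the cycle decomposition cleanly in these coordinates and confirming that passing through the (possibly non-injective) map $\mathcal L$ introduces no extreme rays outside the cycle generators; both are dispatched by the generator fact above, so no analysis of $\ker\mathcal L$ is required.
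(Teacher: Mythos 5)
Your proof is correct, but it takes a genuinely different route from the paper's. The paper argues directly on the extreme rays: (a) if $u=(x+y)/2$ with $x,y$ in the cone, the supports of the transition schemes for $x$ and $y$ are contained in that for $u$; (b) a count of free variables in the balance identity (\ref{MasterEquilibrium}) shows that a scheme representing an extreme ray must have as many non-zero intensities as states; (c) the only such scheme with a positive equilibrium is a simple oriented cycle. You instead pass to equilibrium fluxes, identify $\mathbf{Q}_{(P,P^*)}$ as the linear image of the circulation cone, invoke the flow-decomposition theorem to exhibit the simple-cycle indicators as a finite generating set, and then use the standard fact that extreme rays of a finitely generated cone are spanned by generators. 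Your version buys rigor at two points where the paper is terse: the dimension count is replaced by an explicit decomposition, and the connectivity issue in step (c) (a disjoint union of cycles also has equally many vertices and edges and a positive circulation; the paper implicitly excludes it via the support argument) never arises, since flow decomposition automatically produces \emph{simple} cycles. It also delivers, as a byproduct, the generating-set description of the whole cone that the paper only obtains later in (\ref{LocalOrderCone}). The paper's version is shorter and stays in the original coordinates. Both correctly prove only the stated direction (every extreme ray comes from a simple cycle, not conversely), and your normalization of the cycle flux to one reproduces (\ref{CycleRates}) exactly.
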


\begin{proof} First of all, let us mention that if for
three vectors $x,y,u \in \mathbf{Q}_{(P,P^*)}$ we have $u = (x +
y)/2$ then the set of transitions with non-zero intensities for
corresponding Markov processes for $x$ and $y$ are included in this
set for $u$ (because negative intensities are impossible). Secondly,
just by calculation of the free variables in the equations
(\ref{MasterEquilibrium}) (with additional condition) we find that
the the amount of non-zero intensities for a transition scheme which
represents an extreme ray should be equal to the amount of states
included in the transition scheme. Finally, there is only one scheme
with $k$ vertices, $k$ edges and a positive equilibrium, a simple
oriented cycle.\end{proof}

\begin{theorem}\label{Theorem2detbalsuff}Any extreme ray of the cone $\mathbf{Q}_{(P,P^*)}$
corresponds to a Markov process whose transition graph is a simple
cycle of the length 2: $A_i \rightleftarrows A_j$. A transition
intensities $q_{ij}, \ q_{ji}$ for a directing vector of such an
extreme ray may be selected as
\begin{equation}\label{2Rates}
q_{ij} = \frac{1}{p^*_j} \ , \ \ q_{ji} = \frac{1}{p^*_ i}
\end{equation}
\end{theorem}

\begin{proof} Due to Lemma 2, it is sufficient to prove
that for any distribution $P$ the right hand side of the
Kolmogorov equation (\ref{MAsterEq1}) for a simple cycle with
transition intensities (\ref{CycleRates}) is a conic
combination (the combination with non-negative real
coefficients) of the right hand sides of this equation for
simple cycles of the length 2 at the same point $P$. Let us
prove this by induction. For the cycle length 2 it is trivially
true. Let this hold for the cycle lengths $2,\ldots n-1$. For a
cycle of length $n$, $A_{i_1} \to A_{i_2} \to \ldots A_{i_k}
\to A_{i_1}$, with transition intensities given by
(\ref{CycleRates}) the right hand side of the Kolmogorov
equation is the vector $v$ with coordinates
$$v_{i_j}=\frac{p_{i_{j-1}}}{p^*_{i_{j-1}}}
-\frac{p_{i_{j}}}{p^*_{i_{j}}}$$ (under the standard convention
regarding cyclic order). Other coordinates of $v$ are zeros. Let
us find the minimal value of ${p_{i_{j}}}/{p^*_{i_{j}}}$ and
rearrange the indices by a cyclic permutation to put this minimum
in the first place:
$$\min_j\left\{\frac{p_{i_{j}}}{p^*_{i_{j}}}\right\}=
\frac{p_{i_{1}}}{p^*_{i_{1}}} $$
 The vector $v$ is a sum of two vectors: a directing vector for
 the cycle $A_{i_2} \to \ldots A_{i_k} \to A_{i_2}$ of the length
 $n-1$ with transition intensities given by formula
 (\ref{CycleRates}) (under the standard convention
about the cyclic order for this cycle) and a vector
$$\frac{\frac{p_{i_n}}{p_{i_n}^*}-\frac{p_{i_1}}{p_{i_1}^*}}
{\frac{p_{i_2}}{p_{i_2}^*}-\frac{p_{i_1}}{p_{i_1}^*}} v^2$$ where
$v^2$ is the directing vector for a cycle of length 2,
$A_{i_1}\rightleftarrows A_{i_2}$ which can have only two \linebreak non-zero
coordinates: $$v^2_{i_1}=\frac{p_{i_{2}}}{p^*_{i_{2}}}
-\frac{p_{i_{1}}}{p^*_{i_{1}}}=-v^2_{i_2}$$ The coefficient in
front of $v^2$ is positive because ${p_{i_{1}}}/{p^*_{i_{1}}}$ is
the minimal value of ${p_{i_{j}}}{p^*_{i_{j}}}$. A case when
${p_{i_{1}}}/{p^*_{i_{1}}} ={p_{i_{2}}}/{p^*_{i_{2}}}$ does not
need special attention because it is equivalent to the shorter
cycle $A_{i_1} \to A_{i_3} \to \ldots A_{i_k} \to A_{i_1}$
($A_{i_2}$ could be omitted). A conic combination of conic
combinations is a conic combination again.\end{proof}

It is quite surprising that the local Markov order and, hence, the
Markov order also are generated by the reversible Markov chains
which satisfy the detailed balance principle. We did not include any
reversibility assumptions, and studied the general Markov chains.
Nevertheless, for the study of orders, the system of cycles of
length 2 all of which have the same equilibrium is sufficient.

\subsection{Combinatorics of Local Markov Order}

Let us describe the local Markov order in more detail. First of all,
we represent kinetics of the reversible Markov chains. For each pair
$A_i, A_j$ ($i\neq j$) we select an arbitrary order in the pair and
write the correspondent cycle of the length 2 in the form $A_i
\leftrightarrows A_j$. For this cycle we introduce the directing
vector $\gamma^{ij}$ with coordinates
\begin{equation}\label{gamma}
\gamma^{ij}_k=-\delta_{ik}+\delta_{jk}
\end{equation}
 where $\delta_{ik}$ is the
Kronecker delta. This vector has the $i$th coordinate $-1$, the
$j$th coordinate $1$ and other coordinates are zero. Vectors
$\gamma^{ij}$ are parallel to the edges of the standard simplex in
$R^n$. They are antisymmetric in their indexes:
$\gamma^{ij}=-\gamma^{ji}$.

We can rewrite the Kolmogorov equation in the form
\begin{equation}\label{MasterRevers}
\frac{\D P}{\D t}=\sum_{{\rm pairs}\ ij} \gamma^{ij} w_{ji}
\end{equation}
where $i\neq j$, each pair is included in the sum only once (in
the preselected order of $i,j$) and
 $$w_{ji}=r_{ji}\left(\frac{p_i}{p_i^*}-\frac{p_j}{p_j^*}\right)$$
The coefficient $r_{ji} \geq 0$ satisfies the detailed balance
principle: $$r_{ji}=q_{ji}p_i^*=q_{ij}p_j^*=r_{ij}$$

We use the three-value sign function:
\begin{equation}
{\rm sign} x=\left\{ \begin{array}{ll}
  -1, \ &{\rm if} \ x<0 ; \\
  0, \ &{\rm if} \ x=0 ; \\
  1,  \ &{\rm if} \ x>0
 \end{array} \right.
\end{equation}
With this function we can rewrite Equation (\ref{MasterRevers})
again as follows:
\begin{equation}\label{MasterReversSign}
\frac{\D P}{\D t}=\sum_{{\rm pairs}\ ij, \ r_{ji}\neq 0 }r_{ji}
\gamma^{ij}{\rm sign}
\left(\frac{p_i}{p_i^*}-\frac{p_j}{p_j^*}\right)
\left|\frac{p_i}{p_i^*}-\frac{p_j}{p_j^*}\right|
\end{equation}
The non-zero coefficients $r_{ji}$ may be arbitrary positive
numbers. Therefore, using Theorem \ref{Theorem2detbalsuff}, we
immediately find that the cone of the local Markov order at point
$P$ is
\begin{equation}\label{LocalOrderCone}
\mathbf{Q}_{(P,P^*)}={\rm cone}\left\{\gamma^{ij} {\rm sign}
\left.\left(\frac{p_i}{p_i^*}-\frac{p_j}{p_j^*}\right) \  \right| \
r_{ji}>0  \right\}
\end{equation}
where cone$\{\}$ stands for the conic hull.

The number ${\rm sign}
\left(\frac{p_i}{p_i^*}-\frac{p_j}{p_j^*}\right)$ is 1, when
$\frac{p_i}{p_i^*}>\frac{p_j}{p_j^*}$, $-1$, when
$\frac{p_i}{p_i^*}<\frac{p_j}{p_j^*}$ and 0, when
$\frac{p_i}{p_i^*}=\frac{p_j}{p_j^*}$. For a given $P^*$, the
standard simplex of distributions $P$ is divided by planes
$\frac{p_i}{p_i^*}=\frac{p_j}{p_j^*}$ into convex polyhedra where
functions ${\rm sign}
\left(\frac{p_i}{p_i^*}-\frac{p_j}{p_j^*}\right)$ are constant. In
these polyhedra the cone of the local Markov order
(\ref{LocalOrderCone}) $\mathbf{Q}_{(P,P^*)}$ is also constant. Let
us call these polyhedra {\it compartments}.

\begin{figure}[t]
\caption{\leftskip=1cm \rightskip=1cm \label{3stateCones}
Compartments $\mathcal{C}_{\sigma}$, corresponding cones
$\mathbf{Q}_{\sigma}$ (the angles) and all tableaus $\sigma$ for
the Markov chain with three states (the choice of equilibrium
($p_i^*=1/3$), does not affect combinatorics and topology of
tableaus, compartments and cones).}}
\centering{\includegraphics[width=130mm]{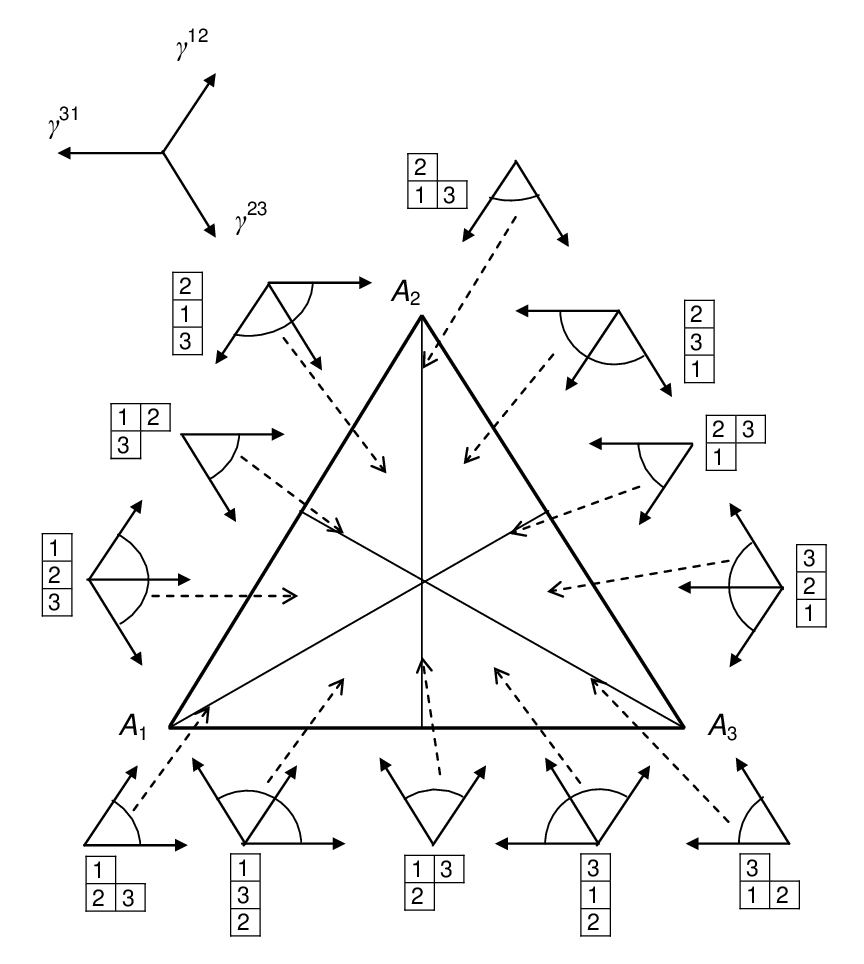}
\end{figure}

In Figure~\ref{3stateCones} we represent compartments and cones of
the local Markov order for the Markov chains with three states,
$A_{1,2,3}$. The reversible Markov chain consists of three
reversible transitions $A_1 \leftrightarrows A_2\leftrightarrows
A_3\leftrightarrows A_1$ with corresponding directing vectors
$\gamma^{12} = (-1, 1, 0)^\top$; $\gamma^{23} = (0,-1, 1)^\top$;
$\gamma^{31} = (1, 0,-1)^\top$. The topology of the partitioning of
the standard simplex into compartments and the possible values of
the cone $\mathbf{Q}_{(P,P^*)}$ do not depend on the position of the
equilibrium distribution $P^*$.

Let us describe all possible compartments and the correspondent
local Markov order cones. For every natural number $k \leq n-1$
the $k$-dimensional compartments are numerated by surjective
functions $\sigma :\{1,2,\ldots ,n\} \to \{1,2,\ldots ,k+1\}$.
Such a function defines the partial ordering of quantities
$\frac{p_j}{p_j^*}$ inside the compartment:
\begin{equation}
\frac{p_i}{p_i^*} > \frac{p_j}{p_j^*} \; \; {\rm if} \;\; \sigma(i)
< \sigma (j); \;\;\; \frac{p_i}{p_i^*} = \frac{p_j}{p_j^*} \; \;
{\rm if} \;\; \sigma(i) = \sigma (j)
\end{equation}

Let us use for the correspondent compartment notation
$\mathcal{C}_{\sigma}$ and for the Local Markov order cone
$Q_{\sigma}$. Let $k_i$ be a number of elements in preimage of $i$
($i=1,\ldots , k$): $k_i=|\{j\ | \ \sigma (j)=i \}|$. It is
convenient to represent surjection $\sigma$ as a tableau with $k$
rows and $k_i$ cells in the $i$th row filled by numbers from
$\{1,2,\ldots ,n\}$. First of all, let us draw diagram, that is a
finite collection of cells arranged in left-justified rows. The
$i$th row has $k_i$ cells. A tableau is obtained by filling cells
with numbers $\{1,2,\ldots ,n\}$. Preimages of $i$ are located in
the $i$th row. The entries in each row are increasing. (This is
convenient to avoid ambiguity of the representation of the
surjection $\sigma$ by the diagram.) Let us use for tableaus the
same notation as for the corresponding surjections.

Let a tableau $A$ have $k$ rows. We say that a tableau $B$ follows
$A$ (and use notation $A \to B$) if $B$ has $k-1$ rows and $B$ can
be produced from $A$ by joining of two neighboring rows in $A$ (with
ordering the numbers in the joined row). For the transitive closure
of the relation $\to$ we use notation $\Rrightarrow$.

\begin{proposition}\label{relabound5}$r \partial Q_{\sigma}=
\bigcup_{\sigma \Rrightarrow  \varsigma} Q_{\varsigma} \; \;
\;\;\; \; \square$
\end{proposition}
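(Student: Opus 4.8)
The plan is to treat $Q_\sigma$ as a closed convex polyhedral cone and to match its face lattice with the row-merging operation $\to$ on tableaux. By (\ref{LocalOrderCone}) the cone at a point $P\in\mathcal{C}_\sigma$ is generated by the finite family of vectors $\gamma^{ij}$, each taken with the sign fixed by the compartment: a generator $\gamma^{ij}$ oriented from the higher to the lower block appears for every pair $i,j$ lying in \emph{different} rows of $\sigma$, while a pair inside one row (where $p_i/p_i^*=p_j/p_j^*$) contributes the zero vector. Hence $Q_\sigma$ is polyhedral and its relative boundary is the union of its proper faces, so the whole statement reduces to showing that the proper faces of $Q_\sigma$ are \emph{exactly} the cones $Q_\varsigma$ with $\sigma\Rrightarrow\varsigma$.

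First I would pin down the extreme rays. By Lemma 2 and Theorem \ref{Theorem2detbalsuff} the extreme rays of any $\mathbf{Q}_{(P,P^*)}$ are the length-$2$ cycles, so for $Q_\sigma$ they are the vectors $\gamma^{ij}$ attached to pairs $i,j$ in \emph{neighbouring} rows of $\sigma$; a cross-row vector joining non-adjacent rows is a conic sum of adjacent ones, precisely as in the inductive step of the proof of Theorem \ref{Theorem2detbalsuff}. This gives $Q_\sigma$ an explicit generating set indexed by the consecutive pairs of rows, which is the combinatorial data I will read the faces off from.

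Next I would establish the correspondence in both directions. For $\bigcup_{\sigma\Rrightarrow\varsigma}Q_\varsigma\subseteq r\partial Q_\sigma$, I would exhibit, for each elementary merge $\sigma\to\varsigma$ joining two neighbouring rows, a linear functional that is nonnegative on all generators of $Q_\sigma$ and vanishes exactly on those surviving in $Q_\varsigma$: merging two neighbouring rows kills precisely the cross-row generators between them and keeps the rest, so the chosen functional is supporting and cuts out $Q_\varsigma$ as a proper face. Iterating along a chain $\sigma\Rrightarrow\varsigma$, and using that a face of a face is a face, yields $Q_\varsigma\subseteq r\partial Q_\sigma$. For the reverse inclusion I would argue that every proper face $F$ is the conic hull of the extreme rays it contains; the set of consecutive row-pairs whose rays lie in $F$ is, by the adjacency structure above, an admissible ``interval'' pattern realized by merging exactly the complementary neighbouring rows, producing a $\varsigma$ with $\sigma\Rrightarrow\varsigma$ and $F=Q_\varsigma$.

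The hard part will be this reverse inclusion: showing that every proper face is obtained by neighbouring-row merges and that no other supporting hyperplane produces a face outside the list. Here the sign degeneracies in (\ref{LocalOrderCone}) and the adjacency statement of Theorem \ref{Theorem2detbalsuff} must be combined carefully, so that dropping a generator forces an equality $p_i/p_i^*=p_j/p_j^*$ between \emph{adjacent} blocks rather than an inadmissible relation between non-adjacent ones. I would handle this by induction on the number of rows, as in Theorem \ref{Theorem2detbalsuff}: the coincidence $p_{i_1}/p^*_{i_1}=p_{i_2}/p^*_{i_2}$ at a block boundary is the base move, a conic combination of conic combinations again lands in some $Q_\varsigma$, and the case where two adjacent blocks already carry the same ratio reduces to a shorter tableau, exactly as the omitted-vertex case there.
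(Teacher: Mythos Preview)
The paper supplies no argument at all: Proposition~\ref{relabound5} is stated with a bare $\square$ and never revisited, so there is nothing to compare your approach against. What I can do is evaluate your sketch on its own terms, and here there is a genuine gap.

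Your key claim---``merging two neighbouring rows kills precisely the cross-row generators between them and keeps the rest, so the chosen functional is supporting and cuts out $Q_\varsigma$ as a proper face''---is not correct. Take $n=3$ and $\sigma$ the tableau with rows $\{1\},\{2\},\{3\}$. By (\ref{ConeDescr}), $Q_\sigma=\mathrm{cone}\{\gamma^{12},\gamma^{23}\}$, a two-dimensional angle in the plane $\sum v_i=0$; its relative boundary is the union of the two extreme rays $\mathbb{R}_{\ge 0}\gamma^{12}$ and $\mathbb{R}_{\ge 0}\gamma^{23}$. Merge rows $1$ and $2$ to get $\varsigma$ with rows $\{1,2\},\{3\}$. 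Again by (\ref{ConeDescr}), $Q_\varsigma=\mathrm{cone}\{\gamma^{13},\gamma^{23}\}$. This is a \emph{two-dimensional} subcone of $Q_\sigma$ (since $\gamma^{13}=\gamma^{12}+\gamma^{23}$ lies in the relative interior of $Q_\sigma$), and therefore $Q_\varsigma\not\subseteq r\partial Q_\sigma$. The point is that merging does not merely delete the generators between the two joined rows; it also \emph{creates} new adjacent-row generators (here $\gamma^{13}$, between the new first row $\{1,2\}$ and $\{3\}$), and these new generators sit strictly inside $Q_\sigma$, not on any supporting hyperplane.

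So the face-lattice matching you propose breaks at the very first step: $Q_\varsigma$ is in general \emph{not} a face of $Q_\sigma$. The same example shows that $\bigcup_{\sigma\Rrightarrow\varsigma}Q_\varsigma$ has full dimension in the span of $Q_\sigma$ and cannot equal $r\partial Q_\sigma$. Before attempting a proof you should re-examine the statement itself (for instance, whether the intended object is $\mathcal{C}_\sigma$ rather than $Q_\sigma$, in line with Proposition~\ref{ComartFaces} on lateral faces of compartments); as written with the definitions (\ref{LocalOrderCone}) and (\ref{ConeDescr}), the identity fails already for three states.
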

Here $r \partial U$ stands for the ``relative boundary" of a set $U$
in the minimal linear manifold which includes $U$.

The following Proposition characterizes the local order cone through
the surjection $\sigma$. It is sufficient to use in definition of
$Q_{\sigma}$ (\ref{LocalOrderCone}) vectors $\gamma^{ij}$
(\ref{gamma}) with $i$ and $j$ from the neighbor rows of the diagram
(see Figure~\ref{3stateCones}).

\begin{proposition}\label{extremerays6} For a given surjection $\sigma$ compartment
$\mathcal{C}_{\sigma}$ and cone $Q_{\sigma}$ have the following
description:
\begin{equation}\label{CompartDescr}
\mathcal{C}_{\sigma}=\left\{ P \ | \
\frac{p_i}{p^*_i}=\frac{p_j}{p^*_j} \;\; {\rm for} \;\;
\sigma(i)=\sigma(j) \;\;{\rm and} \;\;
\frac{p_i}{p^*_i}>\frac{p_j}{p^*_j}  \;\; {\rm for} \;\;
\sigma(j)=\sigma(i)+1 \right\}
\end{equation}
\begin{equation}\label{ConeDescr}
Q_{\sigma}={\rm cone}\{\gamma^{ij} \ | \ \sigma(j)=\sigma(i)+1 \}
\;\;\;\;\; \; \square
\end{equation}
\end{proposition}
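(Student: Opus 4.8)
The plan is to read both identities directly off the definition of the surjection $\sigma$ and off formula (\ref{LocalOrderCone}) for the local order cone, the entire content of the proposition being that \emph{consecutive} rows already suffice. The single algebraic ingredient I would use is the telescoping (cocycle) identity $\gamma^{ij}=\gamma^{im}+\gamma^{mj}$ for the edge vectors of (\ref{gamma}), which holds because the two $m$th coordinates $+1$ and $-1$ cancel while the $i$th and $j$th survive. The two combinatorial ingredients are that $\sigma$ is a surjection (so every row is nonempty and supplies at least one representative index) and that strict inequality of the ratios $p_i/p_i^*$ is transitive.

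For the cone description (\ref{ConeDescr}) I would begin from (\ref{LocalOrderCone}) taken with all $r_{ji}>0$, as in the paragraph preceding the proposition. For a point $P\in\mathcal{C}_{\sigma}$ the sign of $\frac{p_i}{p_i^*}-\frac{p_j}{p_j^*}$ is fixed by $\sigma$: it is $+1$ when $\sigma(i)<\sigma(j)$, is $-1$ when $\sigma(i)>\sigma(j)$, and vanishes when $\sigma(i)=\sigma(j)$. Using the antisymmetry $\gamma^{ij}=-\gamma^{ji}$ to absorb the sign, every nonzero generator $\gamma^{ij}\,{\rm sign}(\frac{p_i}{p_i^*}-\frac{p_j}{p_j^*})$ is of the form $\gamma^{ab}$ with $\sigma(a)<\sigma(b)$, and conversely each such $\gamma^{ab}$ arises; hence $Q_{\sigma}={\rm cone}\{\gamma^{ab}\mid \sigma(a)<\sigma(b)\}$. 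It then remains to match this with ${\rm cone}\{\gamma^{ij}\mid \sigma(j)=\sigma(i)+1\}$. The inclusion $\supseteq$ is immediate. For $\subseteq$, given $a,b$ with $\sigma(a)=c<d=\sigma(b)$, I would choose representatives $m_p$ with $\sigma(m_p)=p$ for $c\le p\le d$ and $m_c=a$, $m_d=b$ (possible by surjectivity), and telescope $\gamma^{ab}=\sum_{p=c}^{d-1}\gamma^{m_p m_{p+1}}$; each summand is a neighbouring-row vector with unit, hence nonnegative, coefficient, so $\gamma^{ab}$ lies in the right-hand cone.

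For the compartment description (\ref{CompartDescr}) I would show that the conditions listed there—equalities within each row and strict inequalities only between neighbouring rows—already force the full ordering that defines $\mathcal{C}_{\sigma}$. Indeed, if $\sigma(i)<\sigma(j)$ with a gap larger than one, I pick intermediate representatives $m_p$ as above and chain the neighbouring strict inequalities $\frac{p_{m_p}}{p^*_{m_p}}>\frac{p_{m_{p+1}}}{p^*_{m_{p+1}}}$ together with the within-row equalities; transitivity yields $\frac{p_i}{p_i^*}>\frac{p_j}{p_j^*}$. Conversely the conditions in (\ref{CompartDescr}) are plainly implied by membership in $\mathcal{C}_{\sigma}$, so the two sets coincide.

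The argument is entirely elementary, so there is no genuine obstacle, only bookkeeping. The two points requiring care are fixing the sign convention in $\gamma^{ij}\,{\rm sign}(\cdots)$ consistently, so that the surviving generators really point from the higher row toward the lower one, and invoking surjectivity of $\sigma$ at exactly the step where the intermediate representatives $m_p$ are produced. No step uses anything beyond (\ref{LocalOrderCone}), the antisymmetry of $\gamma^{ij}$, and transitivity of the order on the ratios $p_i/p_i^*$.
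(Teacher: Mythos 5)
Your proposal is correct, and it follows exactly the route the paper intends: the paper states Proposition~\ref{extremerays6} without proof (the $\square$ appears in the statement itself), offering only the remark that it suffices to use vectors $\gamma^{ij}$ with $i$ and $j$ from neighbouring rows. Your telescoping identity $\gamma^{ab}=\sum_{p}\gamma^{m_p m_{p+1}}$ with representatives supplied by surjectivity, together with transitivity of the strict order on the ratios $p_i/p_i^*$, is precisely the bookkeeping needed to justify that remark, and the sign analysis reducing (\ref{LocalOrderCone}) to ${\rm cone}\{\gamma^{ab}\mid \sigma(a)<\sigma(b)\}$ is handled correctly.
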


Compartment $\mathcal{C}_{\sigma}$ is defined by equalities
$\frac{p_i}{p^*_i}=\frac{p_j}{p^*_j}$ where $i,j$ belong to one row
of the tableau $\sigma$ and inequalities
$\frac{p_i}{p^*_i}>\frac{p_j}{p^*_j}$ where $j$ is situated in a row
one step down from $i$ in the tableau ($\sigma(j)=\sigma(i)+1$).
Cone $Q_{\sigma}$ is a conic hull of $\sum_{i=1}^{k-1} k_i k_{i+1}$
vectors $\gamma^{ij}$. For these vectors, $j$ is situated in a row
one step down from $i$ in the tableau.  Extreme rays of $Q_{\sigma}$
are products of the positive real half-line on vectors $\gamma^{ij}$
(\ref{ConeDescr}).

Each compartment has the {\em lateral faces} and the {\em base}. We
call the face a lateral face, if its closure includes the
equilibrium $P^*$. The base of the compartment belongs to a border
of the standard simplex of probability distributions.

To enumerate all the lateral faces of a $k$-dimensional compartment
$\mathcal{C}_{\sigma}$ of codimension $s$ (in
$\mathcal{C}_{\sigma}$) we have to take all subsets with $s$
elements in $\{1,2,\ldots ,k\}$. For any such a subset $J$ the
correspondent $k-s$-dimensional lateral face is given by additional
equalities $\frac{p_i}{p_i^*} = \frac{p_j}{p_j^*}$ for
$\sigma(j)=\sigma(i)+1$, $i\in J$.

\begin{proposition}\label{ComartFaces}All $k-s$-dimensional lateral faces of a
$k$-dimensional compartment $\mathcal{C}_{\sigma}$ are in bijective
correspondence with the $s$-element subsets $J\subset \{1,2,\ldots
,k\}$. For each $J$ the correspondent lateral face is given in
$\mathcal{C}_{\sigma}$ by equations
\begin{equation}
\frac{p_i}{p_i^*} = \frac{p_j}{p_j^*} \; \;{\rm for \, all}
\;\;i\in J \;\; {\rm and} \;\;\sigma(j)=\sigma(i)+1 \;\; \;\;\; \;
\square
\end{equation}
\end{proposition}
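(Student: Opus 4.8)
The plan is to coordinatize the compartment explicitly, recognize its closure as a $k$-simplex, and then read off the faces from the simplex structure, using the position of $P^*$ to separate the lateral faces from the base. First I would introduce level coordinates. By Proposition~\ref{extremerays6}, inside $\mathcal{C}_{\sigma}$ the ratio $p_i/p_i^*$ depends only on the row $\sigma(i)$, so I set $t_m$ to be the common value of $p_i/p_i^*$ over all $i$ with $\sigma(i)=m$. Then (\ref{CompartDescr}) becomes $t_1>t_2>\cdots>t_{k+1}$ with $t_{k+1}>0$, and since $p_i=t_{\sigma(i)}p_i^*$ the normalization $\sum_i p_i=1$ reads $\sum_{m=1}^{k+1}s_m t_m=1$, where $s_m=\sum_{\sigma(i)=m}p_i^*>0$ and $\sum_m s_m=1$. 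Thus $\mathcal{C}_{\sigma}$ is affinely isomorphic to the open region $\{t_1>\cdots>t_{k+1}>0\}$ inside the normalization hyperplane, which is $k$-dimensional; its closure is bounded (each $s_m>0$ forces $t_m\le 1/s_m$) and is cut out by the $k+1$ inequalities $t_1\ge t_2\ge\cdots\ge t_{k+1}\ge 0$, each non-redundant. A bounded $k$-polytope with exactly $k+1$ facets is a $k$-simplex, whose facets are the $k$ \emph{merge} facets $\{t_m=t_{m+1}\}$ ($m=1,\ldots,k$) and the single \emph{base} facet $\{t_{k+1}=0\}$.

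The crux is to locate $P^*$ within this simplex. At equilibrium $p_i=p_i^*$, so all ratios equal $1$, i.e.\ $t_1=\cdots=t_{k+1}=1$ (the common value is forced to be $1$ by $\sum_m s_m=1$). This is exactly the point where all $k$ merge facets are simultaneously tight, so $P^*$ is the vertex obtained by intersecting all merge facets; equivalently, $P^*$ lies on every merge facet but \emph{not} on the base facet, since $1\ne 0$. Hence the base facet is precisely the facet of the simplex opposite the vertex $P^*$.

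Now I would read off the lateral faces. In a simplex every face is the intersection of the facets containing it, and a face contains a prescribed vertex if and only if it avoids the facet opposite that vertex. Taking the vertex to be $P^*$, whose opposite facet is the base, the faces whose closure contains $P^*$ — that is, the lateral faces by definition — are exactly the intersections of subsets of the $k$ merge facets. For an $s$-element subset $J\subseteq\{1,\ldots,k\}$ this gives
$$F_J=\left\{P\in\overline{\mathcal{C}_{\sigma}}\ \middle|\ \tfrac{p_i}{p_i^*}=\tfrac{p_j}{p_j^*}\ \text{for}\ \sigma(i)=m,\ \sigma(j)=m+1,\ m\in J\right\},$$
which is precisely the system of equations displayed in the statement. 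Imposing these equalities partitions the $k+1$ levels into $(k+1)-s$ maximal runs of mutually equal levels, so $F_J$ has $(k+1)-s$ free ratio values subject to one normalization, hence dimension $k-s$; thus codimension equals $s=|J|$.

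Finally, the assignment $J\mapsto F_J$ is a bijection onto the lateral faces: surjectivity is the face-equals-intersection-of-its-facets fact just invoked, and injectivity holds because if $m\in J\setminus J'$ then the relative interior of $F_{J'}$ contains points with $t_m>t_{m+1}$ whereas every point of $F_J$ has $t_m=t_{m+1}$. I expect the only step needing genuine care to be the dimension bookkeeping under \emph{chained} merges — consecutive indices in $J$ collapse a whole run of levels at once — but one checks directly that the $s$ hyperplanes $t_m-t_{m+1}=0$ ($m\in J$) are linearly independent and independent of the normalization $\sum_m s_m t_m=1$ (whose coefficient vector has nonzero coordinate sum and so cannot lie in their span), which pins the codimension at exactly $s$ and completes the correspondence with $s$-element subsets $J$.
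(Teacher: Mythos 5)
Your argument is correct, and it follows the same route the paper implicitly takes: the paper states this proposition without proof but describes $\overline{\mathcal{C}_{\sigma}}$ as the $k$-simplex with vertices $P^*$ and $v_1,\ldots,v_k$, from which the lateral faces are exactly the faces containing $P^*$, i.e.\ intersections of the ``merge'' facets indexed by subsets $J\subset\{1,\ldots,k\}$. Your level coordinates $t_m$ and the codimension bookkeeping simply supply the details the paper leaves to the reader.
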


The 1-dimensional lateral faces (extreme rays) of compartment
$\mathcal{C}_{\sigma}$ are given by selection of one number from
$\{1,2,\ldots ,k\}$ (this number is the complement of $J$). For this
number $r$, the correspondent 1-dimensional face is a set
parameterized by a positive number $a\in ]1,a_r]$,
$a_r=1/\sum_{\sigma(i)\leq r} p^*_i$:
\begin{equation}\label{compartmentRays}
\begin{split}
\frac{p_i}{p_i^*} =a, \; \; {\rm for} \;\;
 \sigma(i) \leq r \, ; \;\;
 \frac{p_i}{p_i^*}=b, \; \; {\rm for} \;\;
 \sigma(i) > r\, ; \\
a>1>b\geq 0, \; a\sum_{i, \, \sigma(i)\leq r} p^*_i+b\sum_{i, \,
\sigma(i)> r}p^*_i=1
\end{split}
\end{equation}

The compartment $\mathcal{C}_{\sigma}$ is the interior of the
$k$-dimensional simplex with vertices $P^*$ and $v_r$ ($r=1,2,\ldots
k$). The vertex $v_r$ is the intersection of the correspondent
extreme ray (\ref{compartmentRays}) with the border of the standard
simplex of probability distributions: $P=v_r$ if
\begin{equation}\label{verticesC}
p_i=p_i^*a_r, \;\; {\rm for}\;\; \sigma(i)\leq r ; \;\;
p_i=0\;\;{\rm for}\;\; \sigma(i)> r
\end{equation}

The base of the compartment $\mathcal{C}_{\sigma}$ is a
$k-1$-dimensional simplex with vertices $v_r$ ($r=1,2,\ldots k$).

It is necessary to stress that we use the reversible Markov chains
for construction of the general Markov order due to Theorem
\ref{Theorem2detbalsuff}.

\section{The ``Most Random" and Conditionally Extreme Distributions}

\subsection{Conditionally Extreme Distributions in Markov Order
\label{OrderMaxEnt}}

The Markov order can be used to reduce the uncertainty in the
standard settings. Let the plane $L$ of the known values of some
moments be given: $u^i(P)=U_i$ on $L$. Assume also that the
``maximally disordered" distribution (equilibrium) $P^*$ is known
and we assume that the probability distribution is $P^*$ if there
is no restrictions. Then, the standard way to evaluate $P$ for
given moment conditions $u^i(P)=U_i$ is known: just to minimize
$H_{\ldots}(P \| P^*)$ under these conditions. For the Markov
order we also can define the {\it conditionally extreme points} on
$L$.

\begin{definition}Let $L$ be an affine subspace of $\mathbf{R}^n$, $\Sigma _n$
be a standard simplex in $\mathbf{R}^n$. A probability
distribution $P\in L \cap \Sigma _n$ is a conditionally extreme
point of the Markov order on $L$ if $$(P+\mathbf{Q}_{(P,P^*)})
\cap L = \{P\}$$
\end{definition}

It is useful to compare this definition to the condition of the
extremum of a differentiable function $H$ on $L$: ${\rm grad} H
\bot L$.

\begin{figure}[t]
 \caption{\leftskip=1cm \rightskip=1cm \label{CondExtr1coinside} If the moments
are just some of $p_i$ then all points of conditionally minimal
divergence are the same for all the main divergences and coincide
with the unique conditionally extreme point of the Markov order
(example for the Markov chain with three states, symmetric
equilibrium ($p_i^*=1/3$)) and the moment plane $p_2=$const.}}
\centering{\includegraphics[width=110mm]{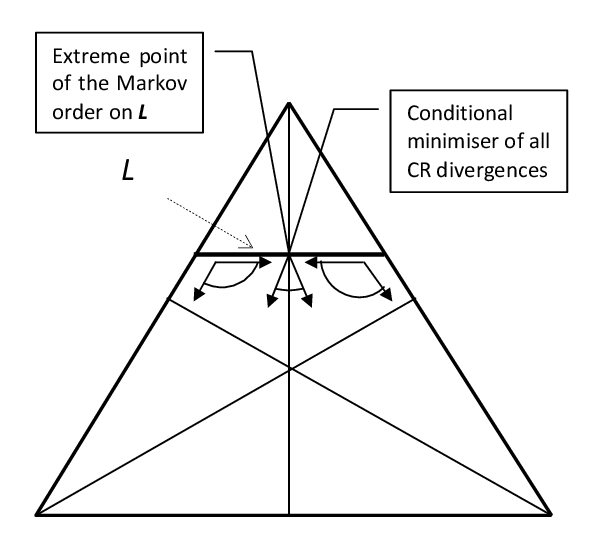}
\end{figure}

First of all, it is obvious that in the case when all the moments
$u^i(P)$ are just some of the values $p_i$, then there exists only
one extreme point of the Markov order on $L$, and this point is, at
the same time, the conditional minimum on $L$ of all
Csisz\'ar--Morimoto functions $H_h(P)$ (\ref{Morimoto}) (see, for
example, Figure~\ref{CondExtr1coinside}). This situation is
unstable, and for a small perturbation of $L$ the set of extreme
points of the Markov order on $L$ includes the intersection of $L$
with one of compartments (Figure~\ref{CondExtrdDifferent}a). For the
Markov chains with three states, each point of this intersection is
a conditional minimizer of one of the CR divergences (see
Fig.~\ref{CondExtrdDifferent}a). Such a situation persists for all
$L$ in general positions (Figure~\ref{CondExtrdDifferent}b). The
extreme points of the family $\beta D_{\mathrm{KL}}(P\|P^*) +
(1-\beta)D_{\mathrm{KL}}(P^*\|P)$ form an interval which is strictly
inside the interval of the extreme points of the Markov order on
$L$. For higher dimensions of $L \cap \Sigma _n$  the Markov order
on $L$ also includes the intersection of $L$ with some compartments,
however the conditional minimizers of the CR divergences form a
curve there, and extreme points of the family $\beta
D_{\mathrm{KL}}(P\|P^*) + (1-\beta)D_{\mathrm{KL}}(P^*\|P)$ on $L$
form another curve. These two curves intersect at two points
($\lambda=0,-1$), which correspond to the BGS and Burg relative
entropies.

\begin{figure}[h!]
 \caption{\leftskip=1cm \rightskip=1cm \label{CondExtrdDifferent}The set of
conditionally extreme points of the Markov order on the moment
plane in two general positions. For the main divergences the
points of conditionally minimal divergence are distributed in this
set. For several of the most important divergences these
minimizers are pointed out. In this simple example each extreme
point of the Markov order is at the same time a minimizer of one
of the $H_{\rm CR \ \lambda}$ ($\lambda \in ]-\infty, +\infty[$)
(examples for the Markov chain with three states, symmetric
equilibrium ($p_i^*=1/3$)).}}
\centering{a)\includegraphics[width=110mm]{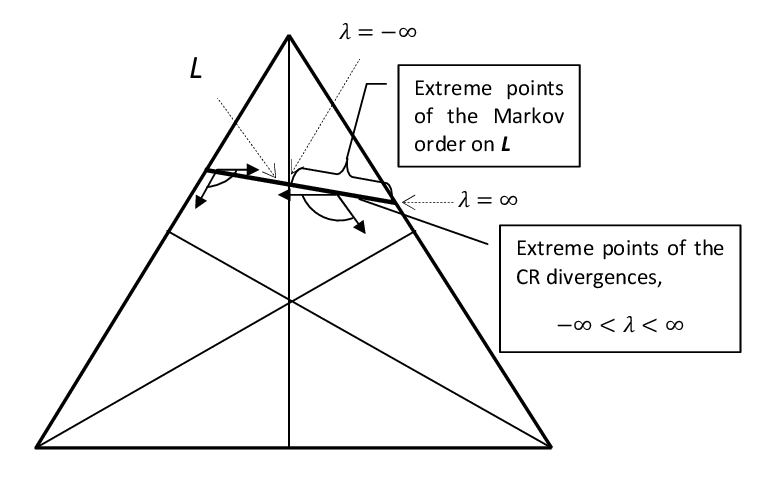}
b)\includegraphics[width=110mm]{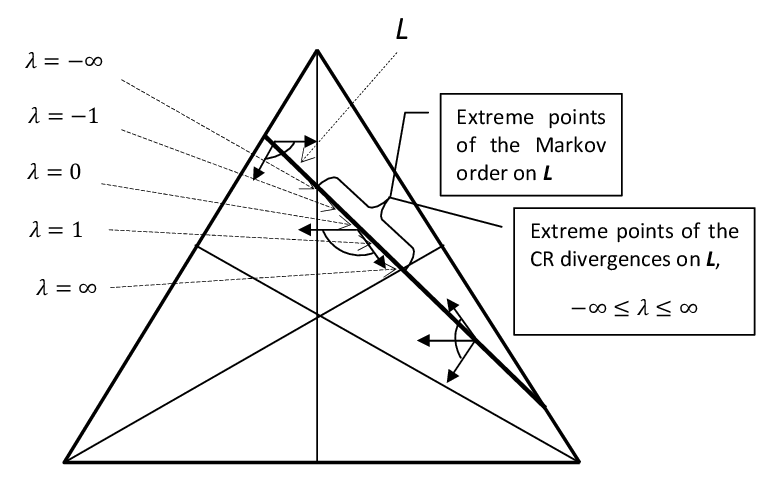}
\end{figure}

\newpage
\subsection{How to Find the Most Random Distributions? \label{CondExtrOrder}}

Let the plane $L$ of the known values of some moments be given:
$u^i(P)=\sum_j u^i_j p_j=U_i$ ($i=1,\ldots m$) on $L$. For a given
divergence $H(P \| P^*)$ we are looking for a conditional
minimizer $P$:
\begin{equation}\label{MaxEntProblem} H(P \| P^*) \to \min \ \
{\rm subject \ to} \ \ u^i(P)=U_i \\ (i=1,\ldots m)
\end{equation}
We can assume that $H(P \| P^*)$ is convex. Moreover, usually it
is one of the Csisz\'ar--Morimoto \newline functions (\ref{Morimoto}). This
is very convenient for numerical minimization because the matrix
of second derivatives is diagonal. Let us introduce the Lagrange
multipliers $\mu_i$ ($i=1,\ldots m$) and write the system of
equations ($\mu_0$ is the Lagrange multiplier for the total
probability identity $\sum_j p_j=1$ :
\begin{equation}\label{LagrMultEq}
\begin{split}
&\frac{\partial H}{\partial p_j}=\mu_0+ \sum_{i=1}^m \mu_i u^i_j \
; \\
 &\sum_{j=1}^n u^i_j p_j=U_i \ ; \\
 &\sum_{j=1}^n p_j = 1
\end{split}
\end{equation}
Here we have $n+m+1$ equations for $n+m+1$ unknown variables
($p_j$, $\mu_i$, $\mu_0$).

Usually $H$ is a convex function with a diagonal matrix of second
variables and the method of choice for solution of this equation
(\ref{LagrMultEq}) is the Newton method. On the $l+1$st iteration to
find $P^{l+1}=P^l+\Delta P$ we have to solve the following system of
linear equations
\begin{equation}\label{NewtonStep}
\begin{split}
& \sum_{s=1}^n \left. \frac{\partial^2 H}{\partial p_j \partial
p_s }\right|_{P=P^l} \Delta p_s =\mu_0+ \sum_{i=1}^m \mu_i u^i_j -
\left. \frac{\partial H}{\partial p_j}\right|_{P=P^l} \ ;
\\
 &\sum_{j=1}^n u^i_j \Delta p_j=0 \ ; \\
 &\sum_{j=1}^n \Delta p_j = 0
\end{split}
\end{equation}
For a diagonal matrix of the second derivatives the first $n$
equations can be explicitly resolved. If for the solution of this
system (\ref{NewtonStep}) the positivity condition $p_j^l + \Delta
p_j > 0$ does not hold (for some of $j$) then we should decrease
the step, for example by multiplication $\Delta P := \theta \Delta
P$, where $$0< \theta <  \min_{p_i^l+\Delta p_i
<0}\frac{p_i^l}{|\Delta p_i |}$$

For initial approximation we can take any positive normalized
distribution which satisfies the conditions $u^i(P)=U_i$
($i=1,\ldots m$).

For the Markov orders the set of conditionally extreme
distributions consists of intersections of $L$ with compartments.

Here we find this set for one moment condition of the form
$u(P)=\sum_j u_j p_j=U$. First of all, assume that $U \neq U^*$,
where $U^*=u(P^*)=\sum_j u_j p^*_j$ (if $U = U^*$ then equilibrium
is the single conditionally extreme distribution). In this case,
the set of conditionally extreme distributions is the intersection
of the condition hyperplane with the closure of one compartment
and can be described by the following system of equations and
inequalities (under standard requirements $p_i \geq0$, $\sum_i p_i
=1$ ):
\begin{equation}\label{1condExtrPoints}
\begin{split}
&\sum_j u_j p_j=U; \\
  &\frac{p_i}{p_i^*}\geq \frac{p_j}{p_j^*}\;\; {\rm if} \;\; u_i(U-U^*)\geq u_j(U-U^*)
\end{split}
\end{equation}
(hence, $ \frac{p_i}{p_i^*}= \frac{p_j}{p_j^*}$  if $u_i=u_j$).

To find this solution it is sufficient to study dynamics of $u(P)$
due to equations (\ref{MasterRevers}) and to compare it with
dynamics of $u(P)$ due to a model system $\dot{P}=P^*-P$. This
model system is also a Markov chain and, therefore, $P^*-P\in
\mathbf{Q}_{(P,P^*)}$. Equations and inequalities
(\ref{1condExtrPoints}) mean that the set of conditionally extreme
distributions is the intersection of the condition hyperplane with
the closure of compartment $\mathcal{C}$. In $\mathcal{C}$,
numbers $\frac{p_i}{p_i^*}$ have the same order on the real line
as numbers $u_i(U-U^*)$ have, these two tuples of numbers
correspond to the same tableau $\sigma$ and $\mathcal{C}=
\mathcal{C}_{\sigma}$.

For several linearly independent conditions there exists a
condition plane $L$:
\begin{equation}\label{CondSpace}
u^i(P)=\sum_j u^i_j p_j=U_i \;\;(i=1,\ldots m)
\end{equation}
Let us introduce the $m$-dimensional space $T$ with coordinates
$u^i$. Operator $u(P)=(u^i(P))$ maps the distribution space into
$T$ and the affine manifold $L$ (\ref{CondSpace}) maps into a
point with coordinates $u^i=U_i$.

If $P^*\in L$ then the problem is trivial and the only extreme
distribution of the Markov order on  $L$ is $P^*$. Let us assume
that $P^*\notin L$.

For each distribution $P\in L$ we can study the possible direction
of motions of projection distributions onto $T$ due to the Markov
processes.

First of all, let us mention that if $u(\gamma^{ij})=0$ then the
transitions $A_i \leftrightarrows A_j$ move the distribution along
$L$. Hence, for any conditionally extreme distribution $P\in L$
this transition $A_i \leftrightarrows A_j$ should be in
equilibrium and the partial equilibrium condition holds:
$\frac{p_i}{p_i^*} = \frac{p_j}{p_j^*}$.

Let us consider processes with $u(\gamma^{ij}) \neq 0$. If there
exists a convex combination (\ref{MasterReversSign}) of vectors
$u(\gamma^{ij}){\rm sign}
\left(\frac{p_i}{p_i^*}-\frac{p_j}{p_j^*}\right)$ ($u(\gamma^{ij})
\neq 0$) that is equal to zero then $P$ cannot be an extreme
distribution of the Markov order on $L$.

These two conditions for vectors $\gamma^{ij}$ with
$u(\gamma^{ij})=0$ and for the set of vectors with non-zero
projection on the condition space define the extreme distributions
of the Markov order on the condition plane $L$ for several
conditions.

\section{Generalized Canonical Distribution \label{GCD}}

\subsection{Reference Distributions for Main Divergences}

A system with equilibrium $P^*$ is given and expected values of
some variables $u_j(P)=U_j$ are known. We need to find a
distribution $P$ with these values $u_j(P)=U_j$ and is ``the
closest" to the equilibrium distribution under this condition.

This distribution parameterized through expectation values is
often called the {\em reference distribution} or {\em generalized
canonical distribution}. After Gibbs and Jaynes, the standard
statement of this problem is an optimization problem: $$H(P\|P^*)
\to \min, \;\; u_j(P)=U_j$$ for appropriate divergence
$H(P\|P^*)$. If the number of conditions is $m$ then this
optimization problem can be often transformed into $m+1$ equations
with $m+1$ unknown Lagrange multipliers.

In this section, we study the problem of the generalized canonical
distributions for single condition $u(P)=\sum_{i=1}^n u_i p_i=U$,
$U\neq U^*$.

For the Csisz\'ar--Morimoto functions $H_h(P\|P^*)$
\begin{equation}
\frac{\partial H_h}{\partial p_i}=h'\left(\frac{p_i}{p_i^*}\right)
\end{equation}
We assume that the function $h'(x)$ has an inverse function $g$:
$g(h'(x))=x$ for any $x\in ]0,\infty[$. The method of Lagrange
multipliers gives for the generalized canonical distribution:

\begin{equation}
\frac{\partial H_h}{\partial p_i}=\mu_0\frac{\partial
(\sum_{j=1}^n p_j )}{\partial p_i}+\mu \frac{\partial U}{\partial
p_i}\, , \, h'\left(\frac{p_i}{p_i^*}\right)=\mu_0+\mu u_i, \,
\sum_{i=1}^n p_i=1, \, \sum_{i=1}^n p_i u_i =U
\end{equation}
As a result, we get the final expression for the distribution
$$p_i=p^*_i g(\mu_0+u_i \mu)$$ and equations for Lagrange
multipliers $\mu_0$ and $\mu$:
\begin{equation}\label{LagrangeEq}
\sum_{i=1}^n p^*_i g(\mu_0+u_i \mu)=1, \, \sum_{i=1}^n p^*_i
g(\mu_0+u_i \mu) u_i =U
\end{equation}
If the image of $h'(x)$ is the whole real line
($h'(]0,\infty[)=R$) then for any real number $y$ the value
$g(y)\geq 0$ is defined and there exist no problems about
positivity of $p_i$ due to (\ref{LagrangeEq}).

For the BGS relative entropy $h'(x)=\ln x$ (we use the normalized
$h(x)=x \ln x - (x-1)$ (\ref{normalization})). Therefore,
$g(x)=\exp x$ and for the generalized canonical distribution we
get
\begin{equation}\label{BGScanonical}
p_i=p^*_i {\rm e}^{\mu_0}{\rm e}^{u_i \mu}, \, {\rm
e}^{-\mu_0}=\sum_{i=1}^n p^*_i {\rm e}^{u_i \mu}, \,`\sum_{i=1}^n
p^*_i u_i  {\rm e}^{u_i \mu}=U \sum_{i=1}^n p^*_i {\rm e}^{u_i
\mu}
\end{equation}
As a result, we get one equation for $\mu$ and an explicit
expression for $\mu_0$ through $\mu$.

These $\mu_0$ and $\mu$ have the opposite sign comparing to
(\ref{boltzmannDistribution}) just because the formal difference
between the entropy maximization and the relative entropy
minimization. Equation (\ref{BGScanonical}) is essentially the
same as (\ref{boltzmannDistribution}).

For the Burg entropy $h'(x)=-\frac{1}{x}$, $g(x)=-\frac{1}{x}$ too
and
\begin{equation}\label{BurgDistr}
p_i=-\frac{p_i^*}{\mu_0+u_i \mu}
\end{equation}
 For the Lagrange multipliers $\mu_0, \mu$ we have a system of two
 algebraic equations
\begin{equation}\label{Burgcanonical}
\sum_{i=1}^n\frac{p_i^*}{\mu_0+u_i \mu}=-1, \,
\sum_{i=1}^n\frac{p_i^*u_i}{\mu_0+u_i \mu}=-U
\end{equation}

For the convex combination of the BGS and Burg entropies
$h'(x)=\beta \ln x - \frac{1-\beta}{x}$ ($0<\beta<1$), and the
function $x=g(y)$ is a solution of a transcendent equation
\begin{equation}\label{CCinverse}
\beta \ln x - \frac{1-\beta}{x}=y
\end{equation}
Such a solution exists for all real $y$ because this $h'(x)$ is a
(monotonic) bijection of $]0,\infty[$ on the real line.

Solution to Equation (\ref{CCinverse}) can be represented through a
special function, the Lambert function \cite{Lambert}. This function
is a solution to the transcendent equation $$w{\rm e}^ w=z$$ and is
also known as $W$ function, $\Omega$ function or modified logarithm
${\rm lm}z$ \cite{ENTR2}. Below we use the main branch $w={\rm lm}z$
for which ${\rm lm}z>0$ if $z>0$ and ${\rm lm}0=0$. Let us write
(\ref{CCinverse}) in the form
\begin{equation}\label{CCinverse1}
\ln x - \frac{\delta}{x}=-\Lambda
\end{equation}
where $\delta=(1-\beta)/\beta$, $\Lambda=-y/\beta$. Then $$ x={\rm
e}^{-\Lambda}{\rm e}^{{\rm lm}(\delta {\rm e}^{\Lambda}) }$$
Another equivalent representation of the solution gives
$$x=\frac{\delta}{{\rm lm}(\delta {\rm e}^{\Lambda})}$$ Indeed,
let us take $z=\delta/x$ and calculate exponent of both sides of
(\ref{CCinverse1}). After simple transformations, we obtain $z{\rm
e}^z=\delta {\rm e}^{\Lambda}$.

The identity ${\rm lm}a=\ln a -\ln {\rm lm}a$ is convenient for
algebraic operations with this function. Many other important
properties are collected in \cite{Lambert}.

The generalized canonical distribution for the convex combination
of the BGS and Burg divergence is \cite{ENTR2}
\begin{equation}\label{CCgcd}
p_i=p_i^* {\rm e}^{-\Lambda_i}{\rm e}^{{\rm lm}(\delta {\rm
e}^{\Lambda_i}) }=\frac{\delta p^*_i}{{\rm lm}(\delta {\rm
e}^{\Lambda_i})}
\end{equation}
where $\Lambda_i=-\frac{1}{\beta}(\mu_0+u_i \mu)$,
$\delta=(1-\beta)/\beta$ and equations (\ref{LagrangeEq}) hold for
the Lagrange multipliers.

For small $1-\beta$ (small addition of the Burg entropy to the BGS
entropy) we have $$ p_i=p_i^* \left({\rm
e}^{-\Lambda_i}+\frac{1-\beta}{\beta}-\frac{(1-\beta)^2}{2\beta^2}{\rm
e}^{\Lambda_i}\right) + o((1-\beta)^2) $$ For the CR family
$h(x)=\frac{x(x^{\lambda}-1)}{\lambda (\lambda+1)}$,
$h'(x)=\frac{(\lambda+1) x^{\lambda}-1}{\lambda (\lambda+1)}$,
$g(x)=(\frac{\lambda (\lambda+1)x
+1}{(\lambda+1)})^{\frac{1}{\lambda}}$ and

\begin{equation}\label{CRcanondistr}
p_i=p_i^*\left(\frac{\lambda (\lambda+1)(\mu_0+u_i \mu)
+1}{(\lambda+1)}\right)^{\frac{1}{\lambda}}
\end{equation}

For $\lambda = 1$ (a quadratic divergence) we easily get linear
equations and explicit solutions for $\mu_0$ and $\mu$. If $\lambda
=\frac{1}{2}$ then equations for the Lagrange multipliers
(\ref{LagrangeEq}) become quadratic and also allow explicit
solution. The same is true for $\lambda =\frac{1}{3}$ and
$\frac{1}{4}$ but explicit solutions to the correspondent cubic or
quartic equations are too cumbersome.

We studied the generalized canonical distributions for one
condition $u(P)=U$ and main families of entropies.  For the BGS
entropy, the method of Lagrange multipliers gives one transcendent
equation for the multiplier $\mu_1$ and explicit expression for
$\mu_0$ as a function of $\mu_1$ (\ref{BGScanonical}). In general,
for functions $H_h$, the method gives a system of two equations
(\ref{LagrangeEq}). For the Burg entropy this is a system of
algebraic equation (\ref{Burgcanonical}). For a convex combination
of the BGS and the Burg entropies the expression for generalized
canonical distribution function includes the special Lambert
function (\ref{CCgcd}). For the CR family the generalized
canonical distribution is presented by formula
(\ref{CRcanondistr}). for several values of $\lambda$ it can be
represented in explicit form. The Tsallis entropy family is a
subset of the CR family (up to constant multipliers).

\subsection{Polyhedron of Generalized Canonical Distributions for the Markov Order}

The set of the most random distributions with respect to the Markov
order under given condition consists of those distributions which
may be achieved by randomization  which has the given equilibrium
distribution and does not violate the condition.

In the previous section, this set was characterized for a single
condition $\sum_i p_i u_i = U$, $U\neq U^*$  by a system of
inequalities and equations (\ref{1condExtrPoints}). It is a
polyhedron that is an intersection of the closure of one compartment
with the hyperplane of condition. Here we construct the dual
description of this polyhedron as a convex envelope of the set of
extreme points (vertices).

The Krein--Milman theorem gives  general backgrounds of such a
representation of convex compact sets in locally convex
topological vector spaces \cite{EdwardsKreinMilman1995}: a compact
convex set is the closed convex hull of its extreme points. (An
extreme point of a convex set $K$ is a point $x\in K$ which cannot
be represented as an average $x=\frac{1}{2}(y+z)$ for $y,z \in K$,
$y,z\neq x$.)

Let us assume that there are $k+1\leq n$ different numbers in the
set of numbers $u_i (U-U^*)$. There exists the unique surjection
$\sigma: \{1,2,\ldots n\} \to \{1,2,\ldots k+1\}$ with the
following properties: $\sigma(i) < \sigma(j)$ if and only if $u_i
(U-U^*) >u_j (U-U^*)$ (hence, $\sigma(i) = \sigma(j)$ if and only
if $u_i (U-U^*) =u_j (U-U^*)$). The polyhedron of generalized
canonical distributions is the intersection of the condition plane
$\sum_i p_i u_i = U$ with the closure of $\mathcal{C}_{\sigma}$.

This closure is a simplex with vertices $P^*$ and $v_r$
($r=1,2,\ldots k$) (\ref{verticesC}). The vertices of the
intersection of this simplex with the condition hyperplane belong to
edges of the simplex, hence we can easily find all of them: the edge
$[x,y]$ has nonempty intersection with the condition hyperplane if
either $u(x)\geq U \& u(y)\leq U$ or $u(x)\leq U \& u(y)\geq U$.
This intersection is a single point $P$ if $u(x)\neq u(y)$:
\begin{equation}\label{vertexEdge}
P= \lambda x +(1-\lambda) y, \;\; \lambda=\frac{u(y)-U}{u(y)-u(x)}
\end{equation}
If $u(x)= u(y)$ then the intersection is the whole edge, and the
vertices are $x$ and $y$.

For example, if $U$ is sufficiently close to $U^*$ then the
intersection is a simplex with $k$ vertices $w_r$ ($r=1,2,\ldots
k$). Each $w_r$ is the intersection of the edge $[P^*,v_r]$ with
the condition hyperplane.

Let us find these vertices explicitly. We have a system of two
equations
\begin{equation}
\begin{split}
& a \sum_{i, \, \sigma(i)\leq r} p^*_i+b\sum_{i, \, \sigma(i)>
r}p^*_i=1 \, ; \\
 & a\sum_{i, \, \sigma(i)\leq r} u_i
p^*_i+b\sum_{i, \, \sigma(i)> r} u_ip^*_i=U
\end{split}
\end{equation}
Position of the vertex $w_r$ on the  edge $[P^*,v_r]$  is given by
the following expressions
\begin{equation}\label{verticesGCD}
\begin{split}
&\frac{p_i}{p_i^*} =a, \; \; {\rm for} \;\;
 \sigma(i) \leq r \, ; \;\;
 \frac{p_i}{p_i^*}=b, \; \; {\rm for} \;\;
 \sigma(i) > r \\
 & a=1+\frac{(U-U^*)\sum_{i, \, \sigma(i)> r}p^*_i}{\sum_{i, \,
\sigma(i)> r}p^*_i\sum_{i, \, \sigma(i)\leq r} u_i p^*_i-\sum_{i,
\, \sigma(i)\leq r} p^*_i \sum_{i, \, \sigma(i)> r} u_ip^*_i } \\
 & b=1-\frac{(U-U^*)\sum_{i, \, \sigma(i)\leq r} p^*_i }{\sum_{i, \,
\sigma(i)> r}p^*_i\sum_{i, \, \sigma(i)\leq r} u_i p^*_i-\sum_{i,
\, \sigma(i)\leq r} p^*_i \sum_{i, \, \sigma(i)> r} u_ip^*_i }
\end{split}
\end{equation}
If $b \geq 0$ for all $r$ then the polyhedron of generalized
canonical distributions is a simplex with vertices $w_r$. If the
solution becomes negative for some $r$ then the set of vertices
changes qualitatively and some of them belong to the base of
$\mathcal{C}_{\sigma}$. For example, in
Figure~\ref{CondExtrdDifferent}a the interval of the generalized
canonical distribution (1D polyhedron) has vertices of two types:
one belongs to the lateral face, another is situated on the
basement of the compartment. In Figure~\ref{CondExtrdDifferent}b
both vertices belong to the lateral faces.

Vertices $w_r$ on the edges $[P^*,v_r]$ have very special
structure: the ratio $p_i/p_i^*$ can take for them only two
values, it is either  $a$ or $b$.

Another form for representation of vertices $w_r$
(\ref{verticesGCD}) can be found as follows. $w_r$ belongs to the
edge $[P^*,v_r]$, hence, $w_r=\lambda P^*+(1-\lambda)v_r$ for some
$\lambda\in [0,1]$. Equation for the value of $\lambda$ follows
from the condition $u(w_r)=U$: $\lambda U^*+(1-\lambda)u(v_r)=U$.
Hence, we can use (\ref{vertexEdge}) with $x=P^*$, $y=v_r$.

For sufficiently large value of $U-U^*$ for some of these vertices
$b$ loses positivity, and instead of them the vertices on edges
$[v_r,v_q]$ (\ref{verticesC}) appear.

There exists a vertex on the edge $[v_r,v_q]$ if either
$u(v_r)\geq U \& u(v_q)\leq U$ or $u(v_r)\leq U \& u(v_q)\geq U$.
If $u(v_r)\neq u(v_q)$ then his vertex has the form $P=\lambda v_r
+ (1-\lambda) v_q$ and for $\lambda$ the condition $u(P)=U$ gives
(\ref{vertexEdge}) with $x=v_r$, $y=v_q$. If $u(v_r)= u(v_q)$ then
the edge $[u(v_r),u(v_q)]$ belongs to the condition plane and the
extreme distributions are $u(v_r)$ $u(v_q)$.

For each of $v_r$ the ratio $p_i/p_i^*$ can take only two values:
$a_r$ or 0. Without loss of generality we can assume that $q>r$.
For a convex combination $\lambda v_r + (1-\lambda) v_q$
($1>\lambda>0$) the ratio $p_i/p_i^*$ can take three values:
$\lambda a_r + (1-\lambda) a_q$ (for $ \sigma(i) \leq r $),
$(1-\lambda) a_q$ (for $r < \sigma(i) \leq q$)  and 0 (for
$\sigma(i) > q$).

The case when a vertex is one of the $v_r$ is also possible. In
this case, there are two possible values of $p_i/p^*_i$, it is
either $a_r$ or $0$.

All the generalized canonical distributions from the polyhedron
are convex combinations of its extreme points (vertices). If the
set of vertices is $\{w_r\}$, then for any generalized canonical
distributions $P=\sum \lambda_i w_i$ ($\lambda_i\geq 0$, $\sum_i
\lambda_i=1$). The vertices can be found explicitly. Explicit
formulas for the extreme generalized canonical distributions are
given in this section: (\ref{verticesGCD}) and various
applications of (\ref{vertexEdge}). These formulas are based on
the description of compartment $\mathcal{C}_{\sigma}$ given in
Proposition~\ref{ComartFaces} and Equation~(\ref{verticesC}).

\section{History of the Markov Order}
\subsection{Continuous Time Kinetics \label{SecHistory}}

We have to discuss the history of the Markov order in the wider
context of orders, with respect to which the solutions of kinetic
equations change monotonically in time. The Markov order is a nice
and constructive example of such an order and  at the same time the
prototype of all of them (similarly the Master Equation is a simple
example of kinetic equations and, at the same time, the prototype of
all kinetic equations).

The idea of orders and attainable domains (the lower cones of
these orders) in phase space was developed in many applications:
from biological kinetics to chemical kinetics and engineering. A
kinetic model includes information of various levels of detail and
of variable reliability. Several types of building block are used
to construct a kinetic model. The system of these building blocks
can be described, for example, as follows:
\begin{enumerate}
\item{The list of components (in chemical kinetics) or populations (in
mathematical ecology) or states (for general Markov chains);}
 \item{The list of elementary processes (the reaction mechanism, the graph of
 trophic  interactions or the transition graph),
 which is often supplemented by the lines or surfaces
  of partial equilibria of elementary  processes;}
 \item{The reaction rates and kinetic constants.}
\end{enumerate}

We believe that the lower level information is more accurate and
reliable: we know the list of component better than the mechanism of
transitions, and our knowledge of equilibrium surfaces is better
than the information about exact values of kinetic constants.

It is attractive to use the more reliable lower level information
for qualitative and quantitative study of kinetics. Perhaps, the
first example of such a analysis was performed in biological
kinetics.

In 1936, A.N.~Kolmogorov \cite{Kolmogorov1936} studied the
dynamics of a pair of interacting populations of prey ($x$) and
predator ($y$) in general form:
 $$\dot{x}=xS(x,y), \;\; \dot{y}=yW(x,y)$$
under monotonicity conditions: $\partial S(x,y)/\partial y <0$,
$\partial W(x,y)/\partial y <0$. The zero isoclines, the lines at
which the rate of change for one population is zero (given by
equations $S(x,y)=0$ or $W(x,y)=0$), are graphs of two functions
$y(x)$. These isoclines divide the phase space into compartments
(generically with curvilinear borders). In every compartment the
angle of possible directions of motion is given (compare to
Figure~\ref{3stateCones}).

Analysis of motion in these angles gives information about
dynamics without an exact knowledge of the kinetic constants. The
geometry of the zero isoclines intersection together with some
monotonicity conditions give important information about the
system dynamics \cite{Kolmogorov1936} without exact knowledge of
the right hand sides of the kinetic equations.

This approach to population dynamics was further developed by many
authors and applied to various problems
\cite{MayLeonard1975,Bazykin1998}. The impact of this work on
population dynamics was analyzed by K. Sigmund in review
\cite{Sigmung2007}.

It seems very attractive to use an attainable region instead of
the single trajectory in situations with incomplete information or
with information with different levels of reliability. Such
situations are typical in many areas of engineering. In 1964,
F.~Horn proposed to analyze the attainable regions for chemical
reactors \cite{Horn1964}. This approach was applied both to linear
and nonlinear kinetic equations and became popular in chemical
engineering. It was applied to the optimization of steady flow
reactors \cite{Glasser1987}, to batch reactor optimization by use
of tendency models without knowledge of detailed kinetics
\cite{Filippi-Bossy1989} and for optimization of the reactor
structure \cite{Hildebrandt1990}. Analysis of attainable regions
is recognized as a special geometric approach to reactor
optimization \cite{Feinberg1997} and as a crucially important part
of the new paradigm of chemical engineering \cite{Hill2009}.
Plenty of particular applications was developed: from
polymerization \cite{SmithMalone1997} to particle breakage in a
ball mill \cite{Metzger2009}. Mathematical methods for study of
attainable regions vary from the Pontryagin's maximum principle
\cite{McGregor1999} to linear programming \cite{Kauchali2002}, the
Shrink-Wrap algorithm \cite{Manousiouthakis2004} and convex
analysis.

The connection between attainable regions, thermodynamics and
stoichiometric reaction mechanisms was studied by A.N. Gorban in the
1970s. In 1979, he demonstrated how to utilize the knowledge about
partial equilibria of elementary processes to construct the
attainable regions \cite{Gorban1979}.

He noticed that the set (a cone) of possible direction for kinetics
is defined by thermodynamics and the reaction mechanism (the system
of the stoichiometric equation of elementary reactions).

Thermodynamic data are more robust than the reaction mechanism and
the reaction rates are known with lower accuracy than the
stoichiometry of elementary reactions. Hence, there are two types of
attainable regions. The first is the thermodynamic one, which use
the linear restrictions and the thermodynamic functions
\cite{GorbanChMMS1979}. The second is generated by thermodynamics
and stoichiometric equations of elementary steps (but without
reaction rates) \cite{Gorban1979,GorbanBYa1980}.

It was demonstrated that the attainable regions significantly
depend on the transition mechanism (Figure~\ref{RKCLfig}) and it is
possible to use them for the mechanisms discrimination
\cite{GorbanYa1980}.

Already simple examples demonstrate that the sets of distributions
which are accessible from a given initial distribution by Markov
processes with equilibrium are, in general,  {\em non-convex}
polytopes \cite{Gorban1979,Zylka1985} (see, for example, the
outlined region in Figure~\ref{RKCLfig}, or, for particular graphs
of transitions, any of the shaded regions there). This non-convexity
makes the analysis of attainability for continuous time Markov
processes more difficult (and also more intriguing).

\begin{figure}[t]
\caption{\leftskip=1cm \rightskip=1cm \label{RKCLfig}Attainable regions from an initial
distribution $a_0$ for a linear system with three components
$A_1,A_2,A_3$ in coordinates $c_1,c_2$ (concentrations of
$A_1,A_2$) ($c_3={\rm const}-c_1-c_2$) \cite{Gorban1979}: for a
full mechanism $A_1\rightleftarrows A_2\rightleftarrows
A_3\rightleftarrows A_1$ (outlined region), for a two-step
mechanism $A_1\rightleftarrows A_2$, $A_1\rightleftarrows A_3$
(horizontally shaded region) and for a two-step mechanism
$A_1\rightleftarrows A_2$, $A_2\rightleftarrows A_3$ (vertically
shaded region). Equilibrium is $a^*$. The dashed lines are partial
equilibria. }}
\centering{
\includegraphics[width=90mm]{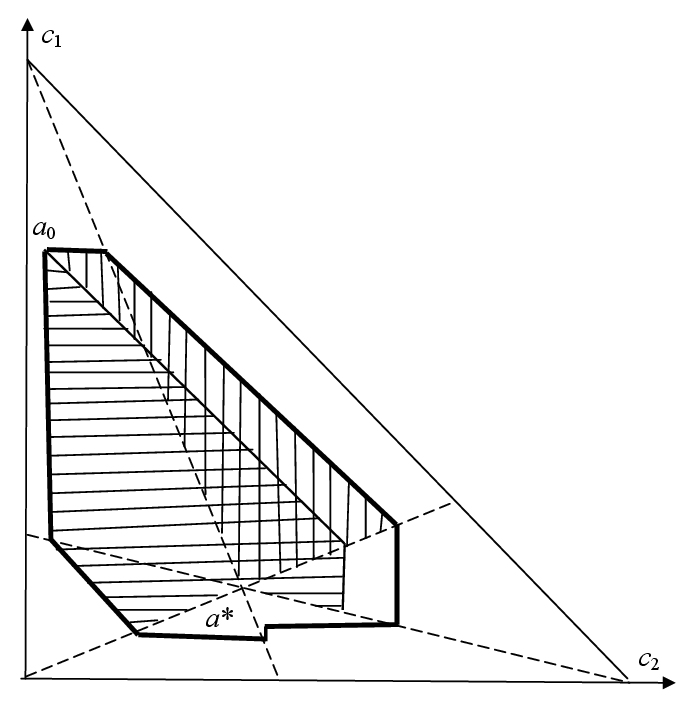}
\end{figure}

This approach was developed for all thermodynamic potentials and
for open systems as well \cite{G11984}. Partially, the results are
summarized in \cite{YBGE,GorbKagan2006}.

This approach was rediscovered by F.J. Krambeck \cite{Krambeck1984}
for linear systems, that is, for Markov chains, and by R. Shinnar
and other authors \cite{Shinnar1985} for more general nonlinear
kinetics. There was even an open discussion about priority
\cite{Bykov1987}. Now this geometric approach is applied to various
chemical and industrial processes.

\subsection{Discrete Time Kinetics}

In our paper we deal mostly with continuous time Markov chains. For
the discrete time Markov chains, the attainable regions  have two
important properties: they are convex and symmetric with respect to
permutations of states. Because of this symmetry and convexity, the
discrete time Markov order is characterized in detail. As far as we
can go in history, this work was begun in early 1970s by A. Uhlmann
and P.M. Alberti. The results of the first 10 years of this work
were summarized in monograph \cite{AlbertiUhlmann1982}. A more
recent bibliography (more than 100 references) is collected in
review \cite{AlbertiCUZ2008}.

This series of work was concentrated mostly on processes with
uniform equilibrium (doubly stochastic maps). The relative
majorization, which we also use in Section~\ref{MarOd}, and the
Markov order with respect to a non-uniform equilibrium was
introduced by P. Harremo{\"e}s in 2004 \cite{Harremo2004}. He used
formalism based on the Lorenz diagrams.

\section{Conclusion\label{Conclusion}}

Is playing with non-classical entropies and divergences just an
extension to the fitting possibilities (no sense---just fitting)? We
are sure now that this is not the case: two one-parametric families
of non-classical divergences are distinguished by the very natural
properties:
\begin{enumerate}
\item{They are Lyapunov functions for all Markov chains;}
\item{They become additive with respect to the joining of independent
systems after a monotone transformation of
scale;}
\item{They become additive with respect to a partitioning  of the
state space after a monotone transformation of scale.}
\end{enumerate}

Two families of smooth divergences (for positive distributions)
satisfy these requirements: the Cre\-ssie--Read family
\cite{CR1984,ReadCreass1988}
 $$H_{{\rm CR}\ \lambda}(P \| P^*)=\frac{1}{\lambda (\lambda+1)}
 \sum_i p_i\left[ \left(\frac{p_i}{p_i^*} \right)^{\lambda} -1
 \right]\ , \ \ \lambda\in ]-\infty,\infty[ $$
 and the convex combination of the Burg and Shannon relative
 entropies
 \cite{G11984,ENTR1}:
$$H(P \| P^*)= \sum_i (\beta p_i- (1-\beta)p_i^* )\log \left(
\frac{p_i}{p_i^*}\right)\ , \ \ \beta\in [0,1]$$

If we relax the differentiability property, then we have to add to
the the CR family two limiting cases: $$H_{\rm CR \ \infty}
(P\|P^*)= \max_i\left\{\frac{p_i}{p_i^*}\right\}-1 \ ; $$

$$H_{{\rm CR \ -\infty}} (P\|P^*)=
\max_i\left\{\frac{p_i^*}{p_i}\right\}-1 $$

Beyond these two distinguished one-parametric families there is the
whole world of the Csisz\'ar--Morimoto Lyapunov functionals for the
Master equation (\ref{Morimoto}). These functions monotonically
decrease along any solution of the Master equation. The set of all
these functions can be used to reduce the uncertainty by conditional
minimization: for each $h$ we could find a conditional minimizer of
$H_h(p)$.

Most users prefer to have an unambiguous choice of entropy: it would
be nice to have ``the best entropy" for any class of problems. But
from a certain point of view, ambiguity of the entropy choice is
unavoidable, and the choice of all conditional optimizers instead of
a particular one is a possible way to avoid an arbitrary choice. The
set of these minimizers evaluates the possible position of a
``maximally random" probability distribution. For many MaxEnt
problems the natural solution is not a fixed distribution, but a
well defined set of distributions.

The task to minimize functions $H_h(p)$ which depend on a functional
parameter $h$ seems too complicated. The Markov order gives us
another way for the evaluation of the set of possible ``maximally
random" probability distribution, and this evaluation is, in some
sense, the best one. We defined the Markov order, studied its
properties and demonstrated how it can be used to reduce
uncertainty.

It is quite surprising that the Markov order is generated by the
reversible Markov chains which satisfy the detailed balance
principle. We did not include any reversibility assumptions and
studied the general Markov chains. There remain some questions about
the structure and full description of the global Markov order.
Nevertheless, to find the set of conditionally extreme (``most
random") probability distributions, we need the local Markov order
only. This local order is fully described in Section~\ref{order} and
has a very clear geometric structure. For a given equilibrium
distribution $P^*$, the simplex of probability distributions is
divided by $n(n-1)/2$ hyperplanes of ``partial equilibria" (this
terminology comes from chemical kinetics
\cite{Gorban1979,GorbKagan2006}): $\frac{p_i}{p_i^*}=
\frac{p_j}{p_j^*}$ (there is one hyperplane for each pair of states
$(i,j)$). In each compartment a cone of all possible time
derivatives of the probability distribution is defined as a conic
envelope of $n(n-1)/2$ vectors (\ref{LocalOrderCone}). The extreme
rays of this cone are explicitly described in \linebreak Proposition
\ref{extremerays6} (\ref{ConeDescr}). This cone defines the local
Markov order. When we look for conditionally extreme distributions,
this cone plays the same role as a hyperplane given by entropy
growth condition \linebreak ( $\D S / \D t > 0$) in the standard
approach.

For the problem of the generalized canonical (or reference)
distribution the Markov order gives a polyhedron of the extremely
disordered distributions. The vertices of that polyhedron can be
computed explicitly.

The construction of efficient algorithms for numerical calculation
of conditionally extreme compacts in high dimensions is a
challenging task for our future work as well as the application of
this methodology to real life problems.

\section*{Acknowledgements}

Suggestions from Mike George, Marian Grendar, Ivan Tyukin and
anonymous referees are gratefully acknowledged.

\bibliographystyle{mdpi}
\makeatletter
\renewcommand\@biblabel[1]{#1. }
\makeatother

\section*{Appendix \label{App}}

\noindent{\it Proof of Theorem \ref{Theorem1NoMore}}. The problem
is to find all such universal and trace--form Lyapunov functions
$H$ for Markov chains, that there exists a monotonous function
$F$, such that $F(H({P}))=F(H({Q}))+F(H({R}))$ if ${
P}=p_{ij}=q_{i}r_{j}$.

With Lemma 1 we get that
$$H({P})=\sum_{i,j}q_{i}^{*}r_{j}^{*}h\left(\frac{q_{i}r_{j}}{q_{i}^{*}r_{j}^{*}}\right),
\;\;
H({Q})=\sum_{i}q_{i}^{*}h\left(\frac{q_{i}}{q_{i}^{*}}\right),
\;\; H({R})
=\sum_{j}r_{j}^{*}h\left(\frac{r_{j}}{r_{j}^{*}}\right)$$
 Let $F(x)$ and $h(x)$ be differentiable as many times as needed.
Differentiating the equality $F(H({P}))=F(H({Q}))+F(H({R}))$ on
$r_{1}$ and $q_{1}$ taking into account that
$q_{n}=1-\sum_{i=1}^{n-1}q_{i}$ and
$r_{m}=1-\sum_{j=1}^{m-1}r_{j}$ we get that
$F'(H({P}))H''_{q_{1}r_{1}}({P})=-F''(H({P}))H'_{q_{1}}({P})H'_{r_{1}}({P})$,
or, if $-\frac{F'(H({P}))}{F''(H({P}))}=G(H({P}))$ then
\begin{equation}
G(H({P}))=\frac{H'_{q_{1}}({P})H'_{r_{1}}({P})}{H''_{q_{1}r_{1}}({P})}
\end{equation}
It is possible if and only if every linear differential operator
of the first order, which annulates $H({P})$ and $\sum p_{i}$,
annulates also
\begin{equation}\label{Geqn}
\frac{H'_{q_{1}}({P})H'_{r_{1}}({P})}{H''_{q_{1}r_{1}}({P})}
\end{equation}
and it means that every differential operator which has the form
\begin{equation}\label{Dfrm}
D=\left(\frac{\partial H({P})}{\partial q_{\gamma}}-\frac{\partial
H({P})}{\partial q_{\alpha}}\right)\frac{\partial}{\partial
q_{\beta}}+\left(\frac{\partial H({P})}{\partial
q_{\beta}}-\frac{\partial H({P})}{\partial
q_{\gamma}}\right)\frac{\partial}{\partial
q_{\alpha}}+\left(\frac{\partial H({P})}{\partial
q_{\alpha}}-\frac{\partial H({P})}{\partial
q_{\beta}}\right)\frac{\partial}{\partial q_{\gamma}}
\end{equation}
annulates (\ref{Geqn}). For $\beta=2, \alpha=3, \gamma=4$  we get
the following equation
\begin{eqnarray}\label{Diffur}
\nonumber &
F_{1}({Q},{R})\left[h'\left(\frac{q_{2}r_{1}}{q_{2}^{*}r_{1}^{*}}\right)-h'\left(\frac{q_{2}r_{m}}{q_{2}^{*}r_{m}^{*}}\right)+
\frac{q_{2}r_{1}}{q_{2}^{*}r_{1}^{*}}h''\left(\frac{q_{2}r_{1}}{q_{2}^{*}r_{1}^{*}}\right)-
\frac{q_{2}r_{m}}{q_{2}^{*}r_{m}^{*}}h''\left(\frac{q_{2}r_{m}}{q_{2}^{*}r_{m}^{*}}\right)\right]+\\
\nonumber &
F_{2}({Q},{R})\left[h'\left(\frac{q_{3}r_{1}}{q_{3}^{*}r_{1}^{*}}\right)-h'\left(\frac{q_{3}r_{m}}{q_{3}^{*}r_{m}^{*}}\right)+
\frac{q_{3}r_{1}}{q_{3}^{*}r_{1}^{*}}h''\left(\frac{q_{3}r_{1}}{q_{3}^{*}r_{1}^{*}}\right)-
\frac{q_{3}r_{m}}{q_{3}^{*}r_{m}^{*}}h''\left(\frac{q_{3}r_{m}}{q_{3}^{*}r_{m}^{*}}\right)\right]+\\
\nonumber &
F_{3}({Q},{R})\left[h'\left(\frac{q_{4}r_{1}}{q_{4}^{*}r_{1}^{*}}\right)-h'\left(\frac{q_{4}r_{m}}{q_{4}^{*}r_{m}^{*}}\right)+
\frac{q_{4}r_{1}}{q_{4}^{*}r_{1}^{*}}h''\left(\frac{q_{4}r_{1}}{q_{4}^{*}r_{1}^{*}}\right)-
\frac{q_{4}r_{m}}{q_{4}^{*}r_{m}^{*}}h''\left(\frac{q_{4}r_{m}}{q_{4}^{*}r_{m}^{*}}\right)\right]=0\\
\end{eqnarray}

where

\begin{eqnarray}
\nonumber
F_{1}({Q},{R})=\sum_{j}r_{j}\left[h'\left(\frac{q_{4}r_{j}}{q_{4}^{*}r_{j}^{*}}\right)-h'\left(\frac{q_{3}r_{j}}{q_{3}^{*}r_{j}^{*}}\right)\right]\,
;\\ \nonumber
F_{2}({Q},{R})=\sum_{j}r_{j}\left[h'\left(\frac{q_{2}r_{j}}{q_{2}^{*}r_{j}^{*}}\right)-h'\left(\frac{q_{4}r_{j}}{q_{4}^{*}r_{j}^{*}}\right)\right]\,
;\\ \nonumber
F_{3}({Q},{R})=\sum_{j}r_{j}\left[h'\left(\frac{q_{3}r_{j}}{q_{3}^{*}r_{j}^{*}}\right)-h'\left(\frac{q_{2}r_{j}}{q_{2}^{*}r_{j}^{*}}\right)\right]\\
\nonumber
\end{eqnarray}

If we apply the differential operator $\frac{\partial}{\partial
r_{2}}-\frac{\partial}{\partial r_{3}}$, which annulates the
conservation law $\sum_{j}r_{j}=1$, to the left part of
(\ref{Diffur}), and denote $f(x)=xh''(x)+h'(x)$,
$x_{1}=\frac{q_{2}}{q_{2}^{*}}$, $x_{2}=\frac{q_{3}}{q_{3}^{*}}$,
$x_{3}=\frac{q_{4}}{q_{4}^{*}}$, $y_{1}=\frac{r_{1}}{r_{1}^{*}}$,
$y_{2}=\frac{r_{m}}{r_{m}^{*}}$, $y_{3}=\frac{r_{2}}{r_{2}^{*}}$,
$y_{4}=\frac{r_{3}}{r_{3}^{*}}$, we get the equation
\begin{eqnarray}\label{FirstFeq}
\nonumber
(f(x_{3}y_{3})-f(x_{2}y_{3})-f(x_{3}y_{4})+f(x_{2}y_{4}))(f(x_{1}y_{1})-f(x_{1}y_{2}))+\\
(f(x_{1}y_{3})-f(x_{3}y_{3})-f(x_{1}y_{4})+f(x_{3}y_{4}))(f(x_{2}y_{1})-f(x_{2}y_{2}))+\\
\nonumber
(f(x_{2}y_{3})-f(x_{1}y_{3})-f(x_{2}y_{4})+f(x_{1}y_{4}))(f(x_{3}y_{1})-f(x_{3}y_{2}))=0
\end{eqnarray}
or, after differentiation on $y_{1}$ and $y_{3}$ and denotation
$g(x)=f'(x)$
\begin{eqnarray}\label{SecFeq}
&
&x_{1}g(x_{1}y_{1})(x_{3}g(x_{3}y_{3})-x_{2}g(x_{2}y_{3}))+x_{2}g(x_{2}y_{1})(x_{1}g(x_{1}y_{3})-\\
\nonumber & &-x_{3}g(x_{3}y_{3}))+
x_{3}g(x_{3}y_{1})(x_{2}g(x_{2}y_{3})-x_{1}g(x_{1}y_{3}))=0
\end{eqnarray}
If $y_{3}=1$, $y_{1}\neq0$, $\varphi(x)=xg(x)$, we get after
multiplication (\ref{SecFeq}) on $y_{1}$
\begin{equation}\label{TrdFeq}
\varphi(x_{1}y_{1})(\varphi(x_{3})-\varphi(x_{2}))+\varphi(x_{2}y_{1})
(\varphi(x_{1})-\varphi(x_{3}))+\varphi(x_{3}y_{1})(\varphi(x_{2})-\varphi(x_{1}))=0
\end{equation}
It implies that for every three positive numbers $\alpha$,
$\beta$, $\gamma$ the functions $\varphi(\alpha x)$,
$\varphi(\beta x)$, $\varphi(\gamma x)$ are linearly dependent,
and for $\varphi(x)$ the differential equation
\begin{equation}
ax^{2}\varphi''(x)+bx\varphi'(x)+c\varphi(x)=0
\end{equation}
holds. This differential equation has solutions of two kinds:
\begin{enumerate}
\item{$\varphi(x)=C_{1}x^{k_{1}}+C_{2}x^{k_{2}}$, $k_{1}\neq k_{2}$,
$k_{1}$ and $k_{2}$ are real or complex-conjugate numbers.}
\item{$\varphi(x)=C_{1}x^{k}+C_{2}x^{k}\ln x$.}
\end{enumerate}

Let us check, which of these solutions satisfy the functional
equation (\ref{TrdFeq}).
\begin{enumerate}
\item{$\varphi(x)=C_{1}x^{k_{1}}+C_{2}x^{k_{2}}$. After substitution
of this into (\ref{TrdFeq}) and calculations we get
$$C_{1}C_{2}(y_{1}^{k_{1}}-y_{1}^{k_{2}})(x_{1}^{k_{1}}x_{3}^{k_{2}}-
x_{1}^{k_{1}}x_{2}^{k_{2}}+x_{1}^{k_{2}}x_{2}^{k_{1}}-
x_{2}^{k_{1}}x_{3}^{k_{2}}+x_{2}^{k_{2}}x_{3}^{k_{1}}-
x_{1}^{k_{2}}x_{3}^{k_{1}})=0 $$
 This means that $C_{1}=0$, or $C_{2}=0$, or $k_{1}=0$, or
$k_{2}=0$ and the solution of this kind can have only the form
$\varphi(x)=C_{1}x^{k}+C_{2}$.}
 \item{$\varphi(x)=C_{1}x^{k}+C_{2}x^{k}\ln x$. After substitution of
this into (\ref{TrdFeq}) and some calculations if $y_{1}\neq0$ we
get $$C_{2}^{2}((x_{1}^{k}-x_{2}^{k})x_{3}^{k}\ln
x_{3}+(x_{3}^{k}-x_{1}^{k})x_{2}^{k}\ln
x_{2}+(x_{2}^{k}-x_{3}^{k})x_{1}^{k}\ln x_{1})=0 $$ This means
that either $C_{2}=0$ and the solution is $\varphi(x)=C_{1}x^{k}$
or $k=0$ and the solution is $\varphi(x)=C_{1}+C_{2}\ln x$.}
\end{enumerate}
So, the equation (\ref{TrdFeq}) has two kinds of solutions:
\begin{enumerate}
\item{$\varphi(x)=C_{1}x^{k}+C_{2}$,}
\item{$\varphi(x)=C_{1}+C_{2}\ln x$}
\end{enumerate}

Let us solve the equation $f(x)=xh''(x)+h'(x)$ for each of these
two cases.
\begin{enumerate}
\item{$\varphi(x)=C_{1}x^{k}+C_{2}$,
$g(x)=C_{1}x^{k-1}+\frac{C_{2}}{x}$, there are two possibilities:}
\begin{enumerate}
\item[1.1)]{$k=0$. Then $g(x)=\frac{C}{x}$, $f(x)=C\ln x+C_{1}$,
$h(x)=C_{1}x\ln x+C_{2}\ln x+C_{3}x+C_{4}$;}
 \item[1.2)]{$k\neq0$. Then
$f(x)=Cx^{k}+C_{1}\ln x+C_{2}$, and here are also two
possibilities:}
 \begin{enumerate}
 \item[1.2.1)]{$k=-1$. Then $h(x)=C_{1}\ln^{2}x+C_{2}x\ln x+C_{3}\ln
x+C_{4}x+C_{5}$;}
 \item[1.2.2)]{$k\neq-1$. Then $h(x)=C_{1}x^{k+1}+C_{2}x\ln x+C_{3}\ln
x+C_{4}x+C_{5}$;}
 \end{enumerate}
   \end{enumerate}
\item{$\varphi(x)=C_{1}+C_{2}\ln x$; $g(x)=C_{1}\frac{\ln
x}{x}+\frac{C_{2}}{x}$; $f(x)=C_{1}\ln^{2}x+C_{2}\ln x+C_{3}$;
$h(x)=C_{1}x\ln^{2}x+C_{2}x\ln x+C_{3}\ln x+C_{4}x+C_{5}$.}
\end{enumerate}
(We have renamed constants during the calculations).

For the next step let us check, which of these solutions remains a
solution to equation (\ref{Diffur}). The result is that there are
just two families of functions $h(x)$ such, that equation
(\ref{Diffur}) holds:
\begin{enumerate}
\item{$h(x)=Cx^{k}+C_{1}x+C_{2}$, $k\neq0$, $k\neq1$,}
\item{$h(x)=C_{1}x\ln x+C_{2}\ln x+C_{3}x+C_{4}$.}
\end{enumerate}
The function $h(x)$ should be convex. This condition determines
the signs of coefficients $C_{i}$.

The corresponding divergence $H(P\| P^*)$ is either one of the CR
entropies or a convex combination of Shannon's and Burg's
entropies up to a monotonic transformation. ${ \mathbf{\square}}$

\vspace{2mm} \noindent {\em Characterization of Additive
Trace--form Lyapunov Functions for Markov Chains.} We will
consider three important properties of Lyapunov functions $H(P
\|P^*)$:
\begin{enumerate}
\item{{\it Universality}: $H$ is a Lyapunov function for Markov
chains (\ref{MAsterEq1}) with a given equilibrium $P^{*}$ for
every possible values of kinetic coefficients $k_{ij}\geq0$.}
\item{$H$ is a {\it trace--form function}.
\begin{equation}\label{trfrm}
H(P \|P^*)=\sum_{i}f(p_{i},p_{i}^{*})
\end{equation}
where $f$ is a differentiable function of two variables.}
 \item{$H$ is
{\it additive} for composition of independent subsystems. It means
that if ${P}=p_{ij}=q_{i}r_{j}$ and $
P^{*}=p_{ij}^{*}=q_{i}^{*}r_{j}^{*}$ then
$H(P\|P^*)=H(Q\|Q^*)+H(R\|R^*)$.}
\end{enumerate}
Here and further we suppose
$0<p_{i},p_{i}^{*},q_{i},q_{i}^{*},r_{i},r_{i}^{*}<1$.

We consider the additivity condition as a functional equation and
solve it. The following theorem describes all Lyapunov functions
for Markov chains, which have all three properties 1) - 3)
simultaneously.

Let $f(p,p^{*})$ be a twice differentiable function of two
variables.

\begin{theorem}\label{thAddTrace}If a function $H(P \|P^*)$ has all the
properties 1)-3) simultaneously, then
\begin{equation}\label{th1eq1}
f(p,p^{*})=p_{i}^{*}h\left(\frac{p}{p^{*}}\right),\; \; H(P
\|P^*)= \sum_{i}p_{i}^{*}h\left(\frac{p_{i}}{p_{i}^{*}}\right)
\end{equation}
where
\begin{equation}\label{th1eq2}
h(x)=C_{1}\ln x+C_{2}x\ln x,\mbox{ }C_{1}\leq0,\mbox{ }C_{2}\geq0
\end{equation}
\end{theorem}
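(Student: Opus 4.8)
The plan is to turn the three hypotheses into a single functional equation for $h$ and then solve that equation. First I would apply Lemma~\ref{MorimotoChar}: properties 1) (universality) and 2) (trace--form) are precisely its hypotheses, so $f(p,p^{*})=p^{*}h(p/p^{*})+\mathrm{const}(p^{*})$ with $h$ convex, and twice differentiable by assumption. The additive term $\mathrm{const}(p^{*})$ contributes only a $P$-independent constant to $H$ and is removed by the normalization (\ref{normalization}), so I discard it and work with $H(P\|P^{*})=\sum_i p_i^{*}h(p_i/p_i^{*})$. Property 3) then reads, with $x_i=q_i/q_i^{*}$, $y_j=r_j/r_j^{*}$, as $\sum_{i,j}q_i^{*}r_j^{*}h(x_iy_j)=\sum_i q_i^{*}h(x_i)+\sum_j r_j^{*}h(y_j)$ on the product of simplices.

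The decisive simplification relative to Theorem~\ref{Theorem1NoMore} is that here additivity is demanded of $H$ itself, with no intervening monotone transformation. Writing $H(P)=H(Q)+H(R)$ in the free coordinates $q_1,\dots,q_{n-1}$ (with $q_n=1-\sum_{i<n}q_i$) and $r_1,\dots,r_{m-1}$ (with $r_m=1-\sum_{j<m}r_j$), the right-hand side splits into a function of the $q$'s plus a function of the $r$'s, so the mixed derivative must vanish: $\partial^2 H(P)/\partial q_1\partial r_1=0$. I would compute this directly. Setting $\eta(x):=xh''(x)+h'(x)=(xh'(x))'$, the $h'$ and $xh''$ contributions recombine and the identity collapses to $\eta(as)-\eta(bs)-\eta(at)+\eta(bt)=0$, where $a=q_1/q_1^{*}$, $b=q_n/q_n^{*}$, $s=r_1/r_1^{*}$, $t=r_m/r_m^{*}$.

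Taking $n,m\geq 3$ lets $a,b$ and $s,t$ vary over open intervals independently, so this identity says $\eta(as)-\eta(at)$ is independent of $a$; differentiating in $a$ and setting $a=1$ gives $s\,\eta'(s)=t\,\eta'(t)$ for all admissible $s,t$, i.e. $x\eta'(x)$ is constant, whence $\eta(x)=C\ln x+D$. Integrating $\eta=(xh'(x))'$ once gives $xh'(x)=C\,x\ln x+(D-C)x+E$, and a second integration yields $h(x)=C\,x\ln x+E\ln x+(\alpha x+\beta)$. The affine tail $\alpha x+\beta$ is stripped by (\ref{normalization}) (it only shifts $H$ by a constant), leaving $h(x)=C_2 x\ln x+C_1\ln x$ as in (\ref{th1eq2}). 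Finally, the convexity of $h$ supplied by Lemma~\ref{MorimotoChar}---equivalently $h''(x)=(C_2 x-C_1)/x^{2}\geq0$ for every $x>0$---forces $C_1\leq0$ and $C_2\geq0$, which is the claimed sign restriction.

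The routine but error-prone step is the explicit differentiation producing the clean equation for $\eta$; I expect the main obstacle to be the careful bookkeeping of the constraint derivatives ($\partial q_n/\partial q_1=-1$, $\partial r_m/\partial r_1=-1$) and the verification that the admissible ranges of $a,b,s,t$ are genuinely open and independent, so that ``constant in $a$'' upgrades to ``$x\eta'(x)$ constant on $(0,\infty)$''. Sufficiency (that these $h$ are indeed additive) needs no new work: it is the direct computation already indicated in Section~\ref{additivity}, using $\sum_i q_i^{*}x_i=\sum_j r_j^{*}y_j=1$.
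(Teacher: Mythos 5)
Your proposal is correct and follows essentially the same route as the paper's proof: reduce to $H=\sum_i p_i^* h(p_i/p_i^*)$ via Lemma~\ref{MorimotoChar}, annihilate the mixed derivative $\partial^2/\partial q_1\partial r_1$ of the additivity identity, and arrive at the four-point functional equation for $\psi(x)=(xh'(x))'$, whose solution $C\ln x + D$ integrates twice to the stated $h$, with convexity fixing the signs. The only real divergence is in how that functional equation is solved --- the paper substitutes $w=yz/x$, sets $x=1$ and invokes the Cauchy equation for measurable functions, whereas you differentiate in $a$ (which quietly uses one more derivative of $h$ than the stated twice-differentiability, and needs the ranges of $s,t$ to exhaust $]0,\infty[$, obtained by varying $P^*$) --- and your normalization-based disposal of the affine tail is no less rigorous than the paper's ``parasite constants'' remark (strictly, additivity only forces the affine part to be a multiple of $x-1$, which contributes nothing to $H$).
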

\begin{proof}We follow here the P. Gorban proof \cite{ENTR3}.
Another proof of this theorem was proposed in \cite{Harr2007}. Due
to Lemma~\ref{MorimotoChar} let us take $H(P \|P^*)$ in the form
(\ref{th1eq1}). Let $h$  be twice differentiable in the interval
$]0,+\infty[$. The additivity equation
\begin{equation}\label{addeq}
H(P\|P^*)-H(Q\|Q^*)-H(R\|R^*)=0
\end{equation}
holds. Here (in (\ref{addeq}))
\begin{eqnarray*}
&&q_{n}=1-\sum_{i=1}^{n-1}q_{i}, \, r_{m}=1-\sum_{j=1}^{m-1}r_{j},
\,  P=p_{ij}=q_{i}r_{j} \\
&&H(P\|P^*)=\sum_{i,j}q_{i}^{*}r_{j}^{*}h\left(\frac{q_{i}r_{j}}{q_{i}^{*}r_{j}^{*}}\right),
\,
H(Q\|Q^*)=\sum_{i}q_{i}^{*}h\left(\frac{q_{i}}{q_{i}^{*}}\right),\,
H(R\|R^*)=\sum_{j}r_{j}^{*}h\left(\frac{r_{j}}{r_{j}^{*}}\right)
\end{eqnarray*}
Let us take the derivatives of this equation first on $q_{1}$ and
then on $r_{1}$. Then we get the equation ($g(x)=h'(x)$)
\begin{eqnarray*}
&g(\frac{q_{1}r_{1}}{q_{1}^{*}r_{1}^{*}})-g(\frac{q_{n}r_{1}}{q_{n}^{*}r_{1}^{*}})-
g(\frac{q_{1}r_{m}}{q_{1}^{*}r_{m}^{*}})+g(\frac{q_{n}r_{m}}{q_{n}^{*}r_{m}^{*}})+\\
&+\frac{q_{1}r_{1}}{q_{1}^{*}r_{1}^{*}}g'(\frac{q_{1}r_{1}}{q_{1}^{*}r_{1}^{*}})-
\frac{q_{n}r_{1}}{q_{n}^{*}r_{1}^{*}}g'(\frac{q_{n}r_{1}}{q_{n}^{*}r_{1}^{*}})-
\frac{q_{1}r_{m}}{q_{1}^{*}r_{m}^{*}}g'(\frac{q_{1}r_{m}}{q_{1}^{*}r_{m}^{*}})+
\frac{q_{n}r_{m}}{q_{n}^{*}r_{m}^{*}}g'(\frac{q_{n}r_{m}}{q_{n}^{*}r_{m}^{*}})=0
\end{eqnarray*}
Let us denote $x=\frac{q_{1}r_{1}}{q_{1}^{*}r_{1}^{*}}$,
$y=\frac{q_{n}r_{1}}{q_{n}^{*}r_{1}^{*}}$,
$z=\frac{q_{1}r_{m}}{q_{1}^{*}r_{m}^{*}}$, and
$\psi(x)=g(x)+xg'(x)$. It is obvious that if $n$ and $m$ are more
than 2, then $x$, $y$ and $z$ are independent and can take any
positive values. So, we get the functional equation:
\begin{equation}
\psi\left(\frac{yz}{x}\right)=\psi(y)+\psi(z)-\psi(x)
\end{equation}
Let's denote $C_{2}=-\psi(1)$ and
$\psi_{1}(\alpha)=\psi(\alpha)-\psi(1)$ and take $x=1$. We get
then
\begin{equation}
\psi_{1}(yz)=\psi_{1}(y)+\psi_{1}(z)
\end{equation}
the Cauchy functional equation \cite{Aczel1966}. The solution of
this equation in the class of measurable functions is
$\psi_{1}(\alpha)=C_{1}\ln\alpha$, where $C_{1}$ is constant. So
we get $\psi(x)=C_{1}\ln x+C_{2}$ and $g(x)+xg'(x)=C_{1}\ln
x+C_{2}$. The solution is $g(x)=\frac{C_{3}}{x}+C_{1}\ln
x+C_{2}-C_{1}$; $h(x)=\int(\frac{C_{3}}{x}+C_{1}\ln
x+C_{2}-C_{1})dx=C_{3}\ln x+C_{1}x\ln x+(C_{2}-2C_{1})x+C_{4}$,
or, renaming constants, $h(x)=C_{1}\ln x+C_{2}x\ln
x+C_{3}x+C_{4}$. In the expression for $h(x)$ there are two
parasite constants $C_{3}$ and $C_{4}$ which occurs because the
initial equation was differentiated twice. So, $C_{3}=0$,
$C_{4}=0$ and $h(x)=C_{1}\ln x+C_{2}x\ln x$. Because $h$ is
convex, we have $C_{1}\leq0$ and $C_{2}\geq0$.
\end{proof}

So, any universal additive trace--form Lyapunov function for
Markov chains is a convex combination of the BGS entropy and the
Burg entropy.

\end{document}